\documentclass[journal]{IEEEtran}

\usepackage{amsthm}
\usepackage{amsmath,amssymb,amsfonts}
\usepackage{bm}
\usepackage{mathrsfs}
\usepackage{graphicx}
\usepackage{algorithm}
\usepackage{algpseudocode}
\usepackage[caption=false,font=footnotesize]{subfig}
\usepackage{microtype}
\usepackage{booktabs}
\usepackage{cite}
\usepackage{url}
\usepackage[hidelinks]{hyperref}
\usepackage{orcidlink}

\newtheorem{lemma}{Lemma}
\newtheorem{proposition}{Proposition}
\newtheorem{theorem}{Theorem}
\newtheorem{remark}{Remark}
\newtheorem{assumption}{Assumption}
\newtheorem{corollary}{Corollary}
\newtheorem{definition}{Definition}
\newtheorem{problem}{Problem}

\title{Anisotropic Template Ans\"{a}tze for Robust Positive\\
Invariance under State-Dependent Uncertainty}

\author{Abdelrahman Ramadan,~\IEEEmembership{Graduate Student Member,~IEEE,}
        Melissa Greeff,~\IEEEmembership{Member,~IEEE,}\\
        and Sidney Givigi,~\IEEEmembership{Senior Member,~IEEE}
\thanks{A. Ramadan and M. Greeff are with Electrical and Computer Engineering, Smith Engineering, and with Ingenuity Labs Research Institute, Queen's University, Kingston, ON K7L 3N6, Canada (e-mail: 20amr3@queensu.ca; melissa.greeff@queensu.ca).}
\thanks{S. Givigi is with the School of Computing and with Ingenuity Labs Research Institute, Queen's University, Kingston, ON K7L 3N6, Canada (e-mail: sidney.givigi@queensu.ca).}
\thanks{This is an extended version of an article accepted for publication in the IEEE Control Systems Letters, available at \url{https://ieeexplore.ieee.org/document/11557324}. It adds full proofs in the appendix, a complexity analysis, and additional simulation studies. An interactive companion is at \url{https://thelastpixie.github.io/ansatz-rpi-v2/}, with the quadrotor simulator at \url{https://thelastpixie.github.io/ansatz-rpi-v2/sim.html}.}}

\begin{document}

\maketitle

\begin{abstract}
We establish sufficient conditions for robust positive invariance under state- and input-dependent disturbances with anisotropic covariance structure. The proposed ansatz maps a fixed ellipsoidal template through a GP-derived positive-definite matrix field, subsuming scalar homothetic scaling while retaining finite graph-based verification. The resulting LMI conditions couple the learned field to Schur-stable dynamics; an isotropic fallback with inflation factor $r=1/(1-\gamma_{\mathrm{cl}})$ proves admissibility. During each learning epoch the field is frozen, so online tube evaluation is one GP covariance query and a small matrix square root, with no online set iteration or LMI solve. Quadrotor simulations show a $195\times$ reduction in 3D velocity-tube volume and a $2.1{\times}10^5$ reduction in the joint 7D velocity-control subspace relative to a non-adaptive homothetic baseline. This extended version adds full proofs, a separated offline/online complexity analysis, and controller-sweep, contraction, and projection-area studies.
\end{abstract}

\begin{IEEEkeywords}
Data-driven control, set invariance, Gaussian processes.
\end{IEEEkeywords}

\section{Introduction}\label{sec:intro}
\IEEEPARstart{R}{obust} linear Model Predictive Control (MPC) requires disturbance-invariant sets for constraint satisfaction under uncertainty. The standard formulation assumes a fixed, state-independent disturbance set $\mathbb{W}\subset\mathbb{R}^n$ and computes a Robust Positively Invariant (RPI) set $\boldsymbol{\Omega}$ satisfying $\mathbf{A}_{\mathrm{cl}}\boldsymbol{\Omega} \oplus \mathbb{W} \subseteq \boldsymbol{\Omega}$~\cite{kolmanovsky1998theory,blanchini2008set}. This is computationally tractable but conservative: a single $\mathbb{W}$ must bound worst-case disturbances uniformly over the state-input domain, disregarding spatial variation in uncertainty magnitude and orientation.

We observe that methods addressing this conservatism share a common structure: rather than computing a new set at every operating point, each starts from a fixed reference shape and \emph{transforms} it (scaling, stretching, or rotating) to match local uncertainty. We formalize this as a \emph{template ans\"{a}tze}, a structured trial form wherein $(\mathbf{x},\mathbf{u})$-dependence is encoded in transformation parameters rather than set geometry. Existing instantiations differ along three axes: \textbf{(i)} transformation class (scalar, diagonal, full matrix), \textbf{(ii)} dependence structure (prediction stage $k$ versus operating point $(\mathbf{x},\mathbf{u})$), and \textbf{(iii)} parameter source (analytical or data-driven). Homothetic tube MPC~\cite{rakovicHomotheticTubeModel2012, 10327735} uses $\alpha_k \bar{\boldsymbol{\Omega}}$ with scalar $\alpha_k$ optimized over the prediction horizon. The work~\cite{kohlerComputationallyEfficientRobust2021} extends to $(\mathbf{x},\mathbf{u})$-dependent scalar bounds via Lipschitz analysis, while~\cite{gaoLearningbasedHomotheticTube2025} learns stage-indexed $\alpha_{i|k}$ through the scenario approach; these scalar methods capture magnitude variation but not directional structure. Elastic tubes~\cite{rakovicElasticTubeModel2016} generalize to per-facet scaling $\boldsymbol{\gamma}_k\in\mathbb{R}^p$, and configuration-constrained tubes~\cite{villanuevaConfigurationConstrainedTubeMPC2024} optimize polytope vertices online. These permit richer geometric variation but remain stage-indexed, so adaptation tracks the nominal trajectory rather than the current operating point. GP-based methods~\cite{koller2018learning,hewing2020cautious} construct confidence regions from posteriors. The formulation in~\cite{solopertoLearningBasedRobustModel2018} produces hyperboxes $\mathbb{W}(\mathbf{x}) = \{\mathbf{w} \in \mathbb{R}^n : |w_j| \leq \beta\sigma_{n,j}(\mathbf{x}),\; j=1,\ldots,n\}$, equivalent to $\mathbf{P} = \mathrm{diag}(\sigma_1,\ldots,\sigma_n)$, but cannot represent inter-component correlations. The missing case is a learned full matrix $\mathbf{P}(\mathbf{x},\mathbf{u})\in\mathbb{S}^n_{++}$ that varies with the current operating point and admits finite invariance verification. Table~\ref{tab:comparison} summarizes the comparison.

\begin{table}[!t]
\centering
\caption{Template Ans\"{a}tze in the Literature and the Proposed Generalization}
\label{tab:comparison}
\footnotesize
\begin{tabular}{lccc}
\toprule
\textbf{Method} & \textbf{Transformation} & \textbf{Depends on} & \textbf{Learned} \\
\midrule
Homothetic~\cite{rakovicHomotheticTubeModel2012, 10327735} & Scalar $\alpha$ & Stage $k$ & No \\
Elastic~\cite{rakovicElasticTubeModel2016} & Per-facet $\boldsymbol{\gamma}$ & Stage $k$ & No \\
CCTMPC~\cite{villanuevaConfigurationConstrainedTubeMPC2024} & Full shape & Stage $k$ & No \\
K\"ohler et al.~\cite{kohlerComputationallyEfficientRobust2021,sasfiRobustAdaptiveMPC2023} & Scalar & $(\mathbf{x},\mathbf{u})$ & No, Yes \\
GP-MPC~\cite{solopertoLearningBasedRobustModel2018} & Diagonal & $\mathbf{x}$ only & Yes \\
Scenario~\cite{gaoLearningbasedHomotheticTube2025} & Scalar & Stage $k$ & Yes \\
\textbf{Proposed} & \textbf{Full matrix $\mathbf{P}$} & $(\mathbf{x},\mathbf{u})$ & \textbf{Yes} \\
\bottomrule
\end{tabular}
\end{table}

State- and input-dependent disturbances induce a circular dependence: verifying that states remain within a candidate invariant set requires knowledge of reachable states, which depends on the invariant set. Our prior work~\cite{ramadanLearningBasedShrinkingDisturbanceInvariant2025a} resolves this via lifting to an augmented space $\mathbb{R}^{2n+m}$ where the disturbance law becomes a graph constraint, followed by RPI set computation as fixed points of a set-valued operator. This lifted method is exact but computationally expensive, requiring seconds (GPU) to minutes per polytope iteration, precluding real-time use. The present work takes a different approach: the base template pair $(\bar{\mathbf{W}}, \bar{\boldsymbol{\Omega}})$ is computed once analytically at initialization; thereafter, invariance is enforced by constraining the parameter field $\mathbf{P}(\cdot)$ rather than iterating on sets. The same template ans\"{a}tze viewpoint also suggests a hierarchy of increasing expressiveness: scalar (homothetic), diagonal, symmetric positive-definite (present work), general linear, and polynomial.

\paragraph{Contributions} We introduce the \emph{anisotropic template ans\"{a}tze}
\begin{equation}
\mathbb{W}(\mathbf{x},\mathbf{u}) = \mathbf{P}(\mathbf{x},\mathbf{u})\,\bar{\mathbf{W}},
\label{eq:aniso_ansatz}
\end{equation}
where $\bar{\mathbf{W}} \subset \mathbb{R}^n$ is a fixed template and $\mathbf{P}(\mathbf{x},\mathbf{u})\in\mathbb{S}^n_{++}$ encodes local anisotropy learned from GP posteriors via $\mathbf{P} = c_{n,\alpha}\hat{\boldsymbol{\Sigma}}_w^{1/2}$. The contributions are:
\textbf{(C1)}~\textbf{Template ans\"{a}tze formalism:} unification of scalar, diagonal, and full-matrix parameterizations under a single framework indexed by transformation class.
\textbf{(C2)}~\textbf{Parameter-space invariance conditions:} sufficient conditions for RPI as LMIs coupling $\mathbf{P}(\cdot)$ to closed-loop dynamics, bypassing set-valued iteration.
\textbf{(C3)}~\textbf{Graph-based finite verification:} discretization of the operating space into a finite graph with arc-wise conditions that conservatively imply invariance.
\textbf{(C4)}~\textbf{Single-query online evaluation:} after offline GP training and graph verification, each online tube cross-section is obtained by one GP covariance query and a small matrix square root.

\paragraph{Notation} $\mathbb{S}^n_{++}$ and $\mathbb{S}^n_+$ denote symmetric positive-definite and positive-semidefinite matrices; $\mathcal{B}_2^n$ is the Euclidean unit ball; $\oplus$ is the Minkowski sum; $\ominus$ is the Pontryagin difference; $\preceq$ is the L\"owner order; and $\gamma_{\mathbf{S}}$ denotes the gauge function of a set $\mathbf{S}$.

\section{Anisotropic Template Geometry}
\label{sec:prelim}

\subsubsection{System Model and Error Dynamics}

Consider a discrete-time system with nominal linear dynamics and additive state- and input-dependent uncertainty
\begin{equation}
\mathbf{x}_{k+1} = \mathbf{A}\mathbf{x}_k + \mathbf{B}\mathbf{u}_k + \mathbf{w}(\mathbf{x}_k, \mathbf{u}_k),
\label{eq:dynamics}
\end{equation}
where $\mathbf{x}_k \in \mathcal{X}$, $\mathbf{u}_k \in \mathcal{U}$, $\mathbf{A},\mathbf{B}$ are nominal matrices, and $\mathbf{w}(\mathbf{x},\mathbf{u}) \in \mathbb{W}(\mathbf{x},\mathbf{u})$ is a state- and input-dependent residual. Thus $\mathbf{A},\mathbf{B}$ are assumed available from modeling, local linearization, or identification; in the latter case $\mathbf{w}$ also absorbs identification error.

Under the affine feedback $\mathbf{u} = \mathbf{K}\mathbf{x} + \mathbf{v}$ with feedforward $\mathbf{v} \in \mathbb{V} \subseteq \mathbb{R}^m$, the error $\mathbf{e}_k := \mathbf{x}_k - \hat{\mathbf{x}}_k$ relative to a nominal trajectory $\hat{\mathbf{x}}_{k+1} = \mathbf{A}_{\mathrm{cl}}\hat{\mathbf{x}}_k + \mathbf{B}\mathbf{v}_k$ ($\mathbf{A}_{\mathrm{cl}} := \mathbf{A}+\mathbf{B}\mathbf{K}$) satisfies
\begin{equation}
\mathbf{e}_{k+1} = \mathbf{A}_{\mathrm{cl}}\mathbf{e}_k + \mathbf{w}(\mathbf{x}_k, \mathbf{u}_k),
\label{eq:error_dynamics}
\end{equation}
the standard tube MPC error equation with $(\mathbf{x},\mathbf{u})$-dependent disturbances.

\subsubsection{Anisotropic Template Ans\"{a}tze}

Classical homothetic methods use $\mathbb{W}_k = \alpha_k \bar{\mathbf{W}}$ with scalar $\alpha_k \in \mathbb{R}_{>0}$ varying along the prediction horizon ($k=0,\ldots,N{-}1$), capturing magnitude growth but not directional variation. We generalize to full matrix transformations.

\begin{definition}[Anisotropic Template Ans\"{a}tze]
\label{def:aniso_template}
An \emph{anisotropic template ans\"{a}tze} is a family $\mathbb{W}(\mathbf{x},\mathbf{u}) = \mathbf{P}(\mathbf{x},\mathbf{u})\bar{\mathbf{W}} := \{\mathbf{P}(\mathbf{x},\mathbf{u})\mathbf{w} : \mathbf{w} \in \bar{\mathbf{W}}\}$ where $\bar{\mathbf{W}} \subset \mathbb{R}^n$ is a fixed template with $\mathbf{0} \in \mathrm{int}(\bar{\mathbf{W}})$ and $\mathbf{P}: \mathcal{X} \times \mathcal{U} \to \mathbb{S}^n_{++}$ maps to the cone of symmetric positive-definite matrices.
\end{definition}

The matrix $\mathbf{P}(\mathbf{x},\mathbf{u})$ encodes local anisotropy via its spectral decomposition: eigenvectors define principal uncertainty directions; eigenvalues set magnitudes along these axes. Setting $\mathbf{P} = \sigma \mathbf{I}_n$ with $\sigma > 0$ recovers isotropic scaling~\cite{rakovicHomotheticTubeModel2012}; $\mathbf{P} = \mathrm{diag}(\sigma_1,\ldots,\sigma_n)$ yields axis-aligned scaling~\cite{solopertoLearningBasedRobustModel2018}; the general case admits rotation and shear.

From trajectory data, disturbance samples $\mathbf{w}^{(j)} = \mathbf{x}^{(j+1)} - \mathbf{A}\mathbf{x}^{(j)} - \mathbf{B}\mathbf{u}^{(j)}$ yield dataset $\mathcal{D} = \{(\mathbf{z}^{(j)}, \mathbf{w}^{(j)})\}_{j=1}^N$ with $\mathbf{z} := (\mathbf{x},\mathbf{u}) \in \mathcal{Z} := \mathcal{X} \times \mathcal{U}$. Under a multivariate GP prior with posterior covariance $\hat{\boldsymbol{\Sigma}}_w(\mathbf{x},\mathbf{u}) \in \mathbb{S}^n_{+}$, the $(1{-}\alpha)$ credible region is an ellipsoid admitting the decomposition (see Section~\ref{sec:learning})
\begin{equation}
\mathcal{E}(\mathbf{x},\mathbf{u}) = \hat{\boldsymbol{\mu}}_w \oplus c_{n,\alpha}\hat{\boldsymbol{\Sigma}}_w^{1/2}\mathcal{B}_2^n,
\label{eq:credible_ellipsoid}
\end{equation}
where $c_{n,\alpha} := \sqrt{\chi^2_{n,1-\alpha}}$ and $\mathcal{B}_2^n$ is the Euclidean unit ball. Following standard practice in GP-based MPC, the posterior mean $\hat{\boldsymbol{\mu}}_w$ is absorbed into a corrected nominal model, leaving zero-mean residual uncertainty. The GP-induced anisotropy field is then $\mathbf{P}(\mathbf{z}) := c_{n,\alpha}\hat{\boldsymbol{\Sigma}}_w(\mathbf{z})^{1/2}$. Writing the spectral decomposition $\hat{\boldsymbol{\Sigma}}_w = \mathbf{V}\boldsymbol{\Lambda}\mathbf{V}^\top$, we obtain $\mathbf{P} = c_{n,\alpha}\mathbf{V}\boldsymbol{\Lambda}^{1/2}\mathbf{V}^\top$: as $(\mathbf{x},\mathbf{u})$ varies, both orientation ($\mathbf{V}$) and shape ($\boldsymbol{\Lambda}^{1/2}$) change, a structure that scalar and diagonal parameterizations cannot represent.

\subsubsection{Base Template and Invariance Geometry}

For state- and input-dependent disturbances, the computation of RPI sets is inherently recursive: reachable states depend on disturbance sets, which depend on reachable states. We fix $\bar{\mathbf{W}} = \mathcal{B}_2^n$ and define the \emph{closed-loop contraction rate} $\gamma_{\mathrm{cl}} < 1$: if $\|\mathbf{A}_{\mathrm{cl}}\|_2 < 1$ set $\gamma_{\mathrm{cl}} := \|\mathbf{A}_{\mathrm{cl}}\|_2$; otherwise set $\gamma_{\mathrm{cl}} := \sqrt{\rho(\mathbf{P}_{\mathrm{d}}^{-1}\mathbf{A}_{\mathrm{cl}}^\top\mathbf{P}_{\mathrm{d}}\,\mathbf{A}_{\mathrm{cl}})}$ where $\mathbf{P}_{\mathrm{d}} \succ \mathbf{0}$ is the DARE terminal-cost matrix. Schur stability guarantees $\gamma_{\mathrm{cl}} < 1$ in both cases. The \emph{template inflation factor}
\begin{equation}
r \;:=\; \frac{1}{1 - \gamma_{\mathrm{cl}}}
\label{eq:template_inflation}
\end{equation}
is minimal subject to $\gamma_{\mathrm{cl}}\,r + 1 \leq r$, ensuring $\bar{\boldsymbol{\Omega}} := r\mathcal{B}_2^n$ is RPI for $(\mathbf{A}_{\mathrm{cl}}, \mathcal{B}_2^n)$. When $\|\mathbf{A}_{\mathrm{cl}}\|_2 \geq 1$, the coordinate change $\tilde{\mathbf{e}} = \mathbf{P}_{\mathrm{d}}^{1/2}\mathbf{e}$ yields $\|\tilde{\mathbf{A}}_{\mathrm{cl}}\|_2 = \gamma_{\mathrm{cl}} < 1$; all subsequent spectral bounds $p_i^{\pm}$ and template norms are understood in the $\mathbf{P}_{\mathrm{d}}$-weighted inner product. This base template pair $(\bar{\mathbf{W}}, \bar{\boldsymbol{\Omega}}) = (\mathcal{B}_2^n, r\mathcal{B}_2^n)$ is computed once at initialization; subsequent analysis constrains only $\mathbf{P}(\cdot)$.

\begin{assumption}[Anisotropy Field Regularity]
\label{asm:P_regularity}
The map $\mathbf{P}: \mathcal{Z} \to \mathbb{S}^n_{++}$ satisfies uniform L\"owner bounds $p_{\min}\mathbf{I}_n \preceq \mathbf{P}(\mathbf{z}) \preceq p_{\max}\mathbf{I}_n$ with $0 < p_{\min} \leq p_{\max} < \infty$, and is Lipschitz continuous with constant $L_P > 0$.
\end{assumption}

For a convex compact set $\mathbf{S} \subset \mathbb{R}^n$ with $\mathbf{0} \in \mathrm{int}(\mathbf{S})$, the \emph{gauge function} (Minkowski functional) is $\gamma_{\mathbf{S}}(\mathbf{w}) := \inf\{\lambda \geq 0 : \mathbf{w} \in \lambda \mathbf{S}\}$, satisfying $\gamma_{\mathbf{S}}(\mathbf{w}) \leq 1 \Leftrightarrow \mathbf{w} \in \mathbf{S}$. A set $\bar{\boldsymbol{\Omega}}$ is \emph{template RPI} for $(\mathbf{A}_{\mathrm{cl}}, \bar{\mathbf{W}})$ if $\mathbf{A}_{\mathrm{cl}}\bar{\boldsymbol{\Omega}} \oplus \bar{\mathbf{W}} \subseteq \bar{\boldsymbol{\Omega}}$; the minimal such set exists when $\mathbf{A}_{\mathrm{cl}}$ is Schur stable~\cite{kolmanovsky1998theory}.

\begin{definition}[Template Contraction Coefficient]
\label{def:contraction}
For Schur-stable $\mathbf{A}_{\mathrm{cl}}$ and template $\bar{\boldsymbol{\Omega}} \ni \mathbf{0}$, the contraction coefficient is $\rho_{\bar{\boldsymbol{\Omega}}} := \sup_{\mathbf{e} \in \bar{\boldsymbol{\Omega}}} \gamma_{\bar{\boldsymbol{\Omega}}}(\mathbf{A}_{\mathrm{cl}} \mathbf{e})$.
\end{definition}

\begin{lemma}[Strict Contraction]
\label{lem:contraction}
If $\bar{\boldsymbol{\Omega}}$ is RPI for $(\mathbf{A}_{\mathrm{cl}}, \bar{\mathbf{W}})$ with $\mathbf{0} \in \mathrm{int}(\bar{\mathbf{W}})$, then $\rho_{\bar{\boldsymbol{\Omega}}} < 1$.
\end{lemma}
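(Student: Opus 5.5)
The plan is to exploit the slack that the interior point $\mathbf{0}\in\mathrm{int}(\bar{\mathbf{W}})$ leaves in the RPI inclusion $\mathbf{A}_{\mathrm{cl}}\bar{\boldsymbol{\Omega}}\oplus\bar{\mathbf{W}}\subseteq\bar{\boldsymbol{\Omega}}$. The first step is to fix $\epsilon>0$ with $\epsilon\mathcal{B}_2^n\subseteq\bar{\mathbf{W}}$. The RPI inclusion then gives $\mathbf{A}_{\mathrm{cl}}\bar{\boldsymbol{\Omega}}\oplus\epsilon\mathcal{B}_2^n\subseteq\bar{\boldsymbol{\Omega}}$. I will assume, as in the setting of Definition~\ref{def:contraction}, that $\bar{\boldsymbol{\Omega}}$ is convex and compact. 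It also contains the origin in its interior, because $\mathbf{0}=\mathbf{A}_{\mathrm{cl}}\mathbf{0}$ gives $\epsilon\mathcal{B}_2^n\subseteq\bar{\boldsymbol{\Omega}}$. Hence the gauge $\gamma_{\bar{\boldsymbol{\Omega}}}$ is well defined, finite, positively homogeneous and subadditive. Compactness of $\bar{\boldsymbol{\Omega}}$ also gives $R:=\max_{\mathbf{y}\in\bar{\boldsymbol{\Omega}}}\|\mathbf{y}\|_2<\infty$.

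Next I fix an arbitrary $\mathbf{e}\in\bar{\boldsymbol{\Omega}}$ and put $\mathbf{y}:=\mathbf{A}_{\mathrm{cl}}\mathbf{e}$; the case $\mathbf{y}=\mathbf{0}$ is trivial. The key construction is the point
\[
\mathbf{y}+\epsilon\,\frac{\mathbf{y}}{\|\mathbf{y}\|_2}=\Bigl(1+\frac{\epsilon}{\|\mathbf{y}\|_2}\Bigr)\mathbf{y},
\]
which lies in $\mathbf{A}_{\mathrm{cl}}\bar{\boldsymbol{\Omega}}\oplus\epsilon\mathcal{B}_2^n\subseteq\bar{\boldsymbol{\Omega}}$. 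Homogeneity of the gauge then yields
\[
\gamma_{\bar{\boldsymbol{\Omega}}}(\mathbf{y})\le\Bigl(1+\frac{\epsilon}{\|\mathbf{y}\|_2}\Bigr)^{-1}.
\]
Since $\|\mathbf{y}\|_2\le\|\mathbf{A}_{\mathrm{cl}}\|_2 R=:M$, this bound is at most $(1+\epsilon/M)^{-1}$ whenever $M>0$.

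Taking the supremum over $\mathbf{e}\in\bar{\boldsymbol{\Omega}}$ gives $\rho_{\bar{\boldsymbol{\Omega}}}\le M/(M+\epsilon)<1$, and $\rho_{\bar{\boldsymbol{\Omega}}}=0$ if $M=0$. The bound is uniform in $\mathbf{e}$, so no compactness argument on the supremum is needed.

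I expect the only real obstacle to be the standing regularity of $\bar{\boldsymbol{\Omega}}$. Convexity is what makes the gauge the right object, and boundedness is what gives $M<\infty$. Without boundedness, the pointwise strict inequality $\gamma(\mathbf{y})<1$ need not give a uniform bound below $1$. For the paper's concrete template $\bar{\boldsymbol{\Omega}}=r\mathcal{B}_2^n$ both properties are immediate. In that case the bound can be checked directly: $\gamma(\mathbf{A}_{\mathrm{cl}}\mathbf{e})\le\gamma_{\mathrm{cl}}<1$, working in the $\mathbf{P}_{\mathrm{d}}$-weighted norm when $\|\mathbf{A}_{\mathrm{cl}}\|_2\ge1$.
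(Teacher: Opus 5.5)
Your proof is correct and is essentially the paper's argument: the paper also pushes a point radially by $\varepsilon$ using $\varepsilon\mathcal{B}_2^n\subseteq\bar{\mathbf{W}}$ and applies positive homogeneity of the gauge. The paper packages this as the erosion bound $\bar{\boldsymbol{\Omega}}\ominus\bar{\mathbf{W}}\subseteq(1-\varepsilon/R)\bar{\boldsymbol{\Omega}}$ combined with $\mathbf{A}_{\mathrm{cl}}\bar{\boldsymbol{\Omega}}\subseteq\bar{\boldsymbol{\Omega}}\ominus\bar{\mathbf{W}}$, giving the constant $1-\varepsilon/R$ instead of your $M/(M+\epsilon)$ with $M=\|\mathbf{A}_{\mathrm{cl}}\|_2R$; you would recover the paper's constant by noting that $\|\mathbf{y}\|_2+\epsilon\le R$, since the pushed point lies in $\bar{\boldsymbol{\Omega}}$.
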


The proof is in Appendix~\ref{app:contraction}.

\begin{definition}[Local RPI Field]
\label{def:local_rpi_field}
A map $\mathbf{x} \mapsto \mathcal{B}(\mathbf{x}) \subset \mathbb{R}^n$ (not to be confused with $\mathcal{B}_2^n$) assigning a compact set to each state is a \emph{local RPI field} if the graph $\mathcal{O} := \bigcup_{\mathbf{x} \in \mathcal{X}}(\mathbf{x} + \mathcal{B}(\mathbf{x}))$ is RPI for the closed-loop system: $\mathbf{x}_k \in \mathcal{O} \Rightarrow \mathbf{x}_{k+1} \in \mathcal{O}$ for all admissible $\mathbf{w}_k \in \mathbb{W}(\mathbf{x}_k, \mathbf{u}_k)$.
\end{definition}

Under the template ans\"{a}tze, we seek $\mathcal{B}(\mathbf{x}) = \mathbf{P}(\mathbf{x},\mathbf{K}\mathbf{x})\bar{\boldsymbol{\Omega}}$: scaled, rotated copies of the base template.

\begin{problem}[Anisotropic Template Invariance]
\label{prob:main}
Given dynamics~\eqref{eq:dynamics}, template pair $(\bar{\mathbf{W}}, \bar{\boldsymbol{\Omega}}) = (\mathcal{B}_2^n, r\mathcal{B}_2^n)$ with $r$ as in~\eqref{eq:template_inflation}, and GP-induced field $\mathbf{P}(\mathbf{z})$ satisfying Assumption~\ref{asm:P_regularity}, find conditions on $\mathbf{P}(\cdot)$ such that $\boldsymbol{\Omega}(\mathbf{z}) := \mathbf{P}(\mathbf{z})\bar{\boldsymbol{\Omega}}$ is a local RPI field.
\end{problem}

Unlike~\cite{ramadanLearningBasedShrinkingDisturbanceInvariant2025a}, we do not iterate on sets. Section~\ref{sec:invariance} derives constraints directly on $\mathbf{P}(\cdot)$, a shift from set-valued to function-valued fixed points enabling single-query online evaluation.

\section{Learning the Anisotropy Field from Data}
\label{sec:learning}

This section constructs $\mathbf{P}(\mathbf{z})$ from trajectory data via Gaussian Process regression and establishes the regularity properties required for invariance.

\subsubsection{GP Disturbance Model}

From observed transitions $\{(\mathbf{x}^{(j)}, \mathbf{u}^{(j)}, \mathbf{x}^{(j+1)})\}_{j=1}^{N}$, disturbance samples $\mathbf{w}^{(j)} := \mathbf{x}^{(j+1)} - \mathbf{A}\mathbf{x}^{(j)} - \mathbf{B}\mathbf{u}^{(j)}$ capture model mismatch at $\mathbf{z}^{(j)} := (\mathbf{x}^{(j)}, \mathbf{u}^{(j)})$, yielding $\mathcal{D} = \{(\mathbf{z}^{(j)}, \mathbf{w}^{(j)})\}_{j=1}^{N}$. The GP therefore learns residual uncertainty around a nominal model, not a fully model-free controller. The disturbance is modeled via a multi-output GP with kernel $k:\mathcal{Z}\times\mathcal{Z}\to\mathbb{R}$.

\begin{assumption}[Kernel Regularity]
\label{asm:kernel}
The kernel $k$ is bounded ($k(\mathbf{z},\mathbf{z}) \leq \bar{k}$ for all $\mathbf{z} \in \mathcal{Z}$) and Lipschitz ($|k(\mathbf{z}_1,\mathbf{z}') - k(\mathbf{z}_2,\mathbf{z}')| \leq L_k\|\mathbf{z}_1 - \mathbf{z}_2\|$ for all $\mathbf{z}_1, \mathbf{z}_2, \mathbf{z}' \in \mathcal{Z}$). The active GP dictionary is finite, $N_{\mathrm{gp}}<\infty$, the observation noise satisfies $\sigma_n^2>0$, and the LMC prior covariance is uniformly nondegenerate on $\mathcal{Z}$. The boundedness and Lipschitz parts hold for the squared-exponential kernel $k_{\mathrm{SE}}(\mathbf{z},\mathbf{z}') = \sigma_f^2\exp(-\|\mathbf{z}-\mathbf{z}'\|^2/2\ell^2)$, with $L_k = \sigma_f^2/\ell$.
\end{assumption}

Conditioning on $\mathcal{D}$ with observation noise variance $\sigma_n^2 > 0$ yields posterior mean $\hat{\boldsymbol{\mu}}_w(\mathbf{z})$ and covariance $\hat{\boldsymbol{\Sigma}}_w(\mathbf{z}) \in \mathbb{S}^n_{+}$. For independent output components, the posterior variance is $\hat{\sigma}_{w,i}^2(\mathbf{z}) = k(\mathbf{z},\mathbf{z}) - \mathbf{k}_*^\top(\mathbf{G} + \sigma_n^2\mathbf{I}_N)^{-1}\mathbf{k}_*$ where $\mathbf{k}_* := [k(\mathbf{z},\mathbf{z}^{(j)})]_{j=1}^N \in \mathbb{R}^N$ and $[\mathbf{G}]_{jl} := k(\mathbf{z}^{(j)},\mathbf{z}^{(l)})$ is the Gram matrix. To capture cross-correlations between disturbance components, essential for rotated uncertainty ellipsoids, we employ the Linear Model of Coregionalization (LMC), which yields full posterior covariance $\hat{\boldsymbol{\Sigma}}_w(\mathbf{z}) \in \mathbb{S}^n_{++}$ via mixing matrices; see~\cite{alvarezKernelsVectorValuedFunctions2012}.

\subsubsection{Credible Ellipsoid Decomposition}

The $(1{-}\alpha)$ credible region for the GP posterior is the ellipsoid $\mathcal{E}(\mathbf{z}) = \{\mathbf{w} : (\mathbf{w} - \hat{\boldsymbol{\mu}}_w)^\top\hat{\boldsymbol{\Sigma}}_w^{-1}(\mathbf{w} - \hat{\boldsymbol{\mu}}_w) \leq \chi^2_{n,1-\alpha}\}$.

\begin{lemma}[Credible Ellipsoid Decomposition]
\label{lem:ellipsoid_decomp}
The credible ellipsoid admits the representation $\mathcal{E}(\mathbf{z}) = \hat{\boldsymbol{\mu}}_w \oplus c_{n,\alpha}\hat{\boldsymbol{\Sigma}}_w^{1/2}\mathcal{B}_2^n$ where $c_{n,\alpha} := \sqrt{\chi^2_{n,1-\alpha}}$ and $\mathcal{B}_2^n := \{\mathbf{w} \in \mathbb{R}^n : \|\mathbf{w}\|_2 \leq 1\}$ is the Euclidean unit ball.
\end{lemma}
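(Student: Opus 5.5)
The plan is to prove the set equality by a direct change of variables, establishing both inclusions at once through a bijection. Write $\boldsymbol{\Sigma} := \hat{\boldsymbol{\Sigma}}_w(\mathbf{z})$ and $\boldsymbol{\mu} := \hat{\boldsymbol{\mu}}_w(\mathbf{z})$. Under the LMC construction of Assumption~\ref{asm:kernel} (finite dictionary, $\sigma_n^2>0$, nondegenerate prior), $\boldsymbol{\Sigma} \in \mathbb{S}^n_{++}$, so the inverse in the definition of $\mathcal{E}(\mathbf{z})$ is well defined.

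First I will fix the square root. Take the spectral decomposition $\boldsymbol{\Sigma} = \mathbf{V}\boldsymbol{\Lambda}\mathbf{V}^\top$ and define $\boldsymbol{\Sigma}^{1/2} := \mathbf{V}\boldsymbol{\Lambda}^{1/2}\mathbf{V}^\top$. This matrix is symmetric positive definite and invertible, with $(\boldsymbol{\Sigma}^{1/2})^2 = \boldsymbol{\Sigma}$ and $\boldsymbol{\Sigma}^{-1} = \boldsymbol{\Sigma}^{-1/2}\boldsymbol{\Sigma}^{-1/2}$, where $\boldsymbol{\Sigma}^{-1/2}$ is also symmetric.

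Next I define the affine map $T(\mathbf{v}) := \boldsymbol{\mu} + c_{n,\alpha}\boldsymbol{\Sigma}^{1/2}\mathbf{v}$. It is a bijection of $\mathbb{R}^n$, since $c_{n,\alpha}>0$ for $\alpha\in(0,1)$, and its inverse is $T^{-1}(\mathbf{w}) = c_{n,\alpha}^{-1}\boldsymbol{\Sigma}^{-1/2}(\mathbf{w}-\boldsymbol{\mu})$. Using the symmetry of $\boldsymbol{\Sigma}^{-1/2}$, the quadratic form becomes
\begin{equation*}
(\mathbf{w}-\boldsymbol{\mu})^\top\boldsymbol{\Sigma}^{-1}(\mathbf{w}-\boldsymbol{\mu}) = \|\boldsymbol{\Sigma}^{-1/2}(\mathbf{w}-\boldsymbol{\mu})\|_2^2 = c_{n,\alpha}^2\,\|T^{-1}(\mathbf{w})\|_2^2 .
\end{equation*}
Therefore $\mathbf{w}\in\mathcal{E}(\mathbf{z})$ holds if and only if $\|T^{-1}(\mathbf{w})\|_2 \le 1$, that is, if and only if $\mathbf{w}\in T(\mathcal{B}_2^n)$. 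Finally, $T(\mathcal{B}_2^n) = \{\boldsymbol{\mu}\} \oplus c_{n,\alpha}\boldsymbol{\Sigma}^{1/2}\mathcal{B}_2^n$ by the definition of the Minkowski sum with a singleton, which gives the claimed representation.

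The only real obstacle is justifying invertibility of $\boldsymbol{\Sigma}$. If $\boldsymbol{\Sigma}$ were merely positive semidefinite, I would reinterpret $\mathcal{E}(\mathbf{z})$ via the pseudo-inverse restricted to $\mathrm{range}(\boldsymbol{\Sigma})$, with the constraint $\mathbf{w}-\boldsymbol{\mu}\in\mathrm{range}(\boldsymbol{\Sigma})$. The same computation, restricted to that range, then yields the degenerate ellipsoid $\boldsymbol{\mu}\oplus c_{n,\alpha}\boldsymbol{\Sigma}^{1/2}\mathcal{B}_2^n$. Under the stated assumptions I would simply invoke positive definiteness.
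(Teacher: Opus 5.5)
Your proof is correct and follows the paper's argument essentially step for step. Both fix the symmetric square root, whiten via $\hat{\boldsymbol{\Sigma}}_w^{-1/2}(\mathbf{w}-\hat{\boldsymbol{\mu}}_w)$ so that the quadratic form becomes a squared Euclidean norm, and map the scaled ball back through the affine bijection. The differences are cosmetic: you fold $c_{n,\alpha}$ into $T$, and you justify positive definiteness from the nondegenerate LMC prior rather than from $\sigma_n^2>0$ alone, which is slightly more careful than the paper's justification.
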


The proof is in Appendix~\ref{app:ellipsoid}. We absorb systematic error $\hat{\boldsymbol{\mu}}_w$ into the nominal model, leaving zero-mean uncertainty.

\begin{definition}[GP-Induced Anisotropy Matrix]
\label{def:P_construction}
Given posterior covariance $\hat{\boldsymbol{\Sigma}}_w(\mathbf{z})$ and confidence level $\alpha \in (0,1)$, the anisotropy matrix is
\begin{equation}
\mathbf{P}(\mathbf{z}) := c_{n,\alpha}\,\hat{\boldsymbol{\Sigma}}_w(\mathbf{z})^{1/2}.
\label{eq:P_def}
\end{equation}
\end{definition}

With template $\bar{\mathbf{W}} = \mathcal{B}_2^n$, the disturbance set $\mathbb{W}(\mathbf{z}) = \mathbf{P}(\mathbf{z})\bar{\mathbf{W}}$ coincides with the centered credible ellipsoid.

\subsubsection{Regularity of the Learned Field}

The invariance theory requires uniform bounds and continuity of $\mathbf{P}(\cdot)$. These properties follow from kernel regularity and properties of the matrix square root.

\begin{lemma}[Boundedness]
\label{lem:P_bounded}
Under Assumption~\ref{asm:kernel}, there exist $0 < p_{\min} \leq p_{\max} < \infty$ such that $p_{\min}\mathbf{I}_n \preceq \mathbf{P}(\mathbf{z}) \preceq p_{\max}\mathbf{I}_n$ for all $\mathbf{z} \in \mathcal{Z}$. Explicitly, $p_{\max} = c_{n,\alpha}\sqrt{\bar{k}_0}$ and $p_{\min} = c_{n,\alpha}\sqrt{\underline{\lambda}}$ with
\begin{equation}
\underline{\lambda} := \frac{\sigma_n^2\,\underline{\sigma}_0^2}{\sigma_n^2 + N_{\mathrm{gp}}\bar{k}_0},
\label{eq:lambda_floor}
\end{equation}
where $\boldsymbol{\Sigma}_0(\mathbf{z}) \succeq \underline{\sigma}_0^2\mathbf{I}_n$ is the LMC prior covariance and $\bar{k}_0$ bounds its eigenvalues.
\end{lemma}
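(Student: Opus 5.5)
The proof reduces both bounds to eigenvalue bounds on the posterior covariance $\hat{\boldsymbol{\Sigma}}_w(\mathbf{z})$. The map $\mathbf{M}\mapsto c_{n,\alpha}\mathbf{M}^{1/2}$ is monotone in the L\"owner order on $\mathbb{S}^n_+$. Its eigenvalues are $c_{n,\alpha}\sqrt{\lambda_i(\mathbf{M})}$, and $\mathbf{M}^{1/2}$ shares the eigenvectors of $\mathbf{M}$. Hence it suffices to show
\begin{equation*}
\underline{\lambda}\,\mathbf{I}_n \preceq \hat{\boldsymbol{\Sigma}}_w(\mathbf{z}) \preceq \bar{k}_0\,\mathbf{I}_n
\end{equation*}
for every $\mathbf{z}\in\mathcal{Z}$. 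Taking square roots and multiplying by $c_{n,\alpha}$ then gives the stated $p_{\min}$ and $p_{\max}$.

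\textbf{Upper bound.} Write the LMC posterior in stacked form. Let $\mathbf{K}_*\in\mathbb{R}^{nN_{\mathrm{gp}}\times n}$ be the cross-covariance between the dictionary observations and the test output, and let $\mathbf{G}$ be the $nN_{\mathrm{gp}}\times nN_{\mathrm{gp}}$ multi-output Gram matrix. Then
\begin{equation*}
\hat{\boldsymbol{\Sigma}}_w(\mathbf{z})=\boldsymbol{\Sigma}_0(\mathbf{z})-\mathbf{K}_*^\top(\mathbf{G}+\sigma_n^2\mathbf{I})^{-1}\mathbf{K}_*.
\end{equation*}
The subtracted term is positive semidefinite, so $\hat{\boldsymbol{\Sigma}}_w(\mathbf{z})\preceq\boldsymbol{\Sigma}_0(\mathbf{z})\preceq\bar{k}_0\mathbf{I}_n$ by the definition of $\bar{k}_0$. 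This step is routine.

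\textbf{Lower bound.} This is the main step. Fix a unit vector $\mathbf{v}$. I would view $\mathbf{v}^\top\hat{\boldsymbol{\Sigma}}_w\mathbf{v}$ as the posterior variance of the scalar GP $g=\mathbf{v}^\top\mathbf{f}(\mathbf{z})$. Its prior variance is $s:=\mathbf{v}^\top\boldsymbol{\Sigma}_0\mathbf{v}\ge\underline{\sigma}_0^2$, and it is conditioned on $nN_{\mathrm{gp}}$ noisy observations with independent noise of variance $\sigma_n^2$. The key tool is that this posterior variance equals the Schur complement of the joint covariance of $(g,\text{data})$.

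The plan is to bound the variance reduction $\mathbf{c}^\top(\mathbf{G}+\sigma_n^2\mathbf{I})^{-1}\mathbf{c}$, where $\mathbf{c}=\mathbf{K}_*\mathbf{v}$. By Cauchy--Schwarz in the joint prior, $\mathbf{c}\mathbf{c}^\top\preceq s\,\mathbf{G}'$ for a suitable Gram block $\mathbf{G}'$. Using $\mathbf{G}'(\mathbf{G}'+\sigma_n^2\mathbf{I})^{-1}\preceq \frac{\lambda_{\max}(\mathbf{G}')}{\lambda_{\max}(\mathbf{G}')+\sigma_n^2}\mathbf{I}$ then gives
\begin{equation*}
\mathbf{v}^\top\hat{\boldsymbol{\Sigma}}_w\mathbf{v}\ \ge\ s\,\frac{\sigma_n^2}{\sigma_n^2+\lambda_{\max}(\mathbf{G}')}.
\end{equation*}
Equivalently, this is the posterior variance obtained when all the information is concentrated in one direction. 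Finally, I would bound $\lambda_{\max}(\mathbf{G}')\le\operatorname{tr}$ of the relevant block. Each observation contributes a prior variance of at most $\bar{k}_0$, so the bound is $N_{\mathrm{gp}}\bar{k}_0$, which yields $\underline{\lambda}$ in \eqref{eq:lambda_floor}.

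The delicate point is arranging the dictionary so that exactly $N_{\mathrm{gp}}$ terms (not $nN_{\mathrm{gp}}$) enter the trace. I would first try projecting the observations onto the direction relevant to $g$. If that fails, I would absorb the extra factor into the definition of $\bar{k}_0$ as a bound on the per-point prior block.

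Assumption~\ref{asm:kernel} supplies all the ingredients: $\sigma_n^2>0$, finite $N_{\mathrm{gp}}$, and uniform nondegeneracy $\underline{\sigma}_0^2>0$. Together these make $\underline{\lambda}>0$ uniformly in $\mathbf{z}$, which completes the proof.
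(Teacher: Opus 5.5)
Your reduction to $\underline{\lambda}\,\mathbf{I}_n \preceq \hat{\boldsymbol{\Sigma}}_w(\mathbf{z}) \preceq \bar{k}_0\mathbf{I}_n$ and your upper bound coincide with the paper's. The core of your lower bound is also sound. Take $\mathbf{G}$ to be the full noise-free data Gram matrix. Then $\mathbf{c}\mathbf{c}^\top \preceq s\,\mathbf{G}$ is the Schur complement of the joint prior covariance of $(g,\mathbf{f}_{\mathcal{D}})$, using $s \ge \underline{\sigma}_0^2 > 0$. Congruence by $(\mathbf{G}+\sigma_n^2\mathbf{I})^{-1/2}$ turns it into a rank-one inequality whose single nonzero eigenvalue is $\mathbf{c}^\top(\mathbf{G}+\sigma_n^2\mathbf{I})^{-1}\mathbf{c}$, and this gives your displayed bound.

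The gap is the last step. Bounding $\lambda_{\max}(\mathbf{G})$ by the trace yields $nN_{\mathrm{gp}}\bar{k}_0$, hence only the weaker floor $\sigma_n^2\underline{\sigma}_0^2/(\sigma_n^2 + nN_{\mathrm{gp}}\bar{k}_0)$. Neither of your remedies closes this.
\begin{itemize}
\item Projecting the observations onto one direction discards data, which can only increase the posterior variance. A lower bound for the projected problem therefore does not transfer back to the full one.
\item Redefining $\bar{k}_0$ changes the constant the lemma states, and $p_{\max}$ with it.
\end{itemize}
What you need is a blockwise bound on $\lambda_{\max}(\mathbf{G})$ rather than a trace bound. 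For $\mathbf{x} = (\mathbf{x}_1,\ldots,\mathbf{x}_{N_{\mathrm{gp}}})$,
\[
\mathbf{x}^\top\mathbf{G}\mathbf{x} = \mathrm{Var}\Bigl(\textstyle\sum_{j}\mathbf{x}_j^\top\mathbf{f}(\mathbf{z}^{(j)})\Bigr) \le N_{\mathrm{gp}}\sum_{j}\mathbf{x}_j^\top\boldsymbol{\Sigma}_0(\mathbf{z}^{(j)})\mathbf{x}_j \le N_{\mathrm{gp}}\bar{k}_0\|\mathbf{x}\|_2^2,
\]
using $(\sum_{j=1}^{N} a_j)^2 \le N\sum_j a_j^2$. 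Equivalently, this is a block Gershgorin bound with the matrix Cauchy--Schwarz estimate $\|K_{\mathrm{LMC}}(\mathbf{z}^{(j)},\mathbf{z}^{(l)})\|_2 \le \bar{k}_0$, which the paper itself invokes in Step L6. It gives $\lambda_{\max}(\mathbf{G}) \le N_{\mathrm{gp}}\bar{k}_0$ and exactly the stated $\underline{\lambda}$.

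With that repair, your route is genuinely different from the paper's. The paper computes the posterior in closed form for the colocated dictionary $\boldsymbol{\Sigma}_{\mathcal{D}\mathcal{D}} = \mathbf{1}\mathbf{1}^\top \otimes \boldsymbol{\Sigma}_{**}$, obtaining eigenvalues $\sigma_n^2\lambda_i/(\sigma_n^2 + N_{\mathrm{gp}}\lambda_i)$. It then uses monotonicity in $\lambda_i$, enlarges the denominator to $N_{\mathrm{gp}}\bar{k}_0$, and argues that colocation is the extremal configuration.

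Your Schur-complement bound holds for every dictionary at once, so it does not rely on that extremality argument. It also makes clear why $\bar{k}_0$, the bound on prior variances at the dictionary points, rather than $\lambda_{\max}(\boldsymbol{\Sigma}_{**})$, belongs in the denominator. The paper's explicit computation, in turn, exhibits a concrete configuration showing that the $1/N_{\mathrm{gp}}$ decay of the floor cannot be avoided.
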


The proof is in Appendix~\ref{app:bounded}.

\begin{lemma}[Lipschitz Continuity]
\label{lem:P_lipschitz}
Under Assumption~\ref{asm:kernel}, the map $\mathbf{P}:\mathcal{Z} \to \mathbb{S}^n_{++}$ is Lipschitz: $\|\mathbf{P}(\mathbf{z}_1) - \mathbf{P}(\mathbf{z}_2)\|_F \leq L_P\|\mathbf{z}_1 - \mathbf{z}_2\|$ with $L_P = c_{n,\alpha} L_\Sigma/(2\sqrt{\underline{\lambda}})$, where $L_\Sigma$ is the Lipschitz constant of $\hat{\boldsymbol{\Sigma}}_w(\cdot)$ and $\underline{\lambda}$ is as in~\eqref{eq:lambda_floor}.
\end{lemma}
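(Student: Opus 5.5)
The proof splits into two steps. First, $\hat{\boldsymbol{\Sigma}}_w(\cdot)$ is Lipschitz with some constant $L_\Sigma$. Second, the matrix square root is Lipschitz on the set $\{\boldsymbol{\Sigma}\succeq\underline{\lambda}\mathbf{I}_n\}$ with constant $1/(2\sqrt{\underline{\lambda}})$. Multiplying by $c_{n,\alpha}$ then gives $L_P = c_{n,\alpha}L_\Sigma/(2\sqrt{\underline{\lambda}})$. The floor $\hat{\boldsymbol{\Sigma}}_w(\mathbf{z})\succeq\underline{\lambda}\mathbf{I}_n$ is provided by Lemma~\ref{lem:P_bounded}.

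For the first step, write the LMC posterior covariance as
\[
\hat{\boldsymbol{\Sigma}}_w(\mathbf{z}) = \boldsymbol{\Sigma}_0(\mathbf{z}) - \mathbf{K}_*(\mathbf{z})^\top(\mathbf{G}+\sigma_n^2\mathbf{I})^{-1}\mathbf{K}_*(\mathbf{z}),
\]
where $\mathbf{K}_*$ stacks the cross-covariances between $\mathbf{z}$ and the $N_{\mathrm{gp}}$ dictionary points, mixed through the coregionalization matrices. Each entry of $\boldsymbol{\Sigma}_0$ and $\mathbf{K}_*$ is a finite linear combination of kernel evaluations, so by Assumption~\ref{asm:kernel} it is bounded by a multiple of $\bar{k}$ and Lipschitz with a multiple of $L_k$. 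For the prior term $\boldsymbol{\Sigma}_0(\mathbf{z})=\sum_q \mathbf{B}_q k_q(\mathbf{z},\mathbf{z})$, the diagonal evaluation $k(\mathbf{z},\mathbf{z})$ is Lipschitz with constant $2L_k$: change one argument at a time and apply the assumption (using symmetry of $k$) to each step. For the quadratic term, write the difference at $\mathbf{z}_1,\mathbf{z}_2$ as $\mathbf{K}_1^\top\mathbf{M}(\mathbf{K}_1-\mathbf{K}_2)+(\mathbf{K}_1-\mathbf{K}_2)^\top\mathbf{M}\mathbf{K}_2$ with $\mathbf{M}=(\mathbf{G}+\sigma_n^2\mathbf{I})^{-1}$. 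Then use $\|\mathbf{M}\|_2\le\sigma_n^{-2}$, which holds because $\mathbf{G}\succeq\mathbf{0}$ and $\sigma_n^2>0$, and bound $\|\mathbf{K}_i\|_F$ by a constant times $\sqrt{N_{\mathrm{gp}}}\,\bar{k}$. Collecting terms gives an explicit $L_\Sigma$, of order $L_k(1+N_{\mathrm{gp}}\bar{k}/\sigma_n^2)$ up to the mixing-matrix norms. Since the statement only asks for the existence of $L_\Sigma$, I will keep this step brief.

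The second step is the main point. For $\mathbf{X},\mathbf{Y}\succeq\underline{\lambda}\mathbf{I}$, set $\mathbf{S}=\mathbf{X}^{1/2}$ and $\mathbf{T}=\mathbf{Y}^{1/2}$. Then
\[
\mathbf{X}-\mathbf{Y}=\mathbf{S}(\mathbf{S}-\mathbf{T})+(\mathbf{S}-\mathbf{T})\mathbf{T},
\]
so $\mathbf{D}:=\mathbf{S}-\mathbf{T}$ solves the Sylvester equation $\mathbf{S}\mathbf{D}+\mathbf{D}\mathbf{T}=\mathbf{X}-\mathbf{Y}$. Diagonalize $\mathbf{S}=\mathbf{U}\,\mathrm{diag}(s_i)\mathbf{U}^\top$ and $\mathbf{T}=\mathbf{V}\,\mathrm{diag}(t_j)\mathbf{V}^\top$. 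In these bases the equation decouples entrywise as $(s_i+t_j)\tilde{D}_{ij}=\tilde{E}_{ij}$, where $\tilde{\mathbf{D}}=\mathbf{U}^\top\mathbf{D}\mathbf{V}$ and $\tilde{\mathbf{E}}=\mathbf{U}^\top(\mathbf{X}-\mathbf{Y})\mathbf{V}$. Since $s_i,t_j\ge\sqrt{\underline{\lambda}}$, each $|\tilde{D}_{ij}|\le|\tilde{E}_{ij}|/(2\sqrt{\underline{\lambda}})$. Orthogonal invariance of the Frobenius norm then gives
\[
\|\mathbf{X}^{1/2}-\mathbf{Y}^{1/2}\|_F\le\frac{\|\mathbf{X}-\mathbf{Y}\|_F}{2\sqrt{\underline{\lambda}}}.
\]
Norm consistency matters here: $L_\Sigma$ must be stated in the Frobenius norm, and the entrywise bounds of the first step convert to it directly.

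Finally, apply this with $\mathbf{X}=\hat{\boldsymbol{\Sigma}}_w(\mathbf{z}_1)$ and $\mathbf{Y}=\hat{\boldsymbol{\Sigma}}_w(\mathbf{z}_2)$, both $\succeq\underline{\lambda}\mathbf{I}_n$ by Lemma~\ref{lem:P_bounded}, and multiply by $c_{n,\alpha}$. This yields $\|\mathbf{P}(\mathbf{z}_1)-\mathbf{P}(\mathbf{z}_2)\|_F\le L_P\|\mathbf{z}_1-\mathbf{z}_2\|$ with the stated $L_P$. The only obstacle is bookkeeping in the LMC expression for $L_\Sigma$. The square-root step is clean, and the uniform floor $\underline{\lambda}$ is exactly what keeps the square root from degenerating.
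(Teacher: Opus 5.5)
Your proposal is correct and follows the paper's proof essentially step for step: a Lipschitz bound on $\hat{\boldsymbol{\Sigma}}_w(\cdot)$ from the kernel constants and $\|(\mathbf{G}+\sigma_n^2\mathbf{I})^{-1}\|_2\le\sigma_n^{-2}$, then the Sylvester-equation bound $\|\mathbf{X}^{1/2}-\mathbf{Y}^{1/2}\|_F\le\|\mathbf{X}-\mathbf{Y}\|_F/(2\sqrt{\underline{\lambda}})$ on the spectral floor from Lemma~\ref{lem:P_bounded}, composed and scaled by $c_{n,\alpha}$. Two points of your bookkeeping are slightly more careful than the paper's: you use the constant $2L_k$ for $\mathbf{z}\mapsto k(\mathbf{z},\mathbf{z})$, where the paper uses $L_k$, and you require $L_\Sigma$ to be measured in the Frobenius norm.
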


The proof is in Appendix~\ref{app:lipschitz}. Lemmas~\ref{lem:P_bounded}--\ref{lem:P_lipschitz} establish that the GP-induced field satisfies Assumption~\ref{asm:P_regularity}, closing the loop between the learning and invariance frameworks. The base template $\bar{\boldsymbol{\Omega}} = r\mathcal{B}_2^n$ is computed once via~\eqref{eq:template_inflation}; subsequent GP updates refine only $\mathbf{P}(\cdot)$, with invariance maintained by constraining this field rather than recomputing sets.

\subsubsection{Learning Pipeline}

Algorithm~\ref{alg:learning} summarizes the offline pipeline. The base template pair is computed once (lines 3--4); thereafter only $\mathbf{P}^{(q)}(\cdot)$ is refreshed per epoch.

\begin{algorithm}[!t]
\caption{Anisotropy Field Learning}
\label{alg:learning}
\begin{algorithmic}[1]
\Require Trajectory data $\mathcal{D}$, confidence $\alpha$, kernel $k$
\Ensure Template pair $(\bar{\mathbf{W}},\bar{\boldsymbol{\Omega}})$, anisotropy field $\mathbf{P}(\cdot)$
\State $\mathbf{w}^{(j)} \gets \mathbf{x}^{(j+1)} - \mathbf{A}\mathbf{x}^{(j)} - \mathbf{B}\mathbf{u}^{(j)}$ \Comment{disturbance extraction}
\State Fit GP hyperparameters via marginal likelihood; form $\hat{\boldsymbol{\Sigma}}_w(\cdot)$
\State $r \gets 1/(1-\gamma_{\mathrm{cl}})$; \ $\bar{\boldsymbol{\Omega}} \gets r\mathcal{B}_2^n$; \ $\bar{\mathbf{W}} \gets \mathcal{B}_2^n$
\State $p_{\max} \gets \sup_{\mathbf{z}\in\mathcal{Z}}\|\mathbf{P}(\mathbf{z})\|_2$, \ $p_{\min} \gets \inf_{\mathbf{z}\in\mathcal{Z}}\lambda_{\min}(\mathbf{P}(\mathbf{z}))$ \Comment{Lemma~\ref{lem:P_bounded}}
\State $\mathbf{P}(\mathbf{z}) \gets c_{n,\alpha}\hat{\boldsymbol{\Sigma}}_w(\mathbf{z})^{1/2}$ \Comment{anisotropy field}
\State \Return $(\bar{\mathbf{W}}, \bar{\boldsymbol{\Omega}}, r),\ \mathbf{P}(\cdot)$
\end{algorithmic}
\end{algorithm}

\begin{proposition}[Learned Field Properties]
\label{prop:learned_field}
The field $\mathbf{P}(\mathbf{z})$ from Algorithm~\ref{alg:learning} satisfies: \textbf{(i)} $\mathbf{P}(\mathbf{z}) \in \mathbb{S}^n_{++}$; \textbf{(ii)} uniform bounds per Lemma~\ref{lem:P_bounded}; \textbf{(iii)} Lipschitz continuity per Lemma~\ref{lem:P_lipschitz}; \textbf{(iv)} $\mathbf{P}(\mathbf{z})\bar{\mathbf{W}}$ contains the true disturbance with probability $\geq 1-\alpha$.
\end{proposition}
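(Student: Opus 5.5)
The four items can be handled one at a time. Items (ii) and (iii) follow directly from Lemmas~\ref{lem:P_bounded} and~\ref{lem:P_lipschitz}. Item (i) follows from the eigenvalue floor underlying (ii). Item (iv) reduces to the Gaussian credible-region calculation combined with Lemma~\ref{lem:ellipsoid_decomp}.

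\textbf{Item (i).} The field is defined by $\mathbf{P}(\mathbf{z}) = c_{n,\alpha}\hat{\boldsymbol{\Sigma}}_w(\mathbf{z})^{1/2}$ with $c_{n,\alpha}>0$. The principal square root of a matrix in $\mathbb{S}^n_+$ is symmetric and has eigenvalues $\lambda_i^{1/2}\ge 0$. It therefore suffices to show $\hat{\boldsymbol{\Sigma}}_w(\mathbf{z})\succ\mathbf{0}$. Assumption~\ref{asm:kernel} (noise $\sigma_n^2>0$, finite dictionary $N_{\mathrm{gp}}$, LMC prior with $\boldsymbol{\Sigma}_0\succeq\underline{\sigma}_0^2\mathbf{I}_n$) gives the lower bound $\hat{\boldsymbol{\Sigma}}_w(\mathbf{z})\succeq\underline{\lambda}\mathbf{I}_n$ with $\underline{\lambda}>0$ from~\eqref{eq:lambda_floor}. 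This is exactly the content of Lemma~\ref{lem:P_bounded}, so $\lambda_{\min}(\mathbf{P}(\mathbf{z}))\ge c_{n,\alpha}\sqrt{\underline{\lambda}}=p_{\min}>0$, and (i) follows.

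\textbf{Items (ii) and (iii).} Algorithm~\ref{alg:learning} builds $\mathbf{P}$ exactly as in Definition~\ref{def:P_construction}. Hence (ii) is Lemma~\ref{lem:P_bounded} and (iii) is Lemma~\ref{lem:P_lipschitz}, both under Assumption~\ref{asm:kernel}.

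\textbf{Item (iv).} This is the main obstacle, because it is a probabilistic statement that needs a model assumption to be made precise. I would interpret it in the Bayesian sense: conditional on $\mathcal{D}$, the posterior over $\mathbf{w}(\mathbf{z})$ is $\mathcal{N}(\hat{\boldsymbol{\mu}}_w(\mathbf{z}),\hat{\boldsymbol{\Sigma}}_w(\mathbf{z}))$.

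1. Whiten the posterior. Set $\boldsymbol{\xi}=\hat{\boldsymbol{\Sigma}}_w^{-1/2}(\mathbf{w}-\hat{\boldsymbol{\mu}}_w)$, which is standard normal in $\mathbb{R}^n$. Then $\|\boldsymbol{\xi}\|_2^2\sim\chi^2_n$, so $\Pr\bigl(\|\boldsymbol{\xi}\|_2^2\le\chi^2_{n,1-\alpha}\bigr)=1-\alpha$.
2. Identify the event with the ellipsoid. By Lemma~\ref{lem:ellipsoid_decomp}, this event is $\mathbf{w}\in\hat{\boldsymbol{\mu}}_w\oplus c_{n,\alpha}\hat{\boldsymbol{\Sigma}}_w^{1/2}\mathcal{B}_2^n$.
3. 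Absorb the mean. After $\hat{\boldsymbol{\mu}}_w$ is moved into the corrected nominal model, the residual satisfies $\mathbf{w}-\hat{\boldsymbol{\mu}}_w\in\mathbf{P}(\mathbf{z})\bar{\mathbf{W}}$ with probability $1-\alpha$. That gives the claimed $\ge 1-\alpha$.

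I would state explicitly that the guarantee is pointwise in $\mathbf{z}$ and relative to the GP posterior, i.e., it holds when the true disturbance is a sample path of the prior. A frequentist or uniform-in-$\mathbf{z}$ version would instead need an RKHS-norm scaling $\beta$ in place of $c_{n,\alpha}$, and I would mention that as the caveat.
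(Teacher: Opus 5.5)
Your proposal is correct and follows the same route as the paper: items (ii) and (iii) are direct citations of Lemmas~\ref{lem:P_bounded} and~\ref{lem:P_lipschitz}, and item (iv) identifies $\mathbf{P}(\mathbf{z})\bar{\mathbf{W}}$ with the centered $(1-\alpha)$ credible ellipsoid under the GP posterior via Lemma~\ref{lem:ellipsoid_decomp}. Your extra details make explicit what the paper's one-line proof leaves implicit: in (i) you derive positive definiteness from the floor $\underline{\lambda}>0$ rather than from the definition alone, and in (iv) you spell out the $\chi^2_n$ coverage computation and the mean absorption.
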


\begin{proof}
Items (i)--(iii) restate~\eqref{eq:P_def}, Lemma~\ref{lem:P_bounded}, and Lemma~\ref{lem:P_lipschitz}. Item (iv) follows from Lemma~\ref{lem:ellipsoid_decomp}: $\mathbf{P}(\mathbf{z})\bar{\mathbf{W}} = c_{n,\alpha}\hat{\boldsymbol{\Sigma}}_w^{1/2}\mathcal{B}_2^n$ is the centered $(1{-}\alpha)$ credible ellipsoid, which contains $\mathbf{w}(\mathbf{z})$ with the stated probability under the GP posterior.
\end{proof}

\section{Invariance Conditions for Anisotropic Tubes}
\label{sec:invariance}

This section derives sufficient conditions for the tube field $\boldsymbol{\Omega}(\mathbf{z}) = \mathbf{P}(\mathbf{z})\bar{\boldsymbol{\Omega}}$ to be locally RPI, formulated as LMIs coupling the anisotropy field to closed-loop dynamics.

\begin{definition}[Gauge Function]
\label{def:gauge}
For a compact convex set $\mathbf{S} \subset \mathbb{R}^n$ with $\mathbf{0} \in \mathrm{int}(\mathbf{S})$, the \emph{gauge function} $\gamma_{\mathbf{S}}: \mathbb{R}^n \to \mathbb{R}_{\geq 0}$ is $\gamma_{\mathbf{S}}(\mathbf{w}) := \inf\{\lambda \geq 0 : \mathbf{w} \in \lambda \mathbf{S}\}$.
\end{definition}

\begin{lemma}[Gauge Function Properties]
\label{lem:gauge_props}
Let $\mathbf{S}$ be compact, convex with $\mathbf{0} \in \mathrm{int}(\mathbf{S})$. Then: \textbf{(i)} $\mathbf{S} = \{\mathbf{w} : \gamma_{\mathbf{S}}(\mathbf{w}) \leq 1\}$; \textbf{(ii)} $\gamma_{\mathbf{S}}(\alpha\mathbf{w}) = \alpha\gamma_{\mathbf{S}}(\mathbf{w})$ for $\alpha \geq 0$; \textbf{(iii)} $\gamma_{\mathbf{S}}(\mathbf{w}_1 + \mathbf{w}_2) \leq \gamma_{\mathbf{S}}(\mathbf{w}_1) + \gamma_{\mathbf{S}}(\mathbf{w}_2)$; \textbf{(iv)} $\gamma_{\mathbf{M}\mathbf{S}}(\mathbf{w}) = \gamma_{\mathbf{S}}(\mathbf{M}^{-1}\mathbf{w})$ for invertible $\mathbf{M}$.
\end{lemma}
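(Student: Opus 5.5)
The plan is to prove the four items in order, working directly from Definition~\ref{def:gauge}. The only facts about $\mathbf{S}$ that will be used are that it is compact, that it is convex, and that $\mathbf{0}\in\mathrm{int}(\mathbf{S})$. As a first step, I will show that $\gamma_{\mathbf{S}}$ is finite and that the infimum is attained. Because $\mathbf{0}\in\mathrm{int}(\mathbf{S})$, there is $\epsilon>0$ with $\epsilon\mathcal{B}_2^n\subseteq\mathbf{S}$. This gives $\gamma_{\mathbf{S}}(\mathbf{w})\le\|\mathbf{w}\|_2/\epsilon<\infty$. Next, the set $\{\lambda\ge0:\mathbf{w}\in\lambda\mathbf{S}\}$ is upward closed, because $\lambda\mathbf{S}\subseteq\lambda'\mathbf{S}$ for $\lambda\le\lambda'$ when $\mathbf{S}$ is convex and contains $\mathbf{0}$. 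It is also closed, which follows from compactness: take $\lambda_j\downarrow\lambda^\star>0$ with $\mathbf{w}=\lambda_j\mathbf{s}_j$. Then $\mathbf{s}_j\to\mathbf{w}/\lambda^\star$, and this limit lies in $\mathbf{S}$ since $\mathbf{S}$ is closed. The case $\lambda^\star=0$ forces $\mathbf{w}=\mathbf{0}$, because $\mathbf{S}$ is bounded. Together these show the infimum is attained.

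Item (i) then follows at once. If $\gamma_{\mathbf{S}}(\mathbf{w})\le1$, attainment gives $\mathbf{w}\in\gamma_{\mathbf{S}}(\mathbf{w})\mathbf{S}\subseteq\mathbf{S}$, using upward closedness. Conversely, $\mathbf{w}\in\mathbf{S}$ means $\lambda=1$ is feasible, so $\gamma_{\mathbf{S}}(\mathbf{w})\le1$. For item (ii) with $\alpha>0$, the equivalence $\alpha\mathbf{w}\in\lambda\mathbf{S}\iff\mathbf{w}\in(\lambda/\alpha)\mathbf{S}$ rescales the feasible set by $\alpha$, and so the infimum scales by $\alpha$. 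The case $\alpha=0$ is trivial, since $\gamma_{\mathbf{S}}(\mathbf{0})=0$.

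For item (iii), let $\lambda_i=\gamma_{\mathbf{S}}(\mathbf{w}_i)$, and write $\mathbf{w}_i=\lambda_i\mathbf{s}_i$ with $\mathbf{s}_i\in\mathbf{S}$, which is possible by attainment. If $\lambda_1+\lambda_2=0$, both vectors vanish and there is nothing to prove. Otherwise I write
\begin{equation*}
\mathbf{w}_1+\mathbf{w}_2=(\lambda_1+\lambda_2)\Bigl(\tfrac{\lambda_1}{\lambda_1+\lambda_2}\mathbf{s}_1+\tfrac{\lambda_2}{\lambda_1+\lambda_2}\mathbf{s}_2\Bigr).
\end{equation*}
The bracketed term is a convex combination of points of $\mathbf{S}$, so it lies in $\mathbf{S}$, and therefore $\gamma_{\mathbf{S}}(\mathbf{w}_1+\mathbf{w}_2)\le\lambda_1+\lambda_2$. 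This convexity step is the only substantive one. The main obstacle in the whole proof is really the attainment argument in the first paragraph. It could be avoided by working with $\lambda_i+\epsilon$ and letting $\epsilon\to0$, which I would do if I wanted to avoid the compactness argument.

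Item (iv) is a direct computation. For invertible $\mathbf{M}$, the set $\mathbf{M}\mathbf{S}$ is again compact and convex and contains $\mathbf{0}$ in its interior, so its gauge is well defined. Moreover, $\mathbf{w}\in\lambda\mathbf{M}\mathbf{S}\iff\mathbf{M}^{-1}\mathbf{w}\in\lambda\mathbf{S}$. The two feasible $\lambda$-sets therefore coincide, and so do their infima, which gives $\gamma_{\mathbf{M}\mathbf{S}}(\mathbf{w})=\gamma_{\mathbf{S}}(\mathbf{M}^{-1}\mathbf{w})$.
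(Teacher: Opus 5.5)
Your proposal is correct and follows the paper's proof item by item: the same rescaling of the feasible $\lambda$-set for (ii), the same convex-combination step for (iii), and the same change of variables $\mathbf{w}\in\lambda\mathbf{M}\mathbf{S}\iff\mathbf{M}^{-1}\mathbf{w}\in\lambda\mathbf{S}$ for (iv). The differences are minor: you prove that the infimum is attained, which the paper only asserts ``by compactness'' in (i), and you use attainment in (iii), where the paper instead takes $\lambda_i>\gamma_{\mathbf{S}}(\mathbf{w}_i)$ and lets $\lambda_i\downarrow\gamma_{\mathbf{S}}(\mathbf{w}_i)$ (the alternative you mention yourself).
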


Proofs of (i)--(iv) are in Appendix~\ref{app:gauge}. For the unit ball $\bar{\mathbf{W}} = \mathcal{B}_2^n$, the gauge reduces to the Euclidean norm: $\gamma_{\mathcal{B}_2^n}(\mathbf{w}) = \|\mathbf{w}\|_2$. For the transformed ellipsoid $\mathbf{P}\mathcal{B}_2^n$ with $\mathbf{P} \succ \mathbf{0}$, property (iv) yields $\gamma_{\mathbf{P}\mathcal{B}_2^n}(\mathbf{w}) = \|\mathbf{P}^{-1}\mathbf{w}\|_2$, hence $\mathbf{w} \in \mathbf{P}\mathcal{B}_2^n \Leftrightarrow \mathbf{w}^\top\mathbf{P}^{-2}\mathbf{w} \leq 1$. With the anisotropy field $\mathbf{P}:\mathcal{Z} \to \mathbb{S}^n_{++}$ from Section~\ref{sec:learning}, define the tube cross-section $\boldsymbol{\Omega}(\mathbf{z}) := \mathbf{P}(\mathbf{z})\bar{\boldsymbol{\Omega}}$ and disturbance set $\mathbb{W}(\mathbf{z}) := \mathbf{P}(\mathbf{z})\bar{\mathbf{W}}$. The field is \emph{locally RPI} if $\mathbf{e}_k \in \boldsymbol{\Omega}(\mathbf{z}_k)$ and $\mathbf{w}_k \in \mathbb{W}(\mathbf{z}_k)$ imply $\mathbf{e}_{k+1} \in \boldsymbol{\Omega}(\mathbf{z}_{k+1})$.

We work in \emph{template coordinates}: if $\mathbf{e}_k \in \mathbf{P}_k\bar{\boldsymbol{\Omega}}$, write $\mathbf{e}_k = \mathbf{P}_k\boldsymbol{\xi}$ with $\boldsymbol{\xi} \in \bar{\boldsymbol{\Omega}}$. The matrix $\mathbf{P}'^{-1}\mathbf{A}_{\mathrm{cl}}\mathbf{P}$ is the closed-loop error map in successor ellipsoid coordinates, while $\mathbf{P}'^{-1}\mathbf{P}$ is the relative disturbance injection. Invariance becomes a fixed-template containment test.

\begin{theorem}[Anisotropic RPI Condition]
\label{thm:aniso_rpi}
Let $\mathbf{P}(\mathbf{z})$ satisfy Assumption~\ref{asm:P_regularity}, and let $\bar{\boldsymbol{\Omega}}$ be RPI for $(\mathbf{A}_{\mathrm{cl}}, \bar{\mathbf{W}})$ with $\rho_{\bar{\boldsymbol{\Omega}}} < 1$. If for all transitions $\mathbf{z} \to \mathbf{z}'$
\begin{equation}
\mathbf{P}'^{-1}\mathbf{A}_{\mathrm{cl}}\mathbf{P}\,\bar{\boldsymbol{\Omega}} \oplus \mathbf{P}'^{-1}\mathbf{P}\,\bar{\mathbf{W}} \subseteq \bar{\boldsymbol{\Omega}},
\label{eq:aniso_rpi_cond}
\end{equation}
where $\mathbf{P} := \mathbf{P}(\mathbf{z})$ and $\mathbf{P}' := \mathbf{P}(\mathbf{z}')$, then $\boldsymbol{\Omega}(\mathbf{z}) = \mathbf{P}(\mathbf{z})\bar{\boldsymbol{\Omega}}$ is locally RPI.
\end{theorem}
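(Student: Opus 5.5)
The plan is a direct verification in template coordinates, using only the definition of local RPI stated just before the theorem and the linear-algebra facts already in hand.

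Fix a time $k$ and write $\mathbf{z} := \mathbf{z}_k$ and $\mathbf{z}' := \mathbf{z}_{k+1}$. Since the closed-loop trajectory realizes exactly one successor, the hypothesis~\eqref{eq:aniso_rpi_cond} applies to the pair $(\mathbf{z},\mathbf{z}')$. Assume $\mathbf{e}_k \in \boldsymbol{\Omega}(\mathbf{z}) = \mathbf{P}\bar{\boldsymbol{\Omega}}$ and $\mathbf{w}_k \in \mathbb{W}(\mathbf{z}) = \mathbf{P}\bar{\mathbf{W}}$. By the definition of the image sets, choose $\boldsymbol{\xi}\in\bar{\boldsymbol{\Omega}}$ and $\boldsymbol{\eta}\in\bar{\mathbf{W}}$ with $\mathbf{e}_k = \mathbf{P}\boldsymbol{\xi}$ and $\mathbf{w}_k = \mathbf{P}\boldsymbol{\eta}$. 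Assumption~\ref{asm:P_regularity} gives $\mathbf{P}'\succeq p_{\min}\mathbf{I}_n \succ \mathbf{0}$, so $\mathbf{P}'$ is invertible and the successor template coordinate $\boldsymbol{\xi}' := \mathbf{P}'^{-1}\mathbf{e}_{k+1}$ is well defined.

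Next, substitute the error dynamics~\eqref{eq:error_dynamics}:
\begin{equation*}
\boldsymbol{\xi}' = \mathbf{P}'^{-1}\mathbf{A}_{\mathrm{cl}}\mathbf{P}\boldsymbol{\xi} + \mathbf{P}'^{-1}\mathbf{P}\boldsymbol{\eta}.
\end{equation*}
The first term lies in $\mathbf{P}'^{-1}\mathbf{A}_{\mathrm{cl}}\mathbf{P}\bar{\boldsymbol{\Omega}}$ and the second in $\mathbf{P}'^{-1}\mathbf{P}\bar{\mathbf{W}}$. Their sum therefore lies in the Minkowski sum on the left of~\eqref{eq:aniso_rpi_cond}, which by hypothesis is contained in $\bar{\boldsymbol{\Omega}}$. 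Hence $\boldsymbol{\xi}'\in\bar{\boldsymbol{\Omega}}$, so $\mathbf{e}_{k+1} = \mathbf{P}'\boldsymbol{\xi}' \in \mathbf{P}(\mathbf{z}')\bar{\boldsymbol{\Omega}} = \boldsymbol{\Omega}(\mathbf{z}_{k+1})$. This is exactly local RPI. The same step can be phrased with gauges via Lemma~\ref{lem:gauge_props}(i),(iv): we have $\gamma_{\mathbf{P}'\bar{\boldsymbol{\Omega}}}(\mathbf{e}_{k+1}) = \gamma_{\bar{\boldsymbol{\Omega}}}(\boldsymbol{\xi}')\le 1$.

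The only step needing care is bookkeeping, not estimation. The condition must be invoked for the transition actually realized, where $\mathbf{z}'$ depends on $\mathbf{w}_k$ through $\mathbf{x}_{k+1}$, and that is why~\eqref{eq:aniso_rpi_cond} is quantified over all transitions. The hypotheses that $\bar{\boldsymbol{\Omega}}$ is RPI and $\rho_{\bar{\boldsymbol{\Omega}}}<1$ (Lemma~\ref{lem:contraction}) are not used in this implication. They matter only for making~\eqref{eq:aniso_rpi_cond} satisfiable, for example when $\mathbf{P}'=\mathbf{P}$ it reduces to template RPI, so I would note this in a remark rather than in the proof.
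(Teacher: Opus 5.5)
Your proof is correct and follows the paper's argument step for step: you write $\mathbf{e}_k=\mathbf{P}\boldsymbol{\xi}$ and $\mathbf{w}_k=\mathbf{P}\boldsymbol{\eta}$ in template coordinates, propagate through the error dynamics, pull back by $\mathbf{P}'^{-1}$, and invoke the Minkowski-sum containment for the realized transition. Your remark that the hypotheses ``$\bar{\boldsymbol{\Omega}}$ RPI'' and $\rho_{\bar{\boldsymbol{\Omega}}}<1$ are not used in this implication is also accurate; the paper's proof does not use them either.
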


The proof is in Appendix~\ref{app:aniso_rpi}. With templates $\bar{\boldsymbol{\Omega}} = r\mathcal{B}_2^n$, $\bar{\mathbf{W}} = \mathcal{B}_2^n$, condition~\eqref{eq:aniso_rpi_cond} reduces (after dividing by $r$) to $\|\boldsymbol{\xi}\|_2 \leq 1$, $\|\boldsymbol{\eta}\|_2 \leq 1 \Rightarrow \|\mathbf{M}\boldsymbol{\xi} + \mathbf{N}\boldsymbol{\eta}\|_2 \leq 1$, where $\mathbf{M} := \mathbf{P}'^{-1}\mathbf{A}_{\mathrm{cl}}\mathbf{P}$ and $\mathbf{N} := \tfrac{1}{r}\mathbf{P}'^{-1}\mathbf{P}$. The factor $1/r$ accounts for template headroom. We certify the quadratic implication by the sufficient S-procedure below.

\begin{theorem}[LMI Condition for Anisotropic RPI]
\label{thm:lmi_rpi}
The field $\boldsymbol{\Omega}(\mathbf{z})$ is locally RPI if for all transitions there exist $\lambda_1,\lambda_2 \geq 0$ with $\lambda_1+\lambda_2\leq 1$ such that
\begin{equation}
\begin{bmatrix} \lambda_1\mathbf{I}_n - \mathbf{M}^\top\mathbf{M} & -\mathbf{M}^\top\mathbf{N} \\ -\mathbf{N}^\top\mathbf{M} & \lambda_2\mathbf{I}_n - \mathbf{N}^\top\mathbf{N} \end{bmatrix} \succeq \mathbf{0}.
\label{eq:main_lmi}
\end{equation}
\end{theorem}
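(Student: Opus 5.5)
The plan is to reduce the statement to Theorem~\ref{thm:aniso_rpi} and then certify the resulting quadratic implication with an S-procedure argument. By Theorem~\ref{thm:aniso_rpi}, it suffices to verify condition~\eqref{eq:aniso_rpi_cond} for every transition $\mathbf{z}\to\mathbf{z}'$.

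First I will rewrite~\eqref{eq:aniso_rpi_cond} in the templates $\bar{\boldsymbol{\Omega}}=r\mathcal{B}_2^n$, $\bar{\mathbf{W}}=\mathcal{B}_2^n$. A generic element of the left-hand side is $\mathbf{P}'^{-1}\mathbf{A}_{\mathrm{cl}}\mathbf{P}(r\boldsymbol{\xi})+\mathbf{P}'^{-1}\mathbf{P}\boldsymbol{\eta}$ with $\|\boldsymbol{\xi}\|_2\le 1$ and $\|\boldsymbol{\eta}\|_2\le 1$. Dividing by $r$, membership in $r\mathcal{B}_2^n$ becomes
\begin{equation*}
\|\mathbf{M}\boldsymbol{\xi}+\mathbf{N}\boldsymbol{\eta}\|_2\le 1 ,
\end{equation*}
with $\mathbf{M}$ and $\mathbf{N}$ as defined before the statement. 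I use the gauge identity $\gamma_{r\mathcal{B}_2^n}(\mathbf{v})=\|\mathbf{v}\|_2/r$ from Lemma~\ref{lem:gauge_props}(ii). So the task is to show that the LMI implies this implication for all $\boldsymbol{\xi},\boldsymbol{\eta}$ in the unit ball.

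Next, fix such $\boldsymbol{\xi},\boldsymbol{\eta}$ and set $\boldsymbol{\zeta}:=(\boldsymbol{\xi},\boldsymbol{\eta})$. The quadratic form of the matrix in~\eqref{eq:main_lmi} evaluated at $\boldsymbol{\zeta}$ equals
\begin{equation*}
\lambda_1\|\boldsymbol{\xi}\|_2^2+\lambda_2\|\boldsymbol{\eta}\|_2^2-\|\mathbf{M}\boldsymbol{\xi}+\mathbf{N}\boldsymbol{\eta}\|_2^2 .
\end{equation*}
This identity needs attention to signs. The off-diagonal blocks are $-\mathbf{M}^\top\mathbf{N}$, and they combine with the diagonal terms $-\mathbf{M}^\top\mathbf{M}$ and $-\mathbf{N}^\top\mathbf{N}$ to give exactly $-\|\mathbf{M}\boldsymbol{\xi}+\mathbf{N}\boldsymbol{\eta}\|^2$. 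Positive semidefiniteness then gives
\begin{equation*}
\|\mathbf{M}\boldsymbol{\xi}+\mathbf{N}\boldsymbol{\eta}\|_2^2\le \lambda_1\|\boldsymbol{\xi}\|_2^2+\lambda_2\|\boldsymbol{\eta}\|_2^2\le \lambda_1+\lambda_2\le 1 .
\end{equation*}
The middle inequality uses $\lambda_i\ge0$ and the unit-norm bounds, and the last uses the hypothesis $\lambda_1+\lambda_2\le1$. This is the lossy S-procedure in its direct form, and here only sufficiency is needed, so no S-lemma losslessness argument is required.

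Finally, because this containment holds for every transition, Theorem~\ref{thm:aniso_rpi} yields local RPI of $\boldsymbol{\Omega}(\mathbf{z})$. The main obstacle is not the algebra. It is making sure the hypotheses of Theorem~\ref{thm:aniso_rpi} are in place: Assumption~\ref{asm:P_regularity} ensures $\mathbf{P}'$ is invertible, and $\bar{\boldsymbol{\Omega}}=r\mathcal{B}_2^n$ is RPI with $\rho_{\bar{\boldsymbol{\Omega}}}<1$ by~\eqref{eq:template_inflation} and Lemma~\ref{lem:contraction}. In the case $\|\mathbf{A}_{\mathrm{cl}}\|_2\ge1$, I will state explicitly that every norm is taken in the $\mathbf{P}_{\mathrm{d}}$-weighted coordinates, so that the same computation applies verbatim after the change of variables $\tilde{\mathbf{e}}=\mathbf{P}_{\mathrm{d}}^{1/2}\mathbf{e}$.
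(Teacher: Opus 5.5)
Your proposal is correct and follows the paper's argument. The paper reduces via Theorem~\ref{thm:aniso_rpi} and divides by $r$ exactly as you do. It then writes $f_0-\lambda_1 f_1-\lambda_2 f_2$ as the constant $1-\lambda_1-\lambda_2$ plus the quadratic form of~\eqref{eq:main_lmi}, which is just a repackaging of your chain $\|\mathbf{M}\boldsymbol{\xi}+\mathbf{N}\boldsymbol{\eta}\|_2^2\le\lambda_1\|\boldsymbol{\xi}\|_2^2+\lambda_2\|\boldsymbol{\eta}\|_2^2\le\lambda_1+\lambda_2\le 1$.
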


\begin{proof}
Let $\boldsymbol{\zeta}=[\boldsymbol{\xi}^\top,\boldsymbol{\eta}^\top]^\top$, $f_0=1-\|\mathbf{M}\boldsymbol{\xi}+\mathbf{N}\boldsymbol{\eta}\|_2^2$, $f_1=1-\|\boldsymbol{\xi}\|_2^2$, and $f_2=1-\|\boldsymbol{\eta}\|_2^2$. The sufficient S-procedure condition is $f_0-\lambda_1 f_1-\lambda_2 f_2\ge0$ for all $\boldsymbol{\zeta}$, with $\lambda_1,\lambda_2\ge0$. Expanding gives constant term $1-\lambda_1-\lambda_2$ and quadratic matrix~\eqref{eq:main_lmi}. The constraint $\lambda_1+\lambda_2\le1$ makes the constant term nonnegative, and~\eqref{eq:main_lmi} makes the quadratic term nonnegative. Hence $f_1\ge0$, $f_2\ge0\Rightarrow f_0\ge0$. No losslessness claim is made.
\end{proof}

\begin{corollary}[Spectral Sufficient Condition]
\label{cor:spectral}
Local RPI holds if $\|\mathbf{P}'^{-1}\mathbf{A}_{\mathrm{cl}}\mathbf{P}\|_2 + \tfrac{1}{r}\|\mathbf{P}'^{-1}\mathbf{P}\|_2 \leq 1$ for all transitions. With uniform bounds from Assumption~\ref{asm:P_regularity}, this is implied by $(\gamma_{\mathrm{cl}} + 1/r)\,\kappa_P \leq 1$ where $\kappa_P := p_{\max}/p_{\min}$ is the condition number of the anisotropy field.
\end{corollary}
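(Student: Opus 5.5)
The plan is to go through the reduced form of condition~\eqref{eq:aniso_rpi_cond} stated just after Theorem~\ref{thm:aniso_rpi}, and to prove the two claims in sequence: first the norm-sum condition, then its implication from the uniform L\"owner bounds.

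\textbf{First claim.} With $\mathbf{M} := \mathbf{P}'^{-1}\mathbf{A}_{\mathrm{cl}}\mathbf{P}$ and $\mathbf{N} := \tfrac{1}{r}\mathbf{P}'^{-1}\mathbf{P}$, it suffices to show that $\|\boldsymbol{\xi}\|_2 \le 1$ and $\|\boldsymbol{\eta}\|_2\le 1$ imply $\|\mathbf{M}\boldsymbol{\xi}+\mathbf{N}\boldsymbol{\eta}\|_2\le 1$. The triangle inequality and submultiplicativity of the induced $2$-norm give $\|\mathbf{M}\boldsymbol{\xi}+\mathbf{N}\boldsymbol{\eta}\|_2 \le \|\mathbf{M}\|_2 + \|\mathbf{N}\|_2 \le 1$. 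Multiplying by $r$ recovers~\eqref{eq:aniso_rpi_cond} for $\bar{\boldsymbol{\Omega}} = r\mathcal{B}_2^n$ and $\bar{\mathbf{W}}=\mathcal{B}_2^n$, and Theorem~\ref{thm:aniso_rpi} then yields local RPI. The hypothesis $\rho_{\bar{\boldsymbol{\Omega}}}<1$ is supplied by Lemma~\ref{lem:contraction}, since $\bar{\boldsymbol{\Omega}}$ is RPI by the choice of $r$ in~\eqref{eq:template_inflation}.

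One could instead route the first claim through Theorem~\ref{thm:lmi_rpi}. Writing $a=\|\mathbf{M}\|_2$ and $b=\|\mathbf{N}\|_2$, take $\lambda_1 = a(a+b)$ and $\lambda_2 = b(a+b)$, so that $\lambda_1+\lambda_2=(a+b)^2\le 1$. One would then check~\eqref{eq:main_lmi} via the bound $\|\mathbf{M}\boldsymbol{\xi}+\mathbf{N}\boldsymbol{\eta}\|^2 \le (a\|\boldsymbol{\xi}\|+b\|\boldsymbol{\eta}\|)^2$ combined with Cauchy--Schwarz in the weighted form $(a s+b t)^2\le (a+b)(a s^2+b t^2)$. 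I would mention this only as a remark, since the direct route is shorter.

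\textbf{Second claim.} Assumption~\ref{asm:P_regularity} gives $\|\mathbf{P}\|_2\le p_{\max}$ and $\|\mathbf{P}'^{-1}\|_2 = 1/\lambda_{\min}(\mathbf{P}')\le 1/p_{\min}$. Hence
\[
\|\mathbf{M}\|_2 \le \|\mathbf{P}'^{-1}\|_2\,\|\mathbf{A}_{\mathrm{cl}}\|_2\,\|\mathbf{P}\|_2 \le \gamma_{\mathrm{cl}}\kappa_P,
\qquad
\|\mathbf{P}'^{-1}\mathbf{P}\|_2\le\kappa_P .
\]
Summing these gives $\|\mathbf{M}\|_2+\|\mathbf{N}\|_2 \le (\gamma_{\mathrm{cl}}+1/r)\kappa_P\le 1$, as required.

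\textbf{Main obstacle.} The only delicate point is the case $\|\mathbf{A}_{\mathrm{cl}}\|_2\ge 1$. There, $\|\mathbf{A}_{\mathrm{cl}}\|_2\le\gamma_{\mathrm{cl}}$ holds only in the $\mathbf{P}_{\mathrm{d}}$-weighted norm. I would therefore carry out the whole argument in the coordinates $\tilde{\mathbf{e}}=\mathbf{P}_{\mathrm{d}}^{1/2}\mathbf{e}$, with $p_{\min}$ and $p_{\max}$ understood as the L\"owner bounds in that inner product, exactly as the paper stipulates after~\eqref{eq:template_inflation}.

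I would also remark that $\gamma_{\mathrm{cl}}+1/r = 1$ by~\eqref{eq:template_inflation}. The uniform sufficient condition therefore collapses to $\kappa_P\le 1$, that is, an isotropic constant field. This shows the second statement is only the crude fallback, while the transition-wise first statement, or the LMI~\eqref{eq:main_lmi}, is what is meant to be checked on graph arcs.
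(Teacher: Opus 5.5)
Your proof is correct and follows the paper's own route. The reduction of \eqref{eq:aniso_rpi_cond} to $\|\mathbf{M}\boldsymbol{\xi}+\mathbf{N}\boldsymbol{\eta}\|_2\le 1$ plus the triangle inequality is Steps~1--2 of the proof of Theorem~\ref{thm:graph_invariance}, and your submultiplicativity bound is Lemma~\ref{lem:spectral_product_bounds} with the global $p_{\max},p_{\min}$ in place of the node-wise $p_i^+,p_j^-$. Your observation that $\gamma_{\mathrm{cl}}+1/r=1$ forces $\kappa_P=1$, i.e.\ a constant isotropic field, is a correct sharpening of Remark~\ref{rem:uniform}.

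One precision in the $\|\mathbf{A}_{\mathrm{cl}}\|_2\ge 1$ case: $\mathbf{P}(\mathbf{z})$ is generally not self-adjoint in the $\mathbf{P}_{\mathrm{d}}$ inner product, so ``L\"owner bounds in that inner product'' should be read as the weighted operator-norm bounds $\|\mathbf{P}\|_{\mathbf{P}_{\mathrm{d}}}\le p_{\max}$ and $\|\mathbf{P}^{-1}\|_{\mathbf{P}_{\mathrm{d}}}\le 1/p_{\min}$. Those are all your argument uses. If you instead take the Euclidean bounds of Assumption~\ref{asm:P_regularity} literally, each bound picks up a factor $\sqrt{\kappa(\mathbf{P}_{\mathrm{d}})}$, and the condition carries an extra $\kappa(\mathbf{P}_{\mathrm{d}})$.
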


\begin{remark}[Uniform vs. arc-wise verification]
\label{rem:uniform}
Since $\kappa_P \geq 1$, the uniform condition requires $\gamma_{\mathrm{cl}}+1/r \leq 1$, i.e. $r \geq 1/(1-\gamma_{\mathrm{cl}})$, exactly the template inflation factor. It is satisfiable with $\kappa_P = 1$ (isotropic fields) but becomes restrictive for $\kappa_P \gg 1$. The operative criterion is the arc-wise condition of Theorem~\ref{thm:graph_invariance}, which uses local ratios $p_i^+/p_j^-$; on a fine grid these approach unity even when the global condition number is large.
\end{remark}

The conditions of Theorems~\ref{thm:aniso_rpi} and~\ref{thm:lmi_rpi} require verification over all transitions $(\mathbf{x},\mathbf{u}) \to (\mathbf{x}',\mathbf{u}')$ in the continuous domain $\mathcal{Z}$. Since this is an uncountable set, direct verification is intractable. We discretize the operating space into a finite graph and derive sufficient conditions that, when satisfied at graph nodes, imply invariance over the entire domain.

\subsubsection{Graph Structure}

Let $\mathcal{G} = (\mathcal{V}, \mathcal{A})$ denote a directed graph where the vertex set $\mathcal{V} = \{1, \ldots, N_v\}$ indexes a partition of $\mathcal{Z}$ into regions $\{\mathcal{R}_i\}_{i=1}^{N_v}$ satisfying $\bigcup_{i} \mathcal{R}_i = \mathcal{Z}$ and $\mathrm{int}(\mathcal{R}_i) \cap \mathrm{int}(\mathcal{R}_j) = \emptyset$ for $i \neq j$. Each node $i$ has a representative point $\mathbf{z}_i = (\mathbf{x}_i, \mathbf{u}_i) \in \mathcal{R}_i$, typically the centroid. The arc set $\mathcal{A} \subseteq \mathcal{V} \times \mathcal{V}$ is built conservatively: consecutive rollout transitions seed arcs, and spectral reachable-set over-approximations add any successor regions reachable under $\mathbb{W}(\mathbf{z})$. Thus $(i,j) \in \mathcal{A}$ whenever there exist $\mathbf{z} \in \mathcal{R}_i$, $\mathbf{w} \in \mathbb{W}(\mathbf{z})$, and $\mathbf{v}' \in \mathbb{V}$ such that $\mathbf{x}' = \mathbf{A}_{\mathrm{cl}}\mathbf{x} + \mathbf{B}\mathbf{v} + \mathbf{w}$ satisfies $(\mathbf{x}', \mathbf{K}\mathbf{x}' + \mathbf{v}') \in \mathcal{R}_j$. In the implementation, $\mathcal{V}$ comes from $k$-means clustering of sampled operating points; spurious arcs only strengthen verification, since every added arc must also pass the invariance test.

\begin{assumption}[Graph Completeness]
\label{asm:graph_complete}
The graph $\mathcal{G}$ captures all dynamically feasible transitions: for any trajectory $\{(\mathbf{x}_k, \mathbf{u}_k)\}_{k \geq 0}$ of~\eqref{eq:dynamics} with $\mathbf{w}_k \in \mathbb{W}(\mathbf{x}_k, \mathbf{u}_k)$, if $(\mathbf{x}_k, \mathbf{u}_k) \in \mathcal{R}_i$ and $(\mathbf{x}_{k+1}, \mathbf{u}_{k+1}) \in \mathcal{R}_j$, then $(i,j) \in \mathcal{A}$.
\end{assumption}

\subsubsection{Spectral Bounds}

At each node $i$, store $\mathbf{P}_i := \mathbf{P}(\mathbf{z}_i)$. To account for variation of $\mathbf{P}(\cdot)$ within each region, we derive spectral bounds via Lipschitz continuity.

\begin{definition}[Region Diameter and Lipschitz Deviation]
\label{def:region_deviation}
For each region $\mathcal{R}_i$, the \emph{diameter} is $\mathrm{diam}(\mathcal{R}_i) := \sup_{\mathbf{z}, \mathbf{z}' \in \mathcal{R}_i} \|\mathbf{z} - \mathbf{z}'\|_2$ and the \emph{Lipschitz deviation bound} is $\delta_i := L_P \cdot \mathrm{diam}(\mathcal{R}_i)$.
\end{definition}

\begin{lemma}[Eigenvalue Bounds within Regions]
\label{lem:eigenvalue_bounds}
Let $\mathbf{P}(\cdot)$ satisfy Assumption~\ref{asm:P_regularity} with Lipschitz constant $L_P$. For any $\mathbf{z} \in \mathcal{R}_i$,
\begin{align}
\lambda_{\max}(\mathbf{P}(\mathbf{z})) &\leq \lambda_{\max}(\mathbf{P}_i) + \delta_i, \label{eq:lambda_max_bound} \\
\lambda_{\min}(\mathbf{P}(\mathbf{z})) &\geq \lambda_{\min}(\mathbf{P}_i) - \delta_i. \label{eq:lambda_min_bound}
\end{align}
\end{lemma}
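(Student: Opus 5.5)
The argument is a standard Weyl-type perturbation bound for symmetric matrices, combined with the Lipschitz property from Assumption~\ref{asm:P_regularity} and the diameter bound of Definition~\ref{def:region_deviation}. Fix $\mathbf{z} \in \mathcal{R}_i$. Since the representative point $\mathbf{z}_i$ lies in $\mathcal{R}_i$, we have $\|\mathbf{z}-\mathbf{z}_i\|_2 \leq \mathrm{diam}(\mathcal{R}_i)$. The Lipschitz bound of Lemma~\ref{lem:P_lipschitz} (equivalently Assumption~\ref{asm:P_regularity}, read in the Frobenius norm) then gives
\begin{equation*}
\|\mathbf{P}(\mathbf{z})-\mathbf{P}_i\|_2 \;\leq\; \|\mathbf{P}(\mathbf{z})-\mathbf{P}_i\|_F \;\leq\; L_P\,\mathrm{diam}(\mathcal{R}_i) \;=\; \delta_i .
\end{equation*}
The first inequality is the elementary fact that the spectral norm is dominated by the Frobenius norm. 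If the Lipschitz constant is instead understood in the spectral norm, this step is immediate.

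Next, write $\mathbf{E} := \mathbf{P}(\mathbf{z})-\mathbf{P}_i$. This matrix is symmetric, because both $\mathbf{P}(\mathbf{z})$ and $\mathbf{P}_i$ lie in $\mathbb{S}^n_{++}$. For a symmetric matrix, $\|\mathbf{E}\|_2 \leq \delta_i$ is equivalent to the Löwner sandwich $-\delta_i\mathbf{I}_n \preceq \mathbf{E} \preceq \delta_i\mathbf{I}_n$. Adding $\mathbf{P}_i$ yields
\begin{equation*}
\mathbf{P}_i-\delta_i\mathbf{I}_n \;\preceq\; \mathbf{P}(\mathbf{z}) \;\preceq\; \mathbf{P}_i+\delta_i\mathbf{I}_n .
\end{equation*}

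Finally, we pass to extreme eigenvalues through the Rayleigh quotient characterization $\lambda_{\max}(\mathbf{X})=\max_{\|\mathbf{v}\|_2=1}\mathbf{v}^\top\mathbf{X}\mathbf{v}$ and $\lambda_{\min}(\mathbf{X})=\min_{\|\mathbf{v}\|_2=1}\mathbf{v}^\top\mathbf{X}\mathbf{v}$. Löwner order is monotone under these maps, and shifting by $\pm\delta_i\mathbf{I}_n$ shifts every eigenvalue by $\pm\delta_i$. This gives~\eqref{eq:lambda_max_bound} and~\eqref{eq:lambda_min_bound}. Equivalently, one can cite Weyl's inequality $|\lambda_k(\mathbf{A}+\mathbf{E})-\lambda_k(\mathbf{A})|\leq\|\mathbf{E}\|_2$ directly.

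No step is genuinely hard. The only points needing care are, first, that $\mathbf{z}_i \in \mathcal{R}_i$, so that the diameter actually controls $\|\mathbf{z}-\mathbf{z}_i\|_2$. Second, the bound must pass from the Frobenius norm, in which $L_P$ is stated, to the spectral norm, which points in the favourable direction. Third, if $\|\mathbf{A}_{\mathrm{cl}}\|_2\geq 1$, everything should be read in the $\mathbf{P}_{\mathrm{d}}$-weighted coordinates, as the paper stipulates. The argument is identical there, with $L_P$ taken in that norm.
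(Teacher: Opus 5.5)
Your proof is correct and follows the paper's argument essentially step for step. Lipschitz continuity in the Frobenius norm together with $\|\cdot\|_2\le\|\cdot\|_F$ gives $\|\mathbf{P}(\mathbf{z})-\mathbf{P}_i\|_2\le\delta_i$, the perturbation is symmetric, and Weyl's inequality (which your L\"owner-sandwich/Rayleigh-quotient argument simply reproves) yields both bounds.

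The only thing I would drop is your closing remark about $\mathbf{P}_{\mathrm{d}}$-weighted coordinates. The lemma concerns eigenvalues of the symmetric matrix $\mathbf{P}(\mathbf{z})$ itself, so the Euclidean argument already suffices. Also, the argument is not literally ``identical'' in the weighted inner product, since $\mathbf{P}(\mathbf{z})$ is generally not self-adjoint there.
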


The proof is in Appendix~\ref{app:eigenvalue_bounds}.

\begin{definition}[Spectral Bound Parameters]
\label{def:spectral_bounds}
For each node $i$, the upper spectral bound is $p_i^+ := \lambda_{\max}(\mathbf{P}_i) + \delta_i$ and the lower spectral bound is $p_i^- := \lambda_{\min}(\mathbf{P}_i) - \delta_i$.
\end{definition}

\begin{assumption}[Grid Fineness]
\label{asm:grid_fine}
The partition $\{\mathcal{R}_i\}$ is sufficiently fine such that $\delta_i < \lambda_{\min}(\mathbf{P}_i)$ for all $i \in \mathcal{V}$, ensuring $p_i^- > 0$.
\end{assumption}

\begin{lemma}[Spectral Norm Bounds for Matrix Products]
\label{lem:spectral_product_bounds}
For any $\mathbf{z} \in \mathcal{R}_i$ and $\mathbf{z}' \in \mathcal{R}_j$, the matrices $\mathbf{M} := \mathbf{P}(\mathbf{z}')^{-1}\mathbf{A}_{\mathrm{cl}}\mathbf{P}(\mathbf{z})$ and $\mathbf{N} := \tfrac{1}{r}\mathbf{P}(\mathbf{z}')^{-1}\mathbf{P}(\mathbf{z})$ satisfy
\begin{equation}
\|\mathbf{M}\|_2 \leq \frac{p_i^+}{p_j^-} \gamma_{\mathrm{cl}}, \qquad \|\mathbf{N}\|_2 \leq \frac{1}{r}\frac{p_i^+}{p_j^-}.
\label{eq:N_M_bounds}
\end{equation}
\end{lemma}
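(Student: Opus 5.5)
The plan is to bound each factor separately and combine them with submultiplicativity of the spectral norm. Write $\mathbf{P} := \mathbf{P}(\mathbf{z})$ and $\mathbf{P}' := \mathbf{P}(\mathbf{z}')$. Then
\[
\|\mathbf{M}\|_2 \le \|\mathbf{P}'^{-1}\|_2\,\|\mathbf{A}_{\mathrm{cl}}\|_2\,\|\mathbf{P}\|_2,
\qquad
\|\mathbf{N}\|_2 \le \tfrac{1}{r}\|\mathbf{P}'^{-1}\|_2\,\|\mathbf{P}\|_2 .
\]
It therefore suffices to establish three bounds:
\[
\|\mathbf{P}\|_2 \le p_i^+, \qquad \|\mathbf{P}'^{-1}\|_2 \le 1/p_j^-, \qquad \|\mathbf{A}_{\mathrm{cl}}\|_2 \le \gamma_{\mathrm{cl}} .
\]

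For the first two, I use that $\mathbf{P}(\cdot)$ takes values in $\mathbb{S}^n_{++}$. The spectral norm of a symmetric positive-definite matrix equals its largest eigenvalue, so $\|\mathbf{P}\|_2 = \lambda_{\max}(\mathbf{P}(\mathbf{z}))$. Since $\mathbf{z}\in\mathcal{R}_i$, Lemma~\ref{lem:eigenvalue_bounds}, inequality \eqref{eq:lambda_max_bound}, gives $\lambda_{\max}(\mathbf{P}(\mathbf{z})) \le p_i^+$. Likewise $\|\mathbf{P}'^{-1}\|_2 = 1/\lambda_{\min}(\mathbf{P}(\mathbf{z}'))$, and since $\mathbf{z}'\in\mathcal{R}_j$, inequality \eqref{eq:lambda_min_bound} gives $\lambda_{\min}(\mathbf{P}(\mathbf{z}')) \ge p_j^-$. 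Assumption~\ref{asm:grid_fine} guarantees $p_j^->0$, so the reciprocal is well defined and order-reversing, which yields $\|\mathbf{P}'^{-1}\|_2 \le 1/p_j^-$. These two bounds give the $\mathbf{N}$ estimate immediately.

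For the $\mathbf{M}$ estimate, I will split into the two cases in the definition of $\gamma_{\mathrm{cl}}$.

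\textbf{Case $\|\mathbf{A}_{\mathrm{cl}}\|_2<1$.} Here $\gamma_{\mathrm{cl}}=\|\mathbf{A}_{\mathrm{cl}}\|_2$ by definition, and the bound follows directly.

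\textbf{Case $\|\mathbf{A}_{\mathrm{cl}}\|_2\ge 1$.} This is the main obstacle, since the Euclidean norm of $\mathbf{A}_{\mathrm{cl}}$ is no longer bounded by $\gamma_{\mathrm{cl}}$. I will rely on the convention stated in the base-template paragraph: all norms and spectral bounds are taken in the $\mathbf{P}_{\mathrm{d}}$-weighted inner product, i.e., after the change of coordinates $\tilde{\mathbf{e}}=\mathbf{P}_{\mathrm{d}}^{1/2}\mathbf{e}$. In those coordinates $\tilde{\mathbf{A}}_{\mathrm{cl}} = \mathbf{P}_{\mathrm{d}}^{1/2}\mathbf{A}_{\mathrm{cl}}\mathbf{P}_{\mathrm{d}}^{-1/2}$ satisfies $\|\tilde{\mathbf{A}}_{\mathrm{cl}}\|_2^2 = \rho(\mathbf{P}_{\mathrm{d}}^{-1}\mathbf{A}_{\mathrm{cl}}^\top\mathbf{P}_{\mathrm{d}}\mathbf{A}_{\mathrm{cl}}) = \gamma_{\mathrm{cl}}^2$, by similarity of $\tilde{\mathbf{A}}_{\mathrm{cl}}^\top\tilde{\mathbf{A}}_{\mathrm{cl}}$ to $\mathbf{P}_{\mathrm{d}}^{-1}\mathbf{A}_{\mathrm{cl}}^\top\mathbf{P}_{\mathrm{d}}\mathbf{A}_{\mathrm{cl}}$. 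The field $\mathbf{P}$ is likewise interpreted as acting in these coordinates; its transformed version is again symmetric positive definite with respect to the weighted inner product, and $p_i^\pm$ are defined from it. The same submultiplicative argument then applies verbatim with $\tilde{\mathbf{A}}_{\mathrm{cl}}$ in place of $\mathbf{A}_{\mathrm{cl}}$.

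I would state this weighted interpretation explicitly in the proof rather than try to obtain the bound in unweighted norms, since in unweighted norms it can genuinely fail.
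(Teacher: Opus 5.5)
Your proposal is correct and matches the paper's proof step for step. You bound $\|\mathbf{P}(\mathbf{z})\|_2\le p_i^+$ and $\|\mathbf{P}(\mathbf{z}')^{-1}\|_2\le 1/p_j^-$ via Lemma~\ref{lem:eigenvalue_bounds} and Assumption~\ref{asm:grid_fine}, combine these with $\gamma_{\mathrm{cl}}$ by submultiplicativity, and handle the case $\|\mathbf{A}_{\mathrm{cl}}\|_2\ge 1$ through the $\mathbf{P}_{\mathrm{d}}$-weighted coordinate convention, exactly as the paper does.

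One caution for that weighted case: the similarity transform $\mathbf{P}_{\mathrm{d}}^{1/2}\mathbf{P}\mathbf{P}_{\mathrm{d}}^{-1/2}$ is not symmetric in the $\tilde{\mathbf{e}}$ coordinates unless $\mathbf{P}$ commutes with $\mathbf{P}_{\mathrm{d}}$. So the ``verbatim'' step $\|\tilde{\mathbf{P}}\|_2=\lambda_{\max}(\tilde{\mathbf{P}})$ requires reading the field as constructed directly in the weighted coordinates (e.g., from the covariance of $\mathbf{P}_{\mathrm{d}}^{1/2}\mathbf{w}$), which is the same unstated convention the paper relies on.
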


The proof is in Appendix~\ref{app:spectral_product}.

\begin{theorem}[Discrete Graph Invariance via Spectral Bounds]
\label{thm:graph_invariance}
Let $\mathcal{G} = (\mathcal{V}, \mathcal{A})$ satisfy Assumptions~\ref{asm:graph_complete}--\ref{asm:grid_fine}, with spectral bounds $\{p_i^+, p_i^-\}_{i \in \mathcal{V}}$ as in Definition~\ref{def:spectral_bounds}. If for every arc $(i,j) \in \mathcal{A}$
\begin{equation}
\frac{p_i^+}{p_j^-} \cdot \bigl(\gamma_{\mathrm{cl}} + \tfrac{1}{r}\bigr) \leq 1,
\label{eq:spectral_invariance_condition}
\end{equation}
then the tube field $\boldsymbol{\Omega}(\mathbf{z}) = \mathbf{P}(\mathbf{z})\bar{\boldsymbol{\Omega}}$ with $\bar{\boldsymbol{\Omega}} = r\mathcal{B}_2^n$ is locally RPI over $\mathcal{Z}$.
\end{theorem}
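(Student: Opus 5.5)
The argument reduces the arc-wise condition~\eqref{eq:spectral_invariance_condition} to the pointwise spectral sufficient condition of Corollary~\ref{cor:spectral}, and from there to the containment~\eqref{eq:aniso_rpi_cond} of Theorem~\ref{thm:aniso_rpi}.

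First fix an arbitrary admissible transition $\mathbf{z}_k \to \mathbf{z}_{k+1}$ of the closed loop, with $\mathbf{e}_k \in \boldsymbol{\Omega}(\mathbf{z}_k)$ and $\mathbf{w}_k \in \mathbb{W}(\mathbf{z}_k)$. Let $\mathbf{z}_k \in \mathcal{R}_i$ and $\mathbf{z}_{k+1} \in \mathcal{R}_j$; if a point lies on a shared boundary, pick any containing region. By Assumption~\ref{asm:graph_complete}, $(i,j) \in \mathcal{A}$, so~\eqref{eq:spectral_invariance_condition} holds for this pair. Assumption~\ref{asm:grid_fine} gives $p_j^- > 0$, so the ratio is well defined. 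By Lemma~\ref{lem:eigenvalue_bounds}, $\mathbf{P}(\mathbf{z}_{k+1})$ is invertible with $\|\mathbf{P}(\mathbf{z}_{k+1})^{-1}\|_2 \le 1/p_j^-$.

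Next, Lemma~\ref{lem:spectral_product_bounds} applies to $\mathbf{M} = \mathbf{P}'^{-1}\mathbf{A}_{\mathrm{cl}}\mathbf{P}$ and $\mathbf{N} = \tfrac1r \mathbf{P}'^{-1}\mathbf{P}$ with $\mathbf{P}=\mathbf{P}(\mathbf{z}_k)$ and $\mathbf{P}'=\mathbf{P}(\mathbf{z}_{k+1})$. It gives
\[
\|\mathbf{M}\|_2 + \|\mathbf{N}\|_2 \le \frac{p_i^+}{p_j^-}\bigl(\gamma_{\mathrm{cl}} + \tfrac1r\bigr) \le 1 .
\]
Now take $\boldsymbol{\xi},\boldsymbol{\eta} \in \mathcal{B}_2^n$. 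The triangle inequality yields $\|\mathbf{M}\boldsymbol{\xi} + \mathbf{N}\boldsymbol{\eta}\|_2 \le 1$. After scaling by $r$, this is exactly~\eqref{eq:aniso_rpi_cond} for the templates $(r\mathcal{B}_2^n, \mathcal{B}_2^n)$, i.e.\ the reduction stated after Theorem~\ref{thm:aniso_rpi}. Explicitly, we write $\mathbf{e}_k = \mathbf{P}\,r\boldsymbol{\xi}$ and $\mathbf{w}_k = \mathbf{P}\boldsymbol{\eta}$. Then
\[
\mathbf{P}'^{-1}\mathbf{e}_{k+1} = r(\mathbf{M}\boldsymbol{\xi} + \mathbf{N}\boldsymbol{\eta}) \in r\mathcal{B}_2^n,
\]
so $\mathbf{e}_{k+1} \in \boldsymbol{\Omega}(\mathbf{z}_{k+1})$. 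Since the transition was arbitrary, Theorem~\ref{thm:aniso_rpi} (or this direct computation) gives local RPI over $\mathcal{Z}$. Alternatively, the bound $\|\mathbf{M}\|_2+\|\mathbf{N}\|_2\le1$ certifies~\eqref{eq:main_lmi} with $\lambda_1=\|\mathbf{M}\|_2$ and $\lambda_2=\|\mathbf{N}\|_2$, which could be noted as a consistency check with Theorem~\ref{thm:lmi_rpi}.

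The main obstacle is the bookkeeping around which points the bounds are evaluated at, not the algebra. The condition is stated on arcs between regions, while the dynamics evaluate $\mathbf{P}$ at the actual, non-representative points $\mathbf{z}_k$ and $\mathbf{z}_{k+1}$. Lemma~\ref{lem:spectral_product_bounds} must hold uniformly over $\mathcal{R}_i \times \mathcal{R}_j$, and graph completeness must cover the realized successor $\mathbf{u}_{k+1} = \mathbf{K}\mathbf{x}_{k+1} + \mathbf{v}_{k+1}$. When $\|\mathbf{A}_{\mathrm{cl}}\|_2 \ge 1$, I would also state that every norm, including $\gamma_{\mathrm{cl}}$ and $p_i^\pm$, is taken in the $\mathbf{P}_{\mathrm{d}}$-weighted coordinates, so that $\|\mathbf{A}_{\mathrm{cl}}\|_2 \le \gamma_{\mathrm{cl}}$ is legitimately used inside Lemma~\ref{lem:spectral_product_bounds}.
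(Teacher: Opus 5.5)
Your proposal is correct and follows essentially the same route as the paper. Both arguments reduce \eqref{eq:aniso_rpi_cond} with templates $(r\mathcal{B}_2^n,\mathcal{B}_2^n)$ to $\|\mathbf{M}\boldsymbol{\xi}+\mathbf{N}\boldsymbol{\eta}\|_2\le 1$, bound this by the triangle inequality and by Lemma~\ref{lem:spectral_product_bounds} (which holds uniformly over $\mathcal{R}_i\times\mathcal{R}_j$), apply the arc condition, and use Assumption~\ref{asm:graph_complete} to cover every realized transition before invoking Theorem~\ref{thm:aniso_rpi}. Your side remark is also valid: with $a=\|\mathbf{M}\|_2$ and $b=\|\mathbf{N}\|_2$, the choice $\lambda_1=a$, $\lambda_2=b$ certifies \eqref{eq:main_lmi}, because $(a\|\boldsymbol{\xi}\|_2+b\|\boldsymbol{\eta}\|_2)^2\le(a+b)(a\|\boldsymbol{\xi}\|_2^2+b\|\boldsymbol{\eta}\|_2^2)$ and $a+b\le 1$.
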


The proof is in Appendix~\ref{app:graph_invariance}.

\subsubsection{LMI Verification}

For applications requiring tighter bounds, we provide an LMI-based alternative with explicit conservatism quantification.

\begin{theorem}[LMI Verification with Robust Margin]
\label{thm:lmi_robust}
Let the spectral bounds $\{p_i^+, p_i^-\}$ be as in Definition~\ref{def:spectral_bounds}, and define $\bar{M}_{ij} := \frac{p_i^+}{p_j^-}\gamma_{\mathrm{cl}}$ and $\bar{N}_{ij} := \frac{1}{r}\frac{p_i^+}{p_j^-}$. If for every arc $(i,j) \in \mathcal{A}$ the LMI
\begin{equation}
\begin{bmatrix} \lambda_1\mathbf{I}_n - \bar{M}_{ij}^2 \mathbf{I}_n & -\bar{M}_{ij}\bar{N}_{ij}\mathbf{I}_n \\ -\bar{M}_{ij}\bar{N}_{ij}\mathbf{I}_n & \lambda_2\mathbf{I}_n - \bar{N}_{ij}^2 \mathbf{I}_n \end{bmatrix} \succeq \mathbf{0}
\label{eq:scalar_lmi}
\end{equation}
is feasible for some $\lambda_1,\lambda_2 \geq 0$ with $\lambda_1+\lambda_2\leq 1$, then the tube field is locally RPI.
\end{theorem}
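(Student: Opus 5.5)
The plan is to reduce the claim to the quadratic implication behind Theorem~\ref{thm:aniso_rpi}, namely $\|\mathbf{M}\boldsymbol{\xi}+\mathbf{N}\boldsymbol{\eta}\|_2\le 1$ whenever $\|\boldsymbol{\xi}\|_2\le1$ and $\|\boldsymbol{\eta}\|_2\le1$. This implication is what condition~\eqref{eq:aniso_rpi_cond} amounts to with templates $(\mathcal{B}_2^n, r\mathcal{B}_2^n)$, once we divide by $r$. We then certify it for every actual transition $\mathbf{z}\to\mathbf{z}'$ using only the scalar bounds $\bar M_{ij},\bar N_{ij}$ on the arc containing that transition.

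\textbf{Step 1 (localization).} Fix a transition $\mathbf{z}\to\mathbf{z}'$ generated by~\eqref{eq:dynamics} with $\mathbf{w}\in\mathbb{W}(\mathbf{z})$, and take $\mathbf{z}\in\mathcal{R}_i$, $\mathbf{z}'\in\mathcal{R}_j$. By Assumption~\ref{asm:graph_complete}, $(i,j)\in\mathcal{A}$. Assumption~\ref{asm:grid_fine} ensures $p_j^->0$, so $\mathbf{P}(\mathbf{z}')$ is invertible. Lemma~\ref{lem:spectral_product_bounds} then gives $\|\mathbf{M}\|_2\le\bar M_{ij}$ and $\|\mathbf{N}\|_2\le\bar N_{ij}$.

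\textbf{Step 2 (scalar S-procedure).} Feasibility of~\eqref{eq:scalar_lmi} with $\lambda_1+\lambda_2\le1$ is equivalent, after an orthogonal permutation, to the $2\times2$ matrix
\[
\begin{bmatrix}\lambda_1-\bar M_{ij}^2 & -\bar M_{ij}\bar N_{ij}\\ -\bar M_{ij}\bar N_{ij} & \lambda_2-\bar N_{ij}^2\end{bmatrix}
\]
being PSD. This is the same as $\lambda_1,\lambda_2\ge0$ and $\bar M_{ij}^2/\lambda_1+\bar N_{ij}^2/\lambda_2\le1$, via the Schur complement, with the usual convention when some $\lambda_k=0$.

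Now take $\|\boldsymbol{\xi}\|_2\le1$ and $\|\boldsymbol{\eta}\|_2\le1$. Write $a=\|\boldsymbol{\xi}\|_2$ and $b=\|\boldsymbol{\eta}\|_2$. Then
\[
\|\mathbf{M}\boldsymbol{\xi}+\mathbf{N}\boldsymbol{\eta}\|_2^2\le(\bar M_{ij}a+\bar N_{ij}b)^2 .
\]
Positive semidefiniteness of the $2\times2$ matrix, applied to the vector $(a,b)$, gives
\[
(\bar M_{ij}a+\bar N_{ij}b)^2\le\lambda_1a^2+\lambda_2b^2\le\lambda_1+\lambda_2\le1 .
\]
So the implication holds.

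A cleaner alternative route is to substitute the bounds directly into the matrix-valued quadratic form of~\eqref{eq:main_lmi}. For any $\boldsymbol{\zeta}=(\boldsymbol{\xi},\boldsymbol{\eta})$,
\[
\boldsymbol{\zeta}^\top\bigl[\,\cdot\,\bigr]\boldsymbol{\zeta}=\lambda_1a^2+\lambda_2b^2-\|\mathbf{M}\boldsymbol{\xi}+\mathbf{N}\boldsymbol{\eta}\|_2^2\ge\lambda_1a^2+\lambda_2b^2-(\bar M_{ij}a+\bar N_{ij}b)^2\ge0 .
\]
Hence~\eqref{eq:main_lmi} holds with the same multipliers for the true $\mathbf{M},\mathbf{N}$, and Theorem~\ref{thm:lmi_rpi} applies. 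I would present this version, since it reuses the earlier theorem verbatim.

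\textbf{Step 3 (conclusion).} Since the transition was arbitrary, Theorem~\ref{thm:lmi_rpi} (equivalently, Theorem~\ref{thm:aniso_rpi} via the reduced implication) gives local RPI of $\boldsymbol{\Omega}(\mathbf{z})=\mathbf{P}(\mathbf{z})\bar{\boldsymbol{\Omega}}$.

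The main obstacle is Step 2's domination argument. The scalar LMI involves only norm bounds, whereas~\eqref{eq:main_lmi} contains the cross term $\mathbf{M}^\top\mathbf{N}$. The key observation is that the quadratic form of~\eqref{eq:main_lmi} depends on $(\boldsymbol{\xi},\boldsymbol{\eta})$ only through $\|\mathbf{M}\boldsymbol{\xi}+\mathbf{N}\boldsymbol{\eta}\|_2$, together with $a$ and $b$. The triangle inequality reduces that term to the scalar $2\times2$ form evaluated at the nonnegative vector $(a,b)$.

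Care is needed on one point. PSD-ness of the scalar matrix must be used on the specific nonnegative vector $(a,b)$, not merely through its eigenvalues. This is fine, because PSD holds for all vectors.
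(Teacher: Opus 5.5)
Your proof is correct and has the same skeleton as the paper's. On every transition covered by an arc, feasibility of~\eqref{eq:scalar_lmi} is transferred to feasibility of~\eqref{eq:main_lmi} with the same multipliers, and Theorem~\ref{thm:lmi_rpi} concludes.

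The difference is in how that transfer is justified. The paper cites only $\mathbf{M}^\top\mathbf{M}\preceq\bar M_{ij}^2\mathbf{I}_n$ and $\mathbf{N}^\top\mathbf{N}\preceq\bar N_{ij}^2\mathbf{I}_n$. These bounds control the diagonal blocks but say nothing about replacing $-\mathbf{M}^\top\mathbf{N}$ by $-\bar M_{ij}\bar N_{ij}\mathbf{I}_n$. In fact the two block matrices are not L\"owner-ordered in general. Take $\mathbf{A}_{\mathrm{cl}}=-\gamma_{\mathrm{cl}}\mathbf{I}_n$ with a constant field, so that $\mathbf{M}=-\bar M_{ij}\mathbf{I}_n$ and $\mathbf{N}=\bar N_{ij}\mathbf{I}_n$. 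Then the matrix of~\eqref{eq:main_lmi} minus that of~\eqref{eq:scalar_lmi} has zero diagonal blocks and off-diagonal blocks $2\bar M_{ij}\bar N_{ij}\mathbf{I}_n$, so it is indefinite whenever $\gamma_{\mathrm{cl}}>0$. The theorem still holds in this example, but the diagonal bounds alone do not deliver the implication.

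Your argument avoids this. You compare the quadratic form of~\eqref{eq:main_lmi} at $(\boldsymbol{\xi},\boldsymbol{\eta})$ with the $2\times2$ form evaluated at the different vector $(\|\boldsymbol{\xi}\|_2,\|\boldsymbol{\eta}\|_2)$, using the triangle inequality and $\bar M_{ij},\bar N_{ij}\ge0$. This is exactly the step the paper's one-line proof leaves implicit. Your closing remark, that positive semidefiniteness must be applied to that specific nonnegative vector rather than through a L\"owner comparison, is the crux.

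The paper's version buys brevity; yours actually closes the cross-term step. You also make explicit the reliance on Assumptions~\ref{asm:graph_complete} and~\ref{asm:grid_fine}, through arc coverage and Lemma~\ref{lem:spectral_product_bounds}. The statement inherits these silently from Theorem~\ref{thm:graph_invariance}.
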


\begin{proof}
The LMI~\eqref{eq:scalar_lmi} is Theorem~\ref{thm:lmi_rpi} with $\mathbf{M}^\top\mathbf{M}$, $\mathbf{N}^\top\mathbf{N}$, and $\mathbf{M}^\top\mathbf{N}$ replaced by scalar upper bounds. Since $\|\mathbf{M}\|_2 \leq \bar{M}_{ij}$ and $\|\mathbf{N}\|_2 \leq \bar{N}_{ij}$ imply $\mathbf{M}^\top\mathbf{M} \preceq \bar{M}_{ij}^2\mathbf{I}_n$ and $\mathbf{N}^\top\mathbf{N} \preceq \bar{N}_{ij}^2\mathbf{I}_n$, feasibility of~\eqref{eq:scalar_lmi} implies feasibility of~\eqref{eq:main_lmi} for all matrices satisfying the spectral bounds.
\end{proof}

For scalars, feasibility of~\eqref{eq:scalar_lmi} is equivalent to $\bar{M}_{ij}+\bar{N}_{ij}\leq 1$, recovering condition~\eqref{eq:spectral_invariance_condition}. Thus~\eqref{eq:spectral_invariance_condition} is the scalar upper-bound version of the full LMI. It is conservative because it uses the triangle inequality; in the scalar shorthand, the gap between $a+b\leq1$ and $a^2+b^2\leq1$ is at most $\sqrt{2}$.

\subsubsection{Existence and Ordering of Admissible Fields}

Define the \emph{admissible set} $\mathscr{A} := \{\mathbf{P}: \mathcal{Z} \to \mathbb{S}^n_{++} \mid \text{\eqref{eq:spectral_invariance_condition} holds for all } (i,j) \in \mathcal{A}\}$.

\begin{proposition}[Existence via Isotropic Fallback]
\label{prop:existence}
For any $p > 0$, the constant isotropic field $\mathbf{P}(\mathbf{z}) \equiv p\,\mathbf{I}_n$ satisfies~\eqref{eq:spectral_invariance_condition} for all $(i,j) \in \mathcal{A}$.
\end{proposition}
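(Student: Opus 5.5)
The plan is to verify condition~\eqref{eq:spectral_invariance_condition} directly, by computing the spectral bound parameters of Definition~\ref{def:spectral_bounds} for the constant field $\mathbf{P}(\mathbf{z}) \equiv p\,\mathbf{I}_n$. The first step uses the Lipschitz constant. A constant field is Lipschitz with any constant. In particular the minimal constant is $L_P = 0$, although Assumption~\ref{asm:P_regularity} nominally asks for $L_P>0$. I will handle this by taking $L_P \to 0^+$, or equivalently by observing that the deviation bound in Lemma~\ref{lem:eigenvalue_bounds} is exact with $\delta_i = 0$ for a constant field. Then $\delta_i = L_P\,\mathrm{diam}(\mathcal{R}_i) = 0$ for every region. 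This gives $p_i^+ = \lambda_{\max}(p\mathbf{I}_n) = p$ and $p_i^- = \lambda_{\min}(p\mathbf{I}_n) = p$ for every node $i$. Assumption~\ref{asm:grid_fine} holds trivially, since $0 < p$.

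The second step evaluates the arc condition. For every arc $(i,j)\in\mathcal{A}$ the ratio is $p_i^+/p_j^- = 1$, so~\eqref{eq:spectral_invariance_condition} reduces to $\gamma_{\mathrm{cl}} + 1/r \le 1$. Substituting the template inflation factor~\eqref{eq:template_inflation}, $r = 1/(1-\gamma_{\mathrm{cl}})$, gives $1/r = 1-\gamma_{\mathrm{cl}}$. The left-hand side therefore equals exactly $1$, and the condition holds with equality. This is the minimality statement $\gamma_{\mathrm{cl}} r + 1 \le r$ noted after~\eqref{eq:template_inflation}. The argument never uses the arc structure, so it holds uniformly over $\mathcal{A}$ regardless of how the graph was built. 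The same computation covers the $\mathbf{P}_{\mathrm{d}}$-weighted case, because $\gamma_{\mathrm{cl}}$ and the norms are then read in the transformed coordinates, where $p\mathbf{I}_n$ is still a scalar multiple of the identity.

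The only delicate point is the treatment of $\delta_i$. If the verification is run with a fixed positive $L_P$ inherited from the GP field rather than the field's own constant, then $p_i^+/p_j^- = (p+\delta_i)/(p-\delta_j) > 1$, and the equality-tight condition would fail. I would therefore state explicitly that $L_P$ in Definition~\ref{def:region_deviation} is the Lipschitz constant of the field under test, which is zero here. As a cross-check independent of the grid, I would note the direct computation $\mathbf{M} = \mathbf{A}_{\mathrm{cl}}$ and $\mathbf{N} = \tfrac{1}{r}\mathbf{I}_n$ in Corollary~\ref{cor:spectral}, with $\kappa_P = 1$. This gives $\gamma_{\mathrm{cl}} + 1/r = 1$ and confirms admissibility.
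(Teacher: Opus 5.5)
Your proof is correct and matches the paper's argument: a constant field gives $\delta_i = 0$, hence $p_i^+ = p_j^- = p$, and the arc condition collapses to $\gamma_{\mathrm{cl}} + 1/r = 1 \le 1$ by the choice $r = 1/(1-\gamma_{\mathrm{cl}})$. Your remark that $L_P$ must be the Lipschitz constant of the field under test is a worthwhile clarification that the paper leaves implicit. That constant is zero here, despite the nominal requirement $L_P>0$ in Assumption~\ref{asm:P_regularity}, and any positive $\delta_i$ would break the equality-tight condition.
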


\begin{proof}
For a constant field, $\delta_i = 0$ for all $i$, so $p_i^+ = p_i^- = p$ and $p_i^+/p_j^- = 1$. The condition evaluates to $\gamma_{\mathrm{cl}} + 1/r = \gamma_{\mathrm{cl}} + (1 - \gamma_{\mathrm{cl}}) = 1 \leq 1$, with equality; the template inflation factor $r$ is precisely chosen to saturate this bound.
\end{proof}

The equality is not coincidental: $r = 1/(1-\gamma_{\mathrm{cl}})$ saturates the bound, so any non-constant field must extract its invariance margin from local regularity ($\delta_i \to 0$ on fine grids) rather than template headroom. The isotropic fallback guarantees $\mathscr{A} \neq \emptyset$ but captures no directional structure. The more interesting question is whether the GP-derived field is admissible.

\begin{corollary}[Admissibility of GP-Derived Field]
\label{thm:gp_sufficiency}
Let $\mathbf{P}^{\mathrm{GP}}(\mathbf{z}) = c_{n,\alpha}\hat{\boldsymbol{\Sigma}}_w(\mathbf{z})^{1/2}$ satisfy Assumption~\ref{asm:P_regularity}, and let $\mathcal{G} = (\mathcal{V}, \mathcal{A})$ satisfy Assumptions~\ref{asm:graph_complete}--\ref{asm:grid_fine}. If~\eqref{eq:spectral_invariance_condition} holds for all $(i,j) \in \mathcal{A}$ under $\mathbf{P}^{\mathrm{GP}}$, then $\boldsymbol{\Omega}(\mathbf{z}) = \mathbf{P}^{\mathrm{GP}}(\mathbf{z})\bar{\boldsymbol{\Omega}}$ is locally RPI.
\end{corollary}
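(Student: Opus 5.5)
The corollary is a direct specialization of Theorem~\ref{thm:graph_invariance} to the particular field $\mathbf{P}^{\mathrm{GP}}$. The plan is to check the hypotheses of that theorem one by one and then invoke it.

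\textbf{Step 1: the field hypotheses.} Theorem~\ref{thm:graph_invariance} requires a field $\mathbf{P}:\mathcal{Z}\to\mathbb{S}^n_{++}$ satisfying Assumption~\ref{asm:P_regularity}, since its proof runs through Lemma~\ref{lem:eigenvalue_bounds} and Lemma~\ref{lem:spectral_product_bounds}. This is assumed directly in the corollary. Alternatively, under Assumption~\ref{asm:kernel} it follows from Lemmas~\ref{lem:P_bounded}--\ref{lem:P_lipschitz} via Proposition~\ref{prop:learned_field}(i)--(iii). So $\mathbf{P}^{\mathrm{GP}}(\mathbf{z})\succ\mathbf{0}$, the uniform Löwner bounds hold, and the Lipschitz constant $L_P$ is available. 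The deviations $\delta_i=L_P\,\mathrm{diam}(\mathcal{R}_i)$ and the spectral bounds $p_i^\pm$ of Definition~\ref{def:spectral_bounds} are therefore well defined for $\mathbf{P}^{\mathrm{GP}}$.

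\textbf{Step 2: the graph hypotheses and the arc inequality.} Assumptions~\ref{asm:graph_complete}--\ref{asm:grid_fine} are imposed verbatim. In particular, $p_i^->0$ for all $i$, so the ratios $p_i^+/p_j^-$ are finite. The arc-wise inequality~\eqref{eq:spectral_invariance_condition} under $\mathbf{P}^{\mathrm{GP}}$ is the stated hypothesis. Equivalently, $\mathbf{P}^{\mathrm{GP}}\in\mathscr{A}$.

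\textbf{Step 3: the template pair.} Theorem~\ref{thm:graph_invariance} uses $\bar{\boldsymbol{\Omega}}=r\mathcal{B}_2^n$ with $r$ from~\eqref{eq:template_inflation}, and implicitly Theorem~\ref{thm:aniso_rpi}, which needs $\bar{\boldsymbol{\Omega}}$ to be RPI for $(\mathbf{A}_{\mathrm{cl}},\mathcal{B}_2^n)$ with $\rho_{\bar{\boldsymbol{\Omega}}}<1$. The first property follows from $\gamma_{\mathrm{cl}}r+1\le r$. The second follows from Lemma~\ref{lem:contraction}, because $\mathbf{0}\in\mathrm{int}(\mathcal{B}_2^n)$. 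Applying Theorem~\ref{thm:graph_invariance} then gives that $\mathbf{P}^{\mathrm{GP}}(\mathbf{z})\bar{\boldsymbol{\Omega}}$ is locally RPI over $\mathcal{Z}$.

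\textbf{Main obstacle.} The only subtle point is the case $\|\mathbf{A}_{\mathrm{cl}}\|_2\ge1$. There, $\gamma_{\mathrm{cl}}$ and the spectral bounds are understood in the $\mathbf{P}_{\mathrm{d}}$-weighted coordinates. One must check that the Lipschitz and eigenvalue bounds for $\mathbf{P}^{\mathrm{GP}}$ are expressed consistently in those coordinates, so that Lemma~\ref{lem:spectral_product_bounds} applies. I would handle this by noting that the paper stipulates all $p_i^\pm$ and norms are computed in that inner product. The corollary's hypothesis is then read in those same coordinates, and the argument reduces to the Euclidean case for the transformed pair $(\tilde{\mathbf{A}}_{\mathrm{cl}},\tilde{\mathbf{P}})$. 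Beyond this bookkeeping, the proof is a one-line invocation.

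A final remark: local RPI is a deterministic statement relative to the modeled sets $\mathbb{W}(\mathbf{z})=\mathbf{P}^{\mathrm{GP}}(\mathbf{z})\bar{\mathbf{W}}$. The probabilistic content of Proposition~\ref{prop:learned_field}(iv) is not needed and is not claimed.
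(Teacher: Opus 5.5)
Your proposal is correct and takes the same route as the paper, which gives no separate proof and treats the corollary as an immediate instance of Theorem~\ref{thm:graph_invariance} with $\mathbf{P}=\mathbf{P}^{\mathrm{GP}}$; every hypothesis of that theorem (Assumption~\ref{asm:P_regularity}, Assumptions~\ref{asm:graph_complete}--\ref{asm:grid_fine}, and the arc condition~\eqref{eq:spectral_invariance_condition}) is assumed verbatim. Your additional checks on the template pair and the $\mathbf{P}_{\mathrm{d}}$-weighted coordinates are harmless bookkeeping that matches the paper's own stipulations.
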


The admissible set carries a natural partial order under the L\"owner ordering, with smaller fields yielding tighter tubes.

\begin{proposition}[Ordering of Admissible Fields]
\label{prop:field_ordering}
Let $\boldsymbol{\Sigma}_1, \boldsymbol{\Sigma}_2: \mathcal{Z} \to \mathbb{S}^n_{++}$ with $\boldsymbol{\Sigma}_1(\mathbf{z}) \preceq \boldsymbol{\Sigma}_2(\mathbf{z})$ pointwise, and let $\mathbf{P}_i := c\,\boldsymbol{\Sigma}_i^{1/2}$ for a scalar $c > 0$. If both $\mathbf{P}_1, \mathbf{P}_2$ satisfy~\eqref{eq:spectral_invariance_condition}, then $\boldsymbol{\Omega}_1(\mathbf{z}) := \mathbf{P}_1(\mathbf{z})\bar{\boldsymbol{\Omega}} \subseteq \mathbf{P}_2(\mathbf{z})\bar{\boldsymbol{\Omega}} =: \boldsymbol{\Omega}_2(\mathbf{z})$ for all $\mathbf{z}\in\mathcal{Z}$, and both are locally RPI.
\end{proposition}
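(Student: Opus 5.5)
The statement has two independent parts: the pointwise containment of the tubes, and the local RPI property of each tube field.

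\textbf{Containment.} First, since the matrix square root is operator monotone on $\mathbb{S}^n_{+}$, the pointwise ordering $\boldsymbol{\Sigma}_1(\mathbf{z}) \preceq \boldsymbol{\Sigma}_2(\mathbf{z})$ gives $\boldsymbol{\Sigma}_1(\mathbf{z})^{1/2} \preceq \boldsymbol{\Sigma}_2(\mathbf{z})^{1/2}$. Multiplying by $c>0$ yields $\mathbf{P}_1(\mathbf{z}) \preceq \mathbf{P}_2(\mathbf{z})$. This L\"owner--Heinz step is the only nontrivial ingredient. If a self-contained argument is preferred, I would use the standard trick: for $\mathbf{A},\mathbf{B}\succ\mathbf{0}$ with $\mathbf{A}^2\preceq\mathbf{B}^2$, take an eigenpair $(\mu,\mathbf{v})$ of $\mathbf{A}-\mathbf{B}$. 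Then $\mathbf{v}^\top(\mathbf{A}^2-\mathbf{B}^2)\mathbf{v} = \mu\,\mathbf{v}^\top(\mathbf{A}+\mathbf{B})\mathbf{v} \le 0$, and since $\mathbf{A}+\mathbf{B}\succ\mathbf{0}$ this forces $\mu\le0$.

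Next I convert the Löwner inequality into set containment. Since $\bar{\boldsymbol{\Omega}}=r\mathcal{B}_2^n$, we have $\mathbf{P}_k\bar{\boldsymbol{\Omega}} = \{\mathbf{e}: \mathbf{e}^\top\mathbf{P}_k^{-2}\mathbf{e}\le r^2\}$ by Lemma~\ref{lem:gauge_props}(iv), as noted after that lemma. Because $\mathbf{P}_1\preceq\mathbf{P}_2$ with both positive definite, squaring is not monotone, so I avoid it. The cleanest route is support functions: $h_{\mathbf{P}\bar{\boldsymbol{\Omega}}}(\mathbf{y}) = r\|\mathbf{P}\mathbf{y}\|_2 = r\sqrt{\mathbf{y}^\top\mathbf{P}^2\mathbf{y}}$. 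This still needs $\mathbf{P}_1^2\preceq\mathbf{P}_2^2$, which is exactly $\boldsymbol{\Sigma}_1\preceq\boldsymbol{\Sigma}_2$ up to the factor $c^2$. So I work directly with $\boldsymbol{\Sigma}$: $\|\mathbf{P}_k\mathbf{y}\|_2^2 = c^2\mathbf{y}^\top\boldsymbol{\Sigma}_k\mathbf{y}$. Hence $h_{\boldsymbol{\Omega}_1}\le h_{\boldsymbol{\Omega}_2}$ pointwise, and for compact convex sets this gives $\boldsymbol{\Omega}_1(\mathbf{z})\subseteq\boldsymbol{\Omega}_2(\mathbf{z})$. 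Working with $\boldsymbol{\Sigma}$ also bypasses the square-root monotonicity entirely, so I would present this version as the main argument.

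\textbf{Invariance.} For the second claim, each $\mathbf{P}_k$ is assumed to satisfy \eqref{eq:spectral_invariance_condition} on every arc. Provided the graph and regularity hypotheses (Assumptions~\ref{asm:P_regularity}, \ref{asm:graph_complete}, \ref{asm:grid_fine}) are in force for each field, Theorem~\ref{thm:graph_invariance} applies to each one separately and gives local RPI.

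The main point needing care is that the spectral bounds $p_i^\pm$, and possibly the arc set, depend on the field. Each field must therefore be verified with its own $\delta_i$ and its own graph. The hypothesis ``both satisfy~\eqref{eq:spectral_invariance_condition}'' is read in exactly this per-field sense, so no comparison of the two fields' spectral bounds is required.
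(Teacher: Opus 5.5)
Your proof is correct. The invariance half coincides with the paper's (apply Theorem~\ref{thm:graph_invariance} to each field separately), but the containment half takes a dual route.

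The paper argues in the primal. For $\mathbf{e}=\boldsymbol{\Sigma}_1^{1/2}\boldsymbol{\xi}$ it sets $\mathbf{C}:=\boldsymbol{\Sigma}_2^{-1/2}\boldsymbol{\Sigma}_1^{1/2}$ and uses the congruence $\mathbf{C}\mathbf{C}^\top=\boldsymbol{\Sigma}_2^{-1/2}\boldsymbol{\Sigma}_1\boldsymbol{\Sigma}_2^{-1/2}\preceq\mathbf{I}_n$ to get $\|\mathbf{C}\|_2\le 1$. Hence $\boldsymbol{\Sigma}_2^{-1/2}\mathbf{e}=\mathbf{C}\boldsymbol{\xi}$ stays in the unit ball.

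You instead compare support functions, $h_{\boldsymbol{\Omega}_k(\mathbf{z})}(\mathbf{y})=rc\sqrt{\mathbf{y}^\top\boldsymbol{\Sigma}_k(\mathbf{z})\mathbf{y}}$. You then appeal to the separation-theorem fact that $h_A\le h_B$ implies $A\subseteq B$ when $B$ is closed and convex.

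What each route buys:
\begin{itemize}
\item Your version is a one-line comparison once that duality fact is granted. It never inverts $\boldsymbol{\Sigma}_2$, so it would also cover degenerate, merely semidefinite covariances. It also makes visible that each cross-section depends only on the quadratic form $\boldsymbol{\Sigma}_k$.
\item The paper's version is fully elementary and constructive, exhibiting the explicit preimage $\mathbf{C}\boldsymbol{\xi}$.
\end{itemize}
Both rest on the observation the paper records in the remark after the proposition: the ordering that controls containment is on $\boldsymbol{\Sigma}$, not on $\mathbf{P}$.

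Consequently your opening L\"owner--Heinz paragraph is an unused detour rather than ``the only nontrivial ingredient'', even though its eigenvector argument is correct. The reason is that $\mathbf{P}_1\preceq\mathbf{P}_2$ alone does not give $\mathbf{P}_1\bar{\boldsymbol{\Omega}}\subseteq\mathbf{P}_2\bar{\boldsymbol{\Omega}}$, so you should drop that paragraph.

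Your closing caveat is correct and useful. The quantities $\delta_i$, $p_i^{\pm}$, and even the arc set (built from $\mathbb{W}(\mathbf{z})=\mathbf{P}(\mathbf{z})\bar{\mathbf{W}}$) are field-dependent. So Assumptions~\ref{asm:graph_complete}--\ref{asm:grid_fine} must hold for each field's own graph, which makes explicit a hypothesis the paper's one-line Part~2 leaves implicit.
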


The proof is in Appendix~\ref{app:field_ordering}. The hypothesis is on $\boldsymbol{\Sigma}_1 \preceq \boldsymbol{\Sigma}_2$ rather than $\mathbf{P}_1 \preceq \mathbf{P}_2$: for generic SPD matrices, $\mathbf{P}_1 \preceq \mathbf{P}_2$ does not imply $\mathbf{P}_1\mathcal{B}_2^n \subseteq \mathbf{P}_2\mathcal{B}_2^n$, since $\mathbf{P}_2^{-1}\mathbf{P}_1$ is generically non-symmetric. The structure $\mathbf{P}_i = c\,\boldsymbol{\Sigma}_i^{1/2}$ supplies the missing algebra, and the GP variance reduction gives $\hat{\boldsymbol{\Sigma}}_w^{(q+1)} \preceq \hat{\boldsymbol{\Sigma}}_w^{(q)}$ directly.

\subsubsection{Monotone Tube Contraction}
\label{sec:monotone}

The preceding results yield the paper's central guarantee.

\begin{theorem}[Monotone Tube Contraction]
\label{thm:monotone}
With fixed GP hyperparameters and nested dictionaries $\mathcal{D}^{(q)}\subseteq\mathcal{D}^{(q+1)}$, three monotonicity properties hold for all $\mathbf{z}\in\mathcal{Z}$: \textbf{(i)}~$\hat{\boldsymbol{\Sigma}}_w^{(q+1)}(\mathbf{z}) \preceq \hat{\boldsymbol{\Sigma}}_w^{(q)}(\mathbf{z})$; \textbf{(ii)}~$\mathbf{P}^{(q+1)}(\mathbf{z}) \preceq \mathbf{P}^{(q)}(\mathbf{z})$; \textbf{(iii)}~$\boldsymbol{\Omega}^{(q+1)}(\mathbf{z}) \subseteq \boldsymbol{\Omega}^{(q)}(\mathbf{z})$.
\end{theorem}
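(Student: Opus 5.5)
The plan is to prove the three items in sequence, each feeding the next.

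\textbf{Item (i).} For fixed hyperparameters, the LMC posterior covariance at $\mathbf{z}$ is the Schur complement
\[
\hat{\boldsymbol{\Sigma}}_w^{(q)}(\mathbf{z}) = \boldsymbol{\Sigma}_0(\mathbf{z}) - \mathbf{K}_{*q}^\top(\mathbf{G}_q+\sigma_n^2\mathbf{I})^{-1}\mathbf{K}_{*q},
\]
where $\mathbf{K}_{*q}$ is the cross-covariance between the test outputs and the dictionary observations. This is the conditional covariance of the Gaussian vector $\mathbf{w}(\mathbf{z})$ given the noisy observations $\mathbf{y}_q$.

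Because the dictionaries are nested, $\mathbf{y}_{q+1}=(\mathbf{y}_q,\tilde{\mathbf{y}})$. I would apply block inversion of the enlarged Gram matrix, equivalently the law of total covariance for jointly Gaussian vectors. This gives
\[
\hat{\boldsymbol{\Sigma}}_w^{(q)} - \hat{\boldsymbol{\Sigma}}_w^{(q+1)} = \mathbf{C}^\top \mathbf{S}^{-1}\mathbf{C} \succeq \mathbf{0}.
\]
Here $\mathbf{S}\succ\mathbf{0}$ is the posterior covariance (at epoch $q$) of the new noisy observations $\tilde{\mathbf{y}}$; it is positive definite since $\sigma_n^2>0$. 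The matrix $\mathbf{C}$ is the posterior cross-covariance between $\tilde{\mathbf{y}}$ and $\mathbf{w}(\mathbf{z})$.

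\textbf{Item (ii).} I would invoke the operator monotonicity of $t\mapsto t^{1/2}$ on $\mathbb{S}^n_+$ (Löwner–Heinz): if $\mathbf{0}\preceq\mathbf{X}\preceq\mathbf{Y}$, then $\mathbf{X}^{1/2}\preceq\mathbf{Y}^{1/2}$. A self-contained argument is also available. Suppose $\mathbf{X}^{1/2}-\mathbf{Y}^{1/2}$ has an eigenvector $\mathbf{v}$ with eigenvalue $\mu>0$. Then
\[
\mathbf{v}^\top(\mathbf{Y}-\mathbf{X})\mathbf{v} = -\mu\,\mathbf{v}^\top(\mathbf{X}^{1/2}+\mathbf{Y}^{1/2})\mathbf{v} < 0,
\]
using $\mathbf{Y}-\mathbf{X}=\tfrac12[(\mathbf{Y}^{1/2}+\mathbf{X}^{1/2})(\mathbf{Y}^{1/2}-\mathbf{X}^{1/2})+(\mathbf{Y}^{1/2}-\mathbf{X}^{1/2})(\mathbf{Y}^{1/2}+\mathbf{X}^{1/2})]$, which is a contradiction. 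Multiplying by $c_{n,\alpha}>0$ then gives $\mathbf{P}^{(q+1)}\preceq\mathbf{P}^{(q)}$.

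\textbf{Item (iii).} This is the main subtlety. As noted after Proposition~\ref{prop:field_ordering}, the ordering $\mathbf{P}_1\preceq\mathbf{P}_2$ alone does not give set containment, so item (ii) cannot be used directly. Instead I would argue from item (i) through the quadratic-form description of the sets. By Lemma~\ref{lem:gauge_props}(iv),
\[
\mathbf{P}\mathcal{B}_2^n = \{\mathbf{e} : \mathbf{e}^\top\mathbf{P}^{-2}\mathbf{e}\le 1\}, \qquad \mathbf{P}^{-2} = c_{n,\alpha}^{-2}\hat{\boldsymbol{\Sigma}}_w^{-1}.
\]
Inversion is order-reversing on $\mathbb{S}^n_{++}$, which needs positive definiteness from Lemma~\ref{lem:P_bounded}. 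Hence item (i) gives $(\hat{\boldsymbol{\Sigma}}_w^{(q+1)})^{-1}\succeq(\hat{\boldsymbol{\Sigma}}_w^{(q)})^{-1}$. Therefore any $\mathbf{e}$ with $\mathbf{e}^\top (\mathbf{P}^{(q+1)})^{-2}\mathbf{e}\le r^2$ also satisfies $\mathbf{e}^\top(\mathbf{P}^{(q)})^{-2}\mathbf{e}\le r^2$. This yields $\boldsymbol{\Omega}^{(q+1)}(\mathbf{z}) = r\mathbf{P}^{(q+1)}\mathcal{B}_2^n\subseteq r\mathbf{P}^{(q)}\mathcal{B}_2^n=\boldsymbol{\Omega}^{(q)}(\mathbf{z})$, which is exactly the algebra used in Proposition~\ref{prop:field_ordering}; I would cite that argument rather than repeat it.

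Two caveats would be stated explicitly. Item (iii) is pure set containment and does not assert admissibility of either field. If RPI of each field is also wanted, it must come from Corollary~\ref{thm:gp_sufficiency} applied at each epoch.
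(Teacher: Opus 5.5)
Your proposal is correct and follows essentially the paper's proof: (i) is the block-inversion/conditioning identity (the paper's Route C increment $\mathbf{R}\boldsymbol{\Delta}^{-1}\mathbf{R}^\top$ is your $\mathbf{C}^\top\mathbf{S}^{-1}\mathbf{C}$), (ii) is L\"owner--Heinz, and (iii) is derived from (i) rather than (ii). The only difference is cosmetic. For (iii) you use order reversal of the inverse on the quadratic forms, whereas Proposition~\ref{prop:field_ordering} uses the congruence $\boldsymbol{\Sigma}_2^{-1/2}\boldsymbol{\Sigma}_1\boldsymbol{\Sigma}_2^{-1/2}\preceq\mathbf{I}_n$, which needs only $\boldsymbol{\Sigma}_2\succ\mathbf{0}$; your direct argument has the small merit of not invoking that proposition's admissibility hypothesis, which the theorem does not assume.
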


The proof is in Appendix~\ref{app:monotone}. Part (i) is the variance-reduction step, proved three ways (Routes A--C). Since covariance shrinkage can reduce $p_i^+$ and $p_j^-$ at different rates, condition~\eqref{eq:spectral_invariance_condition} is re-checked after each GP update.

\begin{proposition}[Sufficient Condition for Invariance Preservation]
\label{prop:preservation}
If at epoch $q$ the uniform condition $\frac{p_{\max}^{(q)}}{p_{\min}^{(q)}}\cdot(\gamma_{\mathrm{cl}}+1/r)\le 1$ holds, with $p_{\max}^{(q)} = \max_i p_i^+$ and $p_{\min}^{(q)} = \min_j p_j^-$ the extreme node bounds, then the arc-wise condition~\eqref{eq:spectral_invariance_condition} is satisfied for all $(i,j)\in\mathcal{A}$.

\end{proposition}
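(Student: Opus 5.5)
The argument is a direct monotonicity comparison. Every arc-wise ratio $p_i^+/p_j^-$ is dominated by the single worst-case ratio $p_{\max}^{(q)}/p_{\min}^{(q)}$ built from the extreme node bounds, and the factor $(\gamma_{\mathrm{cl}}+1/r)$ is nonnegative, so the uniform inequality transfers to every arc.

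First I would fix an arbitrary arc $(i,j)\in\mathcal{A}$ at epoch $q$. By the definition $p_{\max}^{(q)}=\max_{k\in\mathcal{V}}p_k^+$, we have $p_i^+\le p_{\max}^{(q)}$. By $p_{\min}^{(q)}=\min_{k\in\mathcal{V}}p_k^-$, we have $p_j^-\ge p_{\min}^{(q)}$. Next I would invoke Assumption~\ref{asm:grid_fine}, which gives $p_k^->0$ for every node. Hence $p_{\min}^{(q)}>0$, because it is a minimum over the finite vertex set $\mathcal{V}$. This makes all the ratios well defined and lets us invert the lower bound: $1/p_j^-\le 1/p_{\min}^{(q)}$. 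Since $p_i^+\ge\lambda_{\max}(\mathbf{P}_i)>0$, multiplying the two nonnegative inequalities gives
\begin{equation*}
\frac{p_i^+}{p_j^-}\le\frac{p_{\max}^{(q)}}{p_{\min}^{(q)}}.
\end{equation*}

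Finally, $\gamma_{\mathrm{cl}}\ge0$ because it is a norm or the square root of a spectral radius, and $r>0$. So $\gamma_{\mathrm{cl}}+1/r>0$, and multiplying the last display by this factor preserves the inequality:
\begin{equation*}
\frac{p_i^+}{p_j^-}\bigl(\gamma_{\mathrm{cl}}+\tfrac1r\bigr)\le\frac{p_{\max}^{(q)}}{p_{\min}^{(q)}}\bigl(\gamma_{\mathrm{cl}}+\tfrac1r\bigr)\le1.
\end{equation*}
The last step is the hypothesis. This is exactly \eqref{eq:spectral_invariance_condition} for $(i,j)$, and since the arc was arbitrary, the claim holds for all of $\mathcal{A}$. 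If desired, Theorem~\ref{thm:graph_invariance} then yields local RPI at epoch $q$.

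There is no real obstacle. The only point needing care is the positivity of $p_{\min}^{(q)}$, which Assumption~\ref{asm:grid_fine} supplies together with the finiteness of $\mathcal{V}$. It is worth remarking that, combined with Remark~\ref{rem:uniform}, the hypothesis essentially forces $p_{\max}^{(q)}=p_{\min}^{(q)}$, because $\gamma_{\mathrm{cl}}+1/r=1$ for $r=1/(1-\gamma_{\mathrm{cl}})$. The proposition is therefore a valid but very stringent sufficient condition.
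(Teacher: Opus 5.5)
Your proof is correct and follows the paper's own argument: bound $p_i^+\le p_{\max}^{(q)}$ and $p_j^-\ge p_{\min}^{(q)}$ on each arc, then chain with the uniform hypothesis. You also make explicit two points the paper leaves implicit, namely that $p_{\min}^{(q)}>0$ (from Assumption~\ref{asm:grid_fine} and the finiteness of $\mathcal{V}$) and that $\gamma_{\mathrm{cl}}+1/r>0$; your closing observation that $r=1/(1-\gamma_{\mathrm{cl}})$ makes the hypothesis force $p_{\max}^{(q)}=p_{\min}^{(q)}$ is correct and consistent with Remark~\ref{rem:uniform}.
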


The proof is in Appendix~\ref{app:preservation}. In practice the uniform condition often persists: the noise floor $\sigma_n^2$ keeps the denominator bounded away from zero, and as data accumulate in high-uncertainty regions, $p_{\max}^{(q)}$ typically decreases faster than $p_{\min}^{(q)}$.

\section{Computational Architecture and Complexity}
\label{sec:complexity}

\subsubsection{Two-Time-Scale Architecture}

The verification condition requires the successor $(\mathbf{x}',\mathbf{u}')$ to evaluate $\mathbf{P}'$, yet $\mathbf{x}'$ depends on $\mathbf{w}\in\mathbb{W}(\mathbf{x},\mathbf{u})$, a circular dependency. We resolve this by \emph{freezing} the GP-derived field $\mathbf{P}^{(q)}(\cdot)$ during each learning epoch $q$:
\begin{enumerate}
\item Train the GP to obtain $\hat{\boldsymbol{\Sigma}}_w^{(q)}$ and construct $\mathbf{P}^{(q)}=c_{n,\alpha}[\hat{\boldsymbol{\Sigma}}_w^{(q)}]^{1/2}$.
\item With $\mathbf{P}^{(q)}$ fixed, verify invariance via the graph-based spectral condition of Theorem~\ref{thm:graph_invariance}.
\item Collect new data during operation.
\item Increment $q$ and repeat.
\end{enumerate}
Unlike the lifted method of~\cite{ramadanLearningBasedShrinkingDisturbanceInvariant2025a}, which recomputes full set geometry at each epoch, only the anisotropy matrix field is updated. Online tube evaluation reduces to a single GP prediction per operating point ($O(N_{\mathrm{gp}}^2)$ with cached Cholesky factors, where $N_{\mathrm{gp}}$ is the capped training subset size), independent of the MPC horizon, graph size, and learning epoch count.

\subsubsection{Complexity Analysis}

Let $n_x$, $n_u$, $n_w$ denote the state, input, and disturbance dimensions, $N_{\mathrm{gp}}$ the (capped) GP training set size, $R$ the number of latent GPs in the LMC kernel, and $|\mathcal{V}|$, $|\mathcal{A}|$ the graph vertex and arc counts. For the quadrotor benchmark of Section~\ref{sec:simulations}: $n_x{=}12$, $n_u{=}4$, $n_w{=}3$, $R{=}3$, $N_{\mathrm{gp}}{\le}300$, $|\mathcal{V}|{=}30$, $|\mathcal{A}|{=}63$.

\begin{table}[!t]
\centering
\caption{Offline Phase (One-Time or Per-Epoch)}
\label{tab:offline}
\footnotesize
\begin{tabular}{p{2.05cm}p{2.55cm}p{2.6cm}}
\toprule
\textbf{Step} & \textbf{Cost} & \textbf{Notes} \\
\midrule
Template pair & $O(n_x^3)$ & One-time; DARE solve via Schur decomposition. For $n_x{=}12$: ${<}1$\,ms. \\
Controller synthesis & $O(n_x^3)$ & One-time; LQR gain $\mathbf{K}$, then $\mathbf{A}_{\mathrm{cl}}=\mathbf{A}+\mathbf{B}\mathbf{K}$. \\
GP training & $O(E_{\mathrm{gp}}\,N_{\mathrm{gp}}^3\,R\,n_w)$ & $E_{\mathrm{gp}}{=}500$ L-BFGS epochs; dominated by Cholesky factorization of the $N_{\mathrm{gp}}{\times}N_{\mathrm{gp}}$ kernel matrix. With $N_{\mathrm{gp}}{\le}300$: ${\sim}30$\,s on GPU. \\
Graph construction & $O(N_{\mathrm{gp}}\,|\mathcal{V}|\,d)$ & $k$-means in the $d$-dimensional partition features; arcs from rollout transitions. ${<}1$\,ms. \\
Graph verification & $O(|\mathcal{A}|\,n_w^3)$ & Per-arc spectral check; $63$ arcs of $3{\times}3$ eigendecompositions: ${<}1$\,ms total. \\
\bottomrule
\end{tabular}
\end{table}

\begin{table}[!t]
\centering
\caption{Online Phase (Per MPC Step)}
\label{tab:online}
\footnotesize
\begin{tabular}{p{2.05cm}p{2.55cm}p{2.6cm}}
\toprule
\textbf{Step} & \textbf{Cost} & \textbf{Notes} \\
\midrule
GP prediction & $O(N_{\mathrm{gp}}^2)$ & Cached Cholesky factor; prediction is a kernel-vector product. For $N_{\mathrm{gp}}{=}300$: ${\sim}0.1$\,ms. \\
Template evaluation & $O(n_w^3)$ & $\mathbf{P}(\mathbf{z}_*)=c_{n_w,\alpha}\hat{\boldsymbol{\Sigma}}_w(\mathbf{z}_*)^{1/2}$; a $3{\times}3$ square root is negligible. \\
Total per step & $O(N_{\mathrm{gp}}^2)$ & Independent of horizon $H$, graph size $|\mathcal{V}|$, and epoch $q$. \\
\bottomrule
\end{tabular}
\end{table}

\begin{table*}[!t]
\centering
\caption{Comparison with the Lifted-RPI Baseline~\cite{ramadanLearningBasedShrinkingDisturbanceInvariant2025a}}
\label{tab:lifted}
\footnotesize
\begin{tabular}{lcc}
\toprule
\textbf{Metric} & \textbf{Lifted-RPI} & \textbf{Ans\"{a}tze-RPI} \\
\midrule
Iterations to convergence & $57$ & $0$ (direct evaluation) \\
Per-iteration cost & ${\sim}0.07$\,s (GPU) & N/A \\
Total offline time & ${\sim}4.2$\,s (GPU) & ${\sim}30$\,s (GP training) \\
Online query cost & $O(n_{\mathrm{facets}}\,N_{\mathrm{gp}})$ & $O(N_{\mathrm{gp}}^2)$ \\
Online query time & seconds--minutes & ${\sim}0.1$\,ms \\
Set representation & polytope (growing) & ellipsoid, $\tfrac{n_w(n_w+1)}{2}$ params \\
\bottomrule
\end{tabular}
\end{table*}

GP training dominates the offline cost at $O(N_{\mathrm{gp}}^3)$ per epoch and stays tractable with the training set capped at $N_{\mathrm{gp}}\le300$. The template pair and controller are computed once and reused across epochs. Online, the graph is not traversed: each tube cross-section is one cached GP covariance query and one $n_w\times n_w$ square root, with no iteration and no LMI solve. The lifted-RPI polytope facet count grows combinatorially with dimension and iteration count, while the ans\"{a}tze parameterizes each cross-section with $n_w(n_w{+}1)/2 = 6$ Cholesky entries for $n_w=3$, regardless of iteration count; this fixed representation size is what permits the 12-state quadrotor.

Verification is offline and scales with the feature dimension of the partition, not the full state-input space. For the quadrotor, disturbances enter only the 3D velocity subspace, so the partition operates on disturbance-active features rather than all $n_x + n_u = 16$ coordinates. A dense grid in high dimensions remains subject to the curse of dimensionality, motivating disturbance-active subspaces, adaptive clustering, or sparse graph refinement.

\section{Simulation and Results}
\label{sec:simulations}

\subsection{Setup}
\label{sec:sim_setup}

We validate on a 3-D 12-state quadrotor tracking a 3-D Lissajous trajectory, linearized about hover, with velocity-subspace residual
\begin{equation}
\mathbf{w}=-\frac{\beta_1}{m}\|\mathbf{v}\|\mathbf{v}-\beta_2\mathbf{a}_{\mathrm{cmd}}+\sigma_w\boldsymbol{\varepsilon},
\qquad \boldsymbol{\varepsilon}\sim\mathcal{N}(\mathbf{0},\mathbf{I}_3),
\label{eq:residual}
\end{equation}
which captures quadratic drag, actuator inefficiency, and process noise ($\beta_1{=}0.15$, $\beta_2{=}0.08$, $\sigma_w{=}0.05$). The system is discretized at $T_s=0.02$\,s; $\rho(\mathbf{A}_{\mathrm{cl}})=0.983$. The multi-output GP uses LMC with $R{=}3$ latent GPs and SE kernels. The template pair uses $\gamma_{\mathrm{cl}}=0.99844$ and $r\approx641$. We compare four parameterizations: full $\mathbf{P}(\mathbf{z})$ (proposed), local isotropic $\sigma_{\max}(\mathbf{z})\mathbf{I}_3$~\cite{kohlerComputationallyEfficientRobust2021}, local diagonal $\mathrm{diag}(\hat{\sigma}_{w,i}(\mathbf{z}))$~\cite{solopertoLearningBasedRobustModel2018}, and a non-adaptive homothetic baseline using one fixed global template for all operating points~\cite{rakovicHomotheticTubeModel2012}. An interactive companion document\footnote{\url{https://thelastpixie.github.io/ansatz-rpi-v2/}} reproduces every figure of this section with expandable derivations, and embeds a real-time 3-D quadrotor simulator\footnote{\url{https://thelastpixie.github.io/ansatz-rpi-v2/sim.html}} with RK4 integration at 200\,Hz, the disturbance model~\eqref{eq:residual}, and live rendering of the anisotropic disturbance ellipsoids and RPI tube.

\subsection{Controller Design-Space Sweep}
\label{sec:sim_sweep}

The deployed LQR configuration yields the closed-loop figures above ($\rho(\mathbf{A}_{\mathrm{cl}})=0.983$, $\gamma_{\mathrm{cl}}=0.99844$, $r\approx641$). To characterize how sensitive certification is to controller tuning, a 7-layer adaptive sweep evaluated $50{,}365$ unique LQR weight configurations across the position, velocity, attitude, rate, thrust, and torque weights $(Q_p, Q_v, Q_a, Q_r, R_T, R_\tau)$: coarse factorial (L1), attitude/rate refinement (L2), fine zoom (L3), asymmetric axes (L4), Latin hypercube sampling in 8D (L5), fine polish (L6), and a dedicated $60{\times}60$ heatmap grid (L7). Each configuration is tested with the weighted-norm certificate of Section~\ref{sec:invariance} at three confidence levels:
$41{,}412$ configurations pass at $\alpha=0.90$ (82.2\%), $40{,}975$ at $\alpha=0.95$ (81.4\%), and $40{,}246$ at $\alpha=0.99$ (79.9\%).

The controller sweep has three deployment-relevant properties. First, the feasible region in weight space is broad: roughly four in five evaluated configurations certify at $\alpha=0.95$, so the invariance test does not force a narrow corner of the tuning space. Second, the certified norm varies continuously across the $\log_{10}(Q_p)\times\log_{10}(Q_v)$ heatmap, with a ridge of high certifiable confidence at low position and velocity weights, corresponding to controllers with aggressive disturbance rejection (Fig.~\ref{fig:pareto2}f). Third, faster contraction trades against margin: lower $\gamma$ correlates with a smaller gap to the unit threshold (Fig.~\ref{fig:pareto1}b). Layers L2 and L6 contribute the most passing configurations, consistent with the secondary weights $(Q_a, Q_r, R_\tau)$ having less influence on certification than the position and velocity weights.

\begin{figure*}[!t]
\centering
\subfloat[$\alpha^*$ vs.\ weighted norm]{\includegraphics[width=0.24\textwidth]{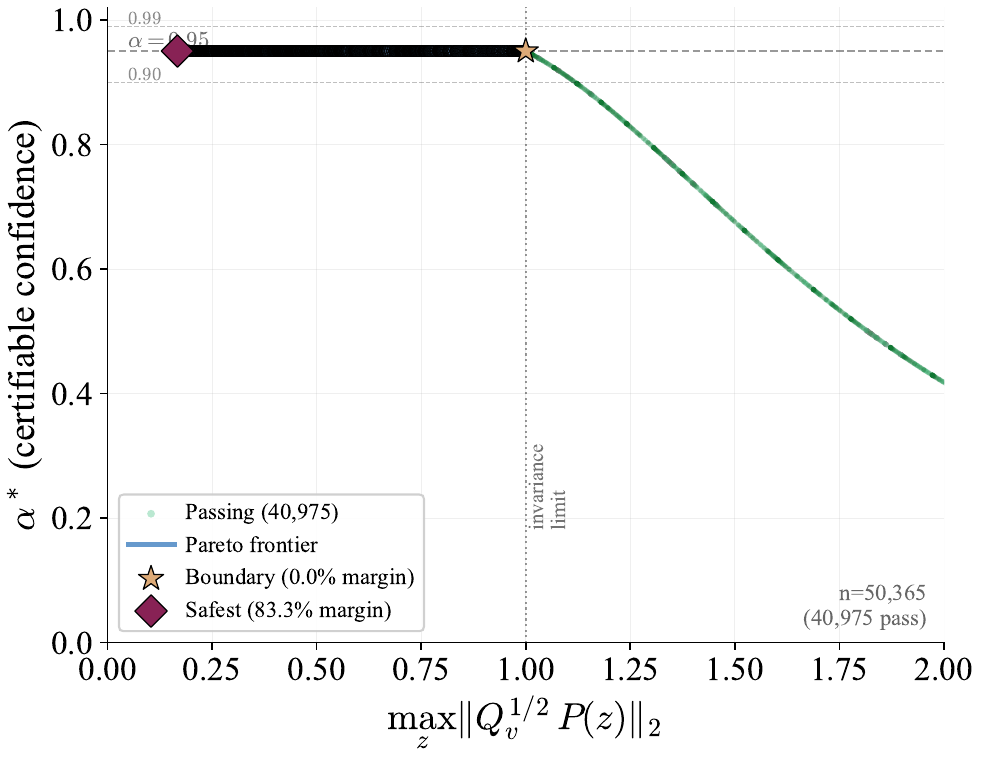}\label{fig:pareto1a}}\hfil
\subfloat[contraction rate vs.\ norm]{\includegraphics[width=0.24\textwidth]{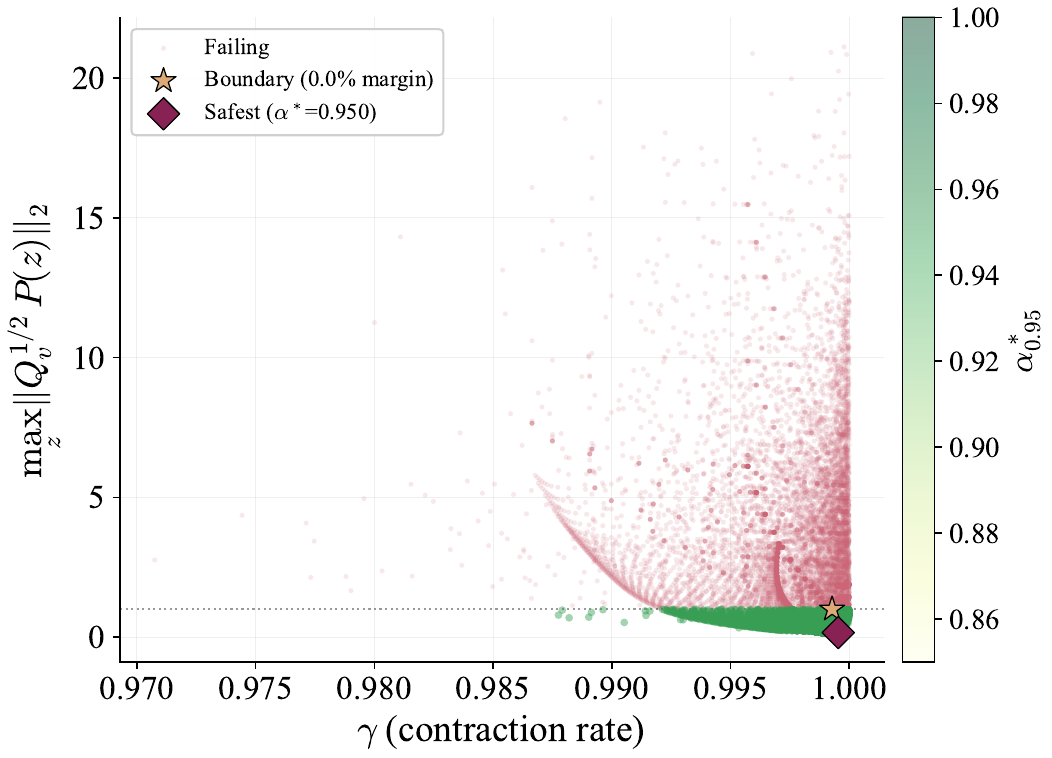}\label{fig:pareto1b}}\hfil
\subfloat[distribution of $\alpha^*$]{\includegraphics[width=0.24\textwidth]{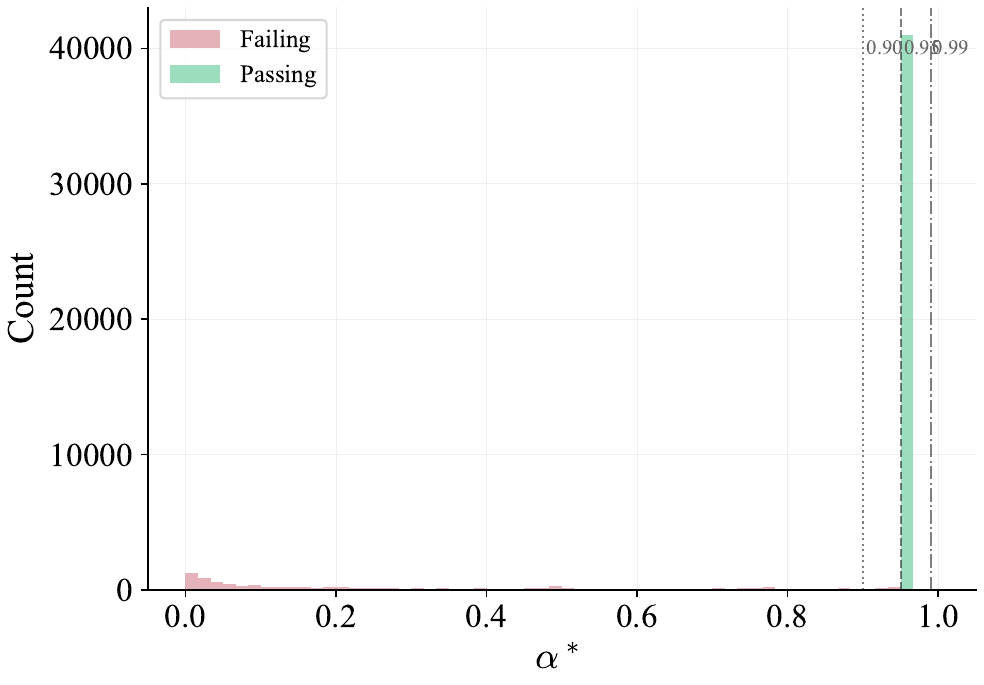}\label{fig:pareto1c}}\hfil
\subfloat[multi-$\alpha$ envelope]{\includegraphics[width=0.24\textwidth]{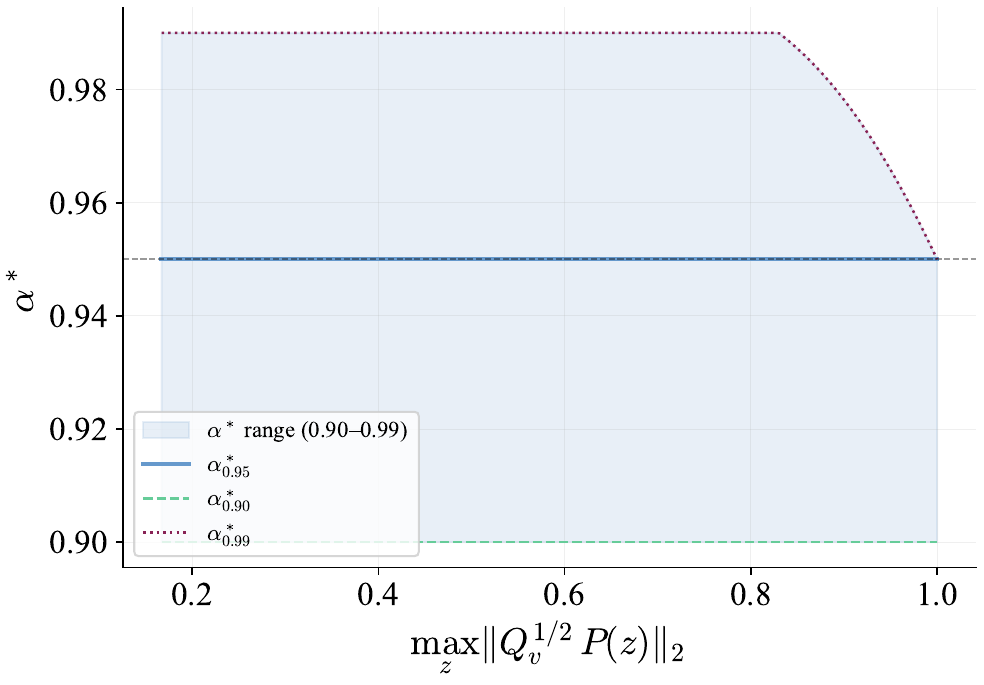}\label{fig:pareto1d}}
\caption{Controller sweep, part 1. (a)~The Pareto frontier traces the trade-off between certifiable confidence $\alpha^*$ and the weighted invariance norm; passing configurations cluster below the unit threshold. (b)~Contraction rate vs.\ norm, colored by $\alpha^*$: faster contraction correlates with smaller margin. (c)~Most evaluated configurations achieve $\alpha^* \geq 0.90$, with a spike at the sweep's primary target $\alpha^* = 0.95$. (d)~The confidence envelope separates cleanly across $\alpha \in \{0.90, 0.95, 0.99\}$ at higher norms, quantifying the cost of increased confidence.}
\label{fig:pareto1}
\end{figure*}

\begin{figure*}[!t]
\centering
\subfloat[sweep layer contributions]{\includegraphics[width=0.24\textwidth]{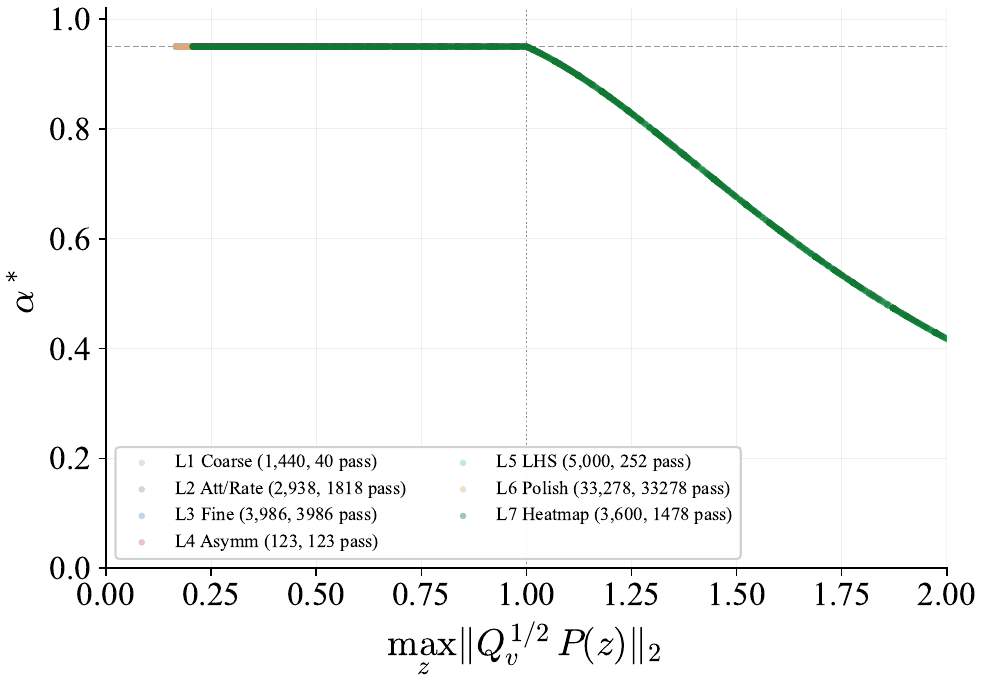}\label{fig:pareto2e}}\hfil
\subfloat[$\alpha^*$ heatmap over $Q_p \times Q_v$]{\includegraphics[width=0.24\textwidth]{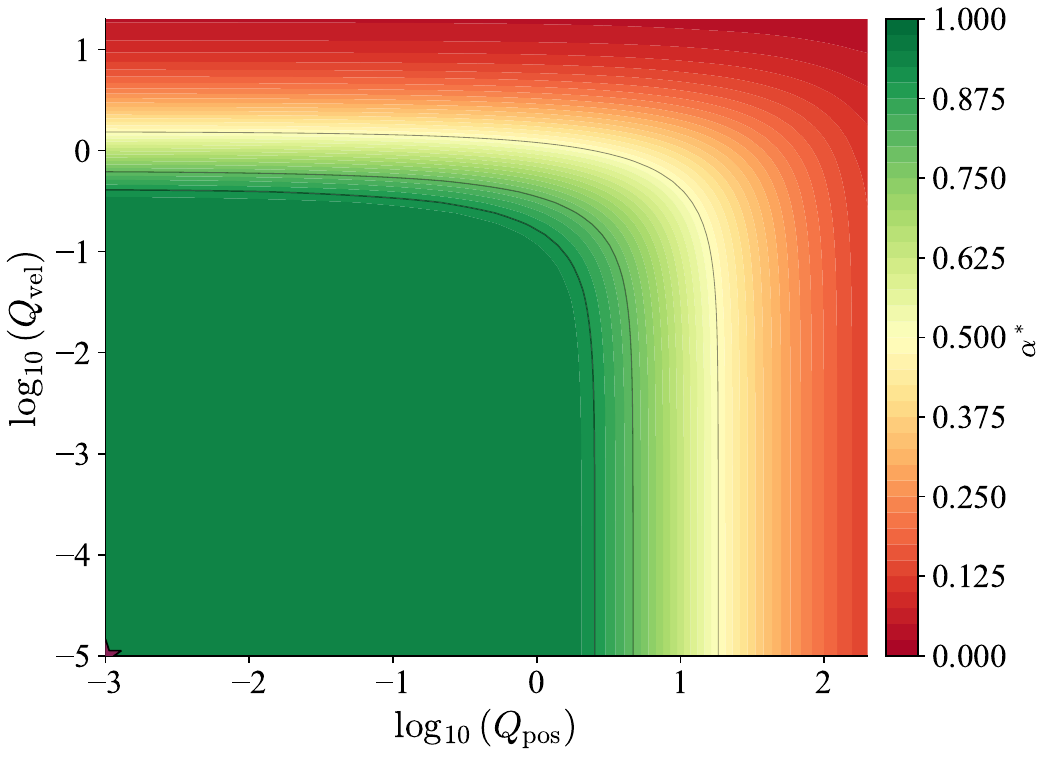}\label{fig:pareto2f}}\hfil
\subfloat[top-20 configurations]{\includegraphics[width=0.24\textwidth]{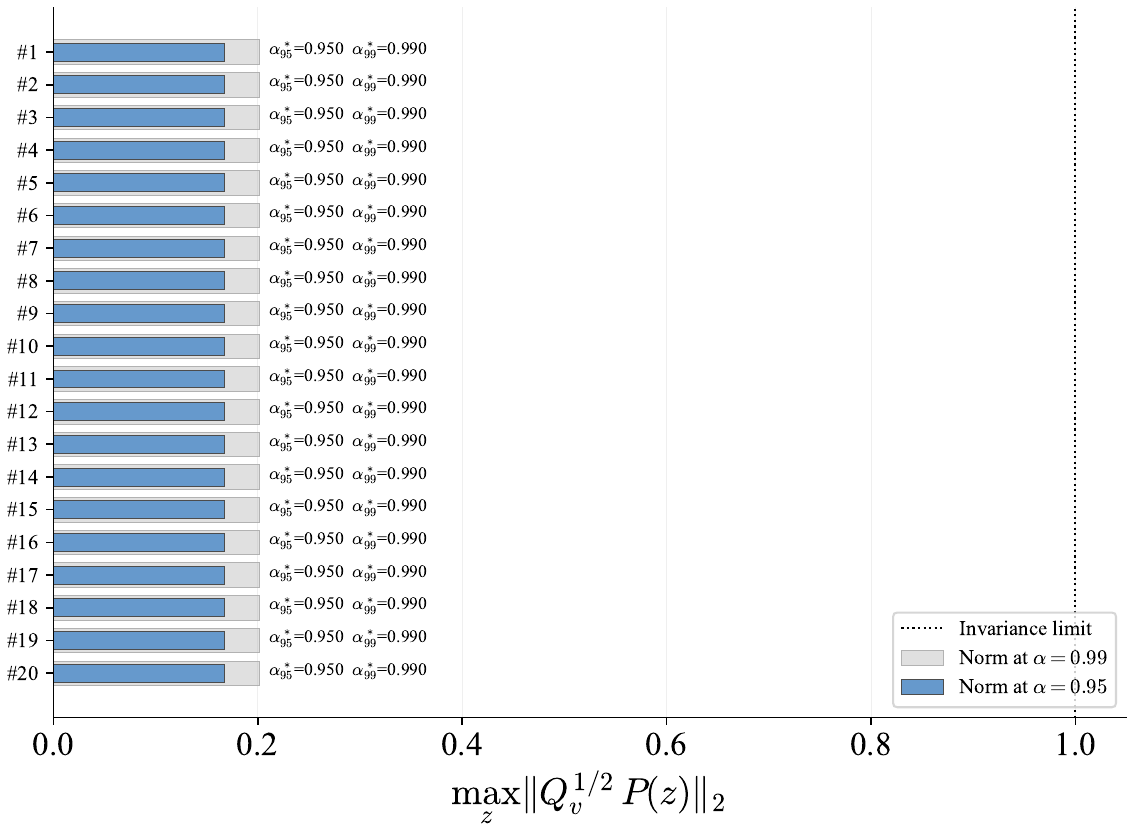}\label{fig:pareto2g}}\hfil
\subfloat[asymmetric $z$-weight effect]{\includegraphics[width=0.24\textwidth]{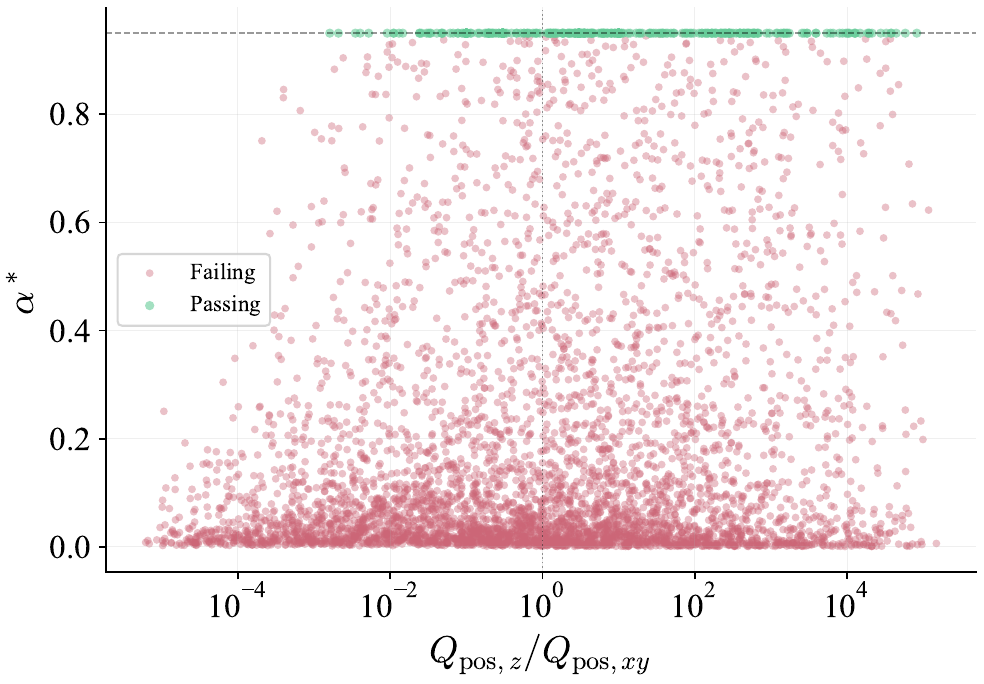}\label{fig:pareto2h}}
\caption{Controller sweep, part 2. (e)~Per-layer contributions in (norm, $\alpha^*$) space; L6 dominates volume with $33{,}278$ fine-grid evaluations, while L5 provides broad 8D coverage. (f)~The L7 heatmap over $\log_{10}(Q_p)\times\log_{10}(Q_v)$ with remaining weights fixed; a ridge of high $\alpha^*$ emerges at low $Q_p$ and $Q_v$. (g)~Top-20 passing configurations with norms at $\alpha=0.95$ and $\alpha=0.99$. (h)~Effect of asymmetric position weighting $Q_{p,z}/Q_{p,xy}$: near-isotropic ratios certify best, though moderate asymmetry remains feasible.}
\label{fig:pareto2}
\end{figure*}

\subsection{GP Learning: LMC vs.\ Independent Outputs}
\label{sec:sim_gp}

The LMC kernel ($R=3$ latent GPs) captures cross-correlations between velocity-axis disturbance components. Table~\ref{tab:gp_rmse} compares held-out RMSE against independent per-output GPs: LMC matches or improves accuracy per task while also producing the joint covariance $\hat{\boldsymbol{\Sigma}}_w(\mathbf{z})$ with off-diagonal terms, the ingredient that scalar and diagonal baselines lack. Figure~\ref{fig:lmc} plots the joint posteriors.

\begin{table}[!t]
\centering
\caption{GP Prediction Accuracy (RMSE)}
\label{tab:gp_rmse}
\footnotesize
\begin{tabular}{lccc}
\toprule
\textbf{Task} & \textbf{Independent GP} & \textbf{LMC ($R{=}3$)} & \textbf{Change} \\
\midrule
$w_{v_x}$ & $0.0477$ & $0.0480$ & $-0.6\%$ \\
$w_{v_y}$ & $0.0524$ & $0.0520$ & $+0.6\%$ \\
$w_{v_z}$ & $0.0594$ & $0.0588$ & $+1.1\%$ \\
\bottomrule
\end{tabular}
\end{table}

\begin{figure}[!t]
\centering
\includegraphics[trim={0 0 0 1.65cm}, clip=true,width=\columnwidth]{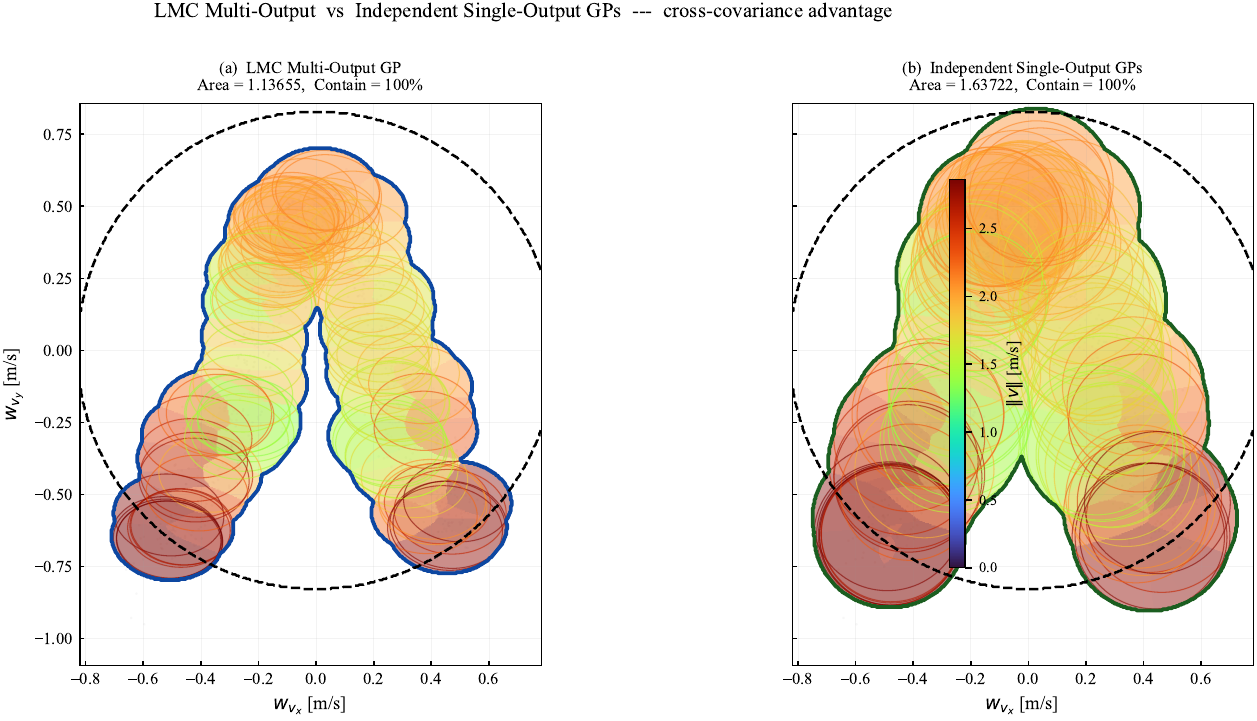}
\caption{LMC vs.\ independent GPs: the joint posterior captures inter-axis correlations used for anisotropic template construction.}
\label{fig:lmc}
\end{figure}

The weighted spectral certificate is evaluated at 100 test points along the closed-loop trajectory. All 100 points pass, with weighted norms in $[0.279, 0.639]$, well below the unit threshold. The margin is consistent with Remark~\ref{rem:uniform}: the certificate draws its slack from local ratios rather than the global condition number. The operating space is discretized into $|\mathcal{V}|=30$ $k$-means nodes with $|\mathcal{A}|=63$ arcs. The estimated Lipschitz constants are $L_P = 0.283$ for the tube field and $L_{P,w} = 0.342$ for the disturbance-set field. The per-node tube-sizing linear program yields $\alpha^*/r \in [0.605, 0.612]$ with all 30 nodes feasible, so the arc-wise condition~\eqref{eq:spectral_invariance_condition} holds across the verified graph. Theorem~\ref{thm:monotone} predicts nested tubes as data accumulate. Table~\ref{tab:epochs} reports four logged checkpoints as the training set grows from 80 to 400 points: $p_{\max}$ falls from $1.050$ to $0.194$ and the field condition number $\kappa_P$ from $7.10$ to $1.44$, while the per-point pass rate reaches $100/100$ from epoch~1 onward. The L\"owner ordering $\hat{\boldsymbol{\Sigma}}_w^{(q+1)}(\mathbf{z}) \preceq \hat{\boldsymbol{\Sigma}}_w^{(q)}(\mathbf{z})$ was verified numerically at every test point and epoch pair, with no violations.

\begin{table}[!t]
\centering
\caption{Monotone Tube Contraction Across Learning Epochs}
\label{tab:epochs}
\footnotesize
\begin{tabular}{ccccc}
\toprule
\textbf{Epoch} & $N_{\mathrm{train}}$ & $p_{\max}$ & $\kappa_P$ & \textbf{Pass} \\
\midrule
0 & 80  & 1.050 & 7.10 & 98/100 \\
1 & 160 & 0.383 & 2.76 & 100/100 \\
2 & 240 & 0.231 & 1.68 & 100/100 \\
4 & 400 & 0.194 & 1.44 & 100/100 \\
\bottomrule
\end{tabular}
\end{table}

\subsection{Comparison Against Scalar and Diagonal Baselines}
\label{sec:sim_comparison}

Table~\ref{tab:results} reports tube-volume ratios (baseline over proposed). The homothetic row denotes a non-adaptive baseline: one fixed global template is used at all operating points, rather than the state--input-dependent field $\mathbf{P}(\mathbf{z})$. It is therefore a Rakovi\'c-style homothetic comparison in transformation class, not the $\varepsilon$-mRPI polytope produced by the Rakovi\'c--Mayne invariant-set construction. The full-matrix parameterization achieves a $195\times$ volume reduction relative to this non-adaptive homothetic baseline in the 3D velocity subspace and $2.1{\times}10^5$ in the joint 7D velocity-control subspace $(\mathbf{e}_v,\,\delta\mathbf{u})$, where $\delta\mathbf{u}=\mathbf{K}\mathbf{e}\in\mathbb{R}^4$ captures the full control correction induced by the error. This illustrates \emph{dimensional compounding}: a modest per-axis overestimation compounds exponentially with dimension, so the homothetic 3D ratio of $195\times$ amplifies by ${\approx}10^3$ when the four control channels are included. Even the diagonal baseline grows from $4.3\times$ to $31.2\times$. All methods exceed $99\%$ empirical containment, consistent with the $95\%$ GP credible level; this is a coverage check, while the deterministic invariance proof is conditional on $\mathbf{w}\in\mathbb{W}(\mathbf{z})$.

\begin{table*}[!t]
\centering
\caption{Tube Volume Ratio (Baseline\,/\,Proposed) in the Velocity Subspace ($\mathbb{R}^3$) and the Joint Velocity-Control Subspace ($\mathbb{R}^7$)}
\label{tab:results}
\footnotesize
\begin{tabular}{lcccc}
\toprule
\textbf{Method} & \textbf{3D (vel)} & \textbf{7D (vel+ctrl)} & \textbf{Amp.} & \textbf{Cnt.} \\
\midrule
Proposed (full $\mathbf{P}$) & $1.0\times$ & $1.0\times$ & --- & 99.6\% \\
Isotropic~\cite{kohlerComputationallyEfficientRobust2021} & $1.3\times$ & $2.1\times$ & $1.6\times$ & 99.7\% \\
Diagonal~\cite{solopertoLearningBasedRobustModel2018} & $4.3\times$ & $31.2\times$ & $7.3\times$ & 99.8\% \\
Non-adaptive homothetic~\cite{rakovicHomotheticTubeModel2012} & $195\times$ & $2.1{\times}10^5$ & $1.1{\times}10^3$ & 99.5\% \\
\bottomrule
\end{tabular}
\end{table*}

\begin{remark}[Interpretation of the homothetic baseline] The homothetic baseline in Table~\ref{tab:results} is a single global template held fixed over the operating space. It is used to isolate the effect of replacing a non-adaptive template by the learned anisotropic field $\mathbf{P}(\mathbf{z})$. The reported $195\times$ and $2.1{\times}10^5$ ratios should therefore be read as comparisons against this fixed-template baseline. They are not direct volume ratios against a separately computed Rakovi\'c--Mayne $\varepsilon$-mRPI polytope. \end{remark}

Table~\ref{tab:areas} complements the volume ratios with 2D projection areas of the disturbance union and RPI error union in each velocity plane: the proposed sets are the tightest in every projection, and the homothetic RPI projections are an order of magnitude larger. Figure~\ref{fig:fourmethod} overlays the four methods in all three velocity planes; Fig.~\ref{fig:proj} adds control-space and cross-space projections, where directional coupling between velocity errors and control authority is visible; and Fig.~\ref{fig:threed} shows the 3D union boundaries extracted by marching cubes.

\begin{table}[!t]
\centering
\caption{2D Projection Areas (m$^2$/s$^2$)}
\label{tab:areas}
\footnotesize
\begin{tabular}{lcccc}
\toprule
\textbf{Projection} & \textbf{Proposed} & \textbf{Homoth.} & \textbf{Diag.} & \textbf{Isotr.} \\
\midrule
Dist.\ $v_x$--$v_y$ & $\mathbf{1.100}$ & $2.849$ & $1.636$ & $1.161$ \\
RPI $e_{v_x}$--$e_{v_y}$ & $\mathbf{4.278}$ & $77.51$ & $8.325$ & $4.911$ \\
RPI $e_{v_x}$--$e_{v_z}$ & $\mathbf{2.754}$ & $35.58$ & $5.476$ & $3.196$ \\
RPI $e_{v_y}$--$e_{v_z}$ & $\mathbf{2.684}$ & $48.84$ & $5.771$ & $3.540$ \\
\bottomrule
\end{tabular}
\end{table}

\begin{figure*}[!t]
\centering
\subfloat[dist.\ $v_x$--$v_y$]{\includegraphics[width=0.31\textwidth]{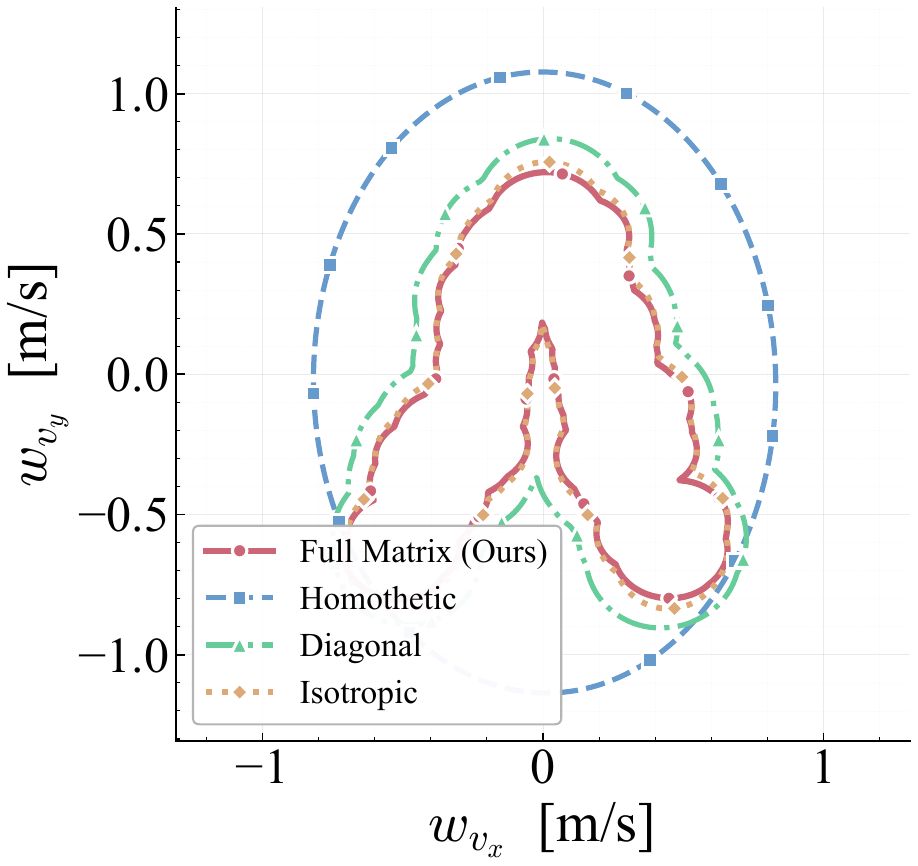}\label{fig:cmp_a}}\hfil
\subfloat[dist.\ $v_x$--$v_z$]{\includegraphics[width=0.31\textwidth]{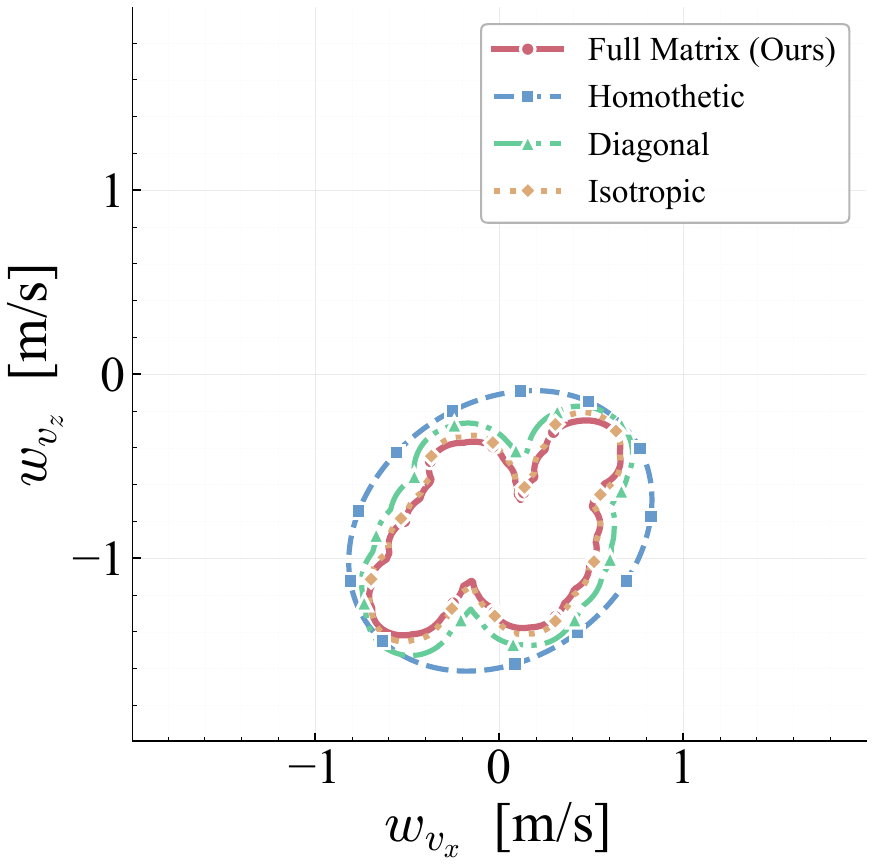}\label{fig:cmp_b}}\hfil
\subfloat[dist.\ $v_y$--$v_z$]{\includegraphics[width=0.31\textwidth]{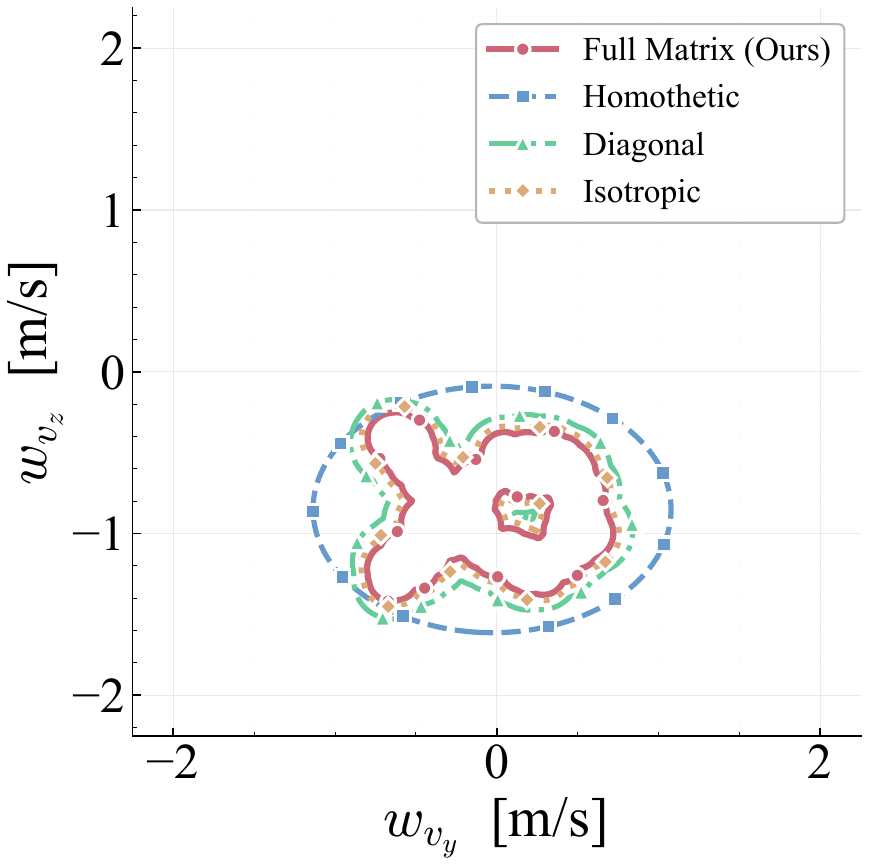}\label{fig:cmp_c}}\\
\subfloat[RPI $e_{v_x}$--$e_{v_y}$]{\includegraphics[width=0.31\textwidth]{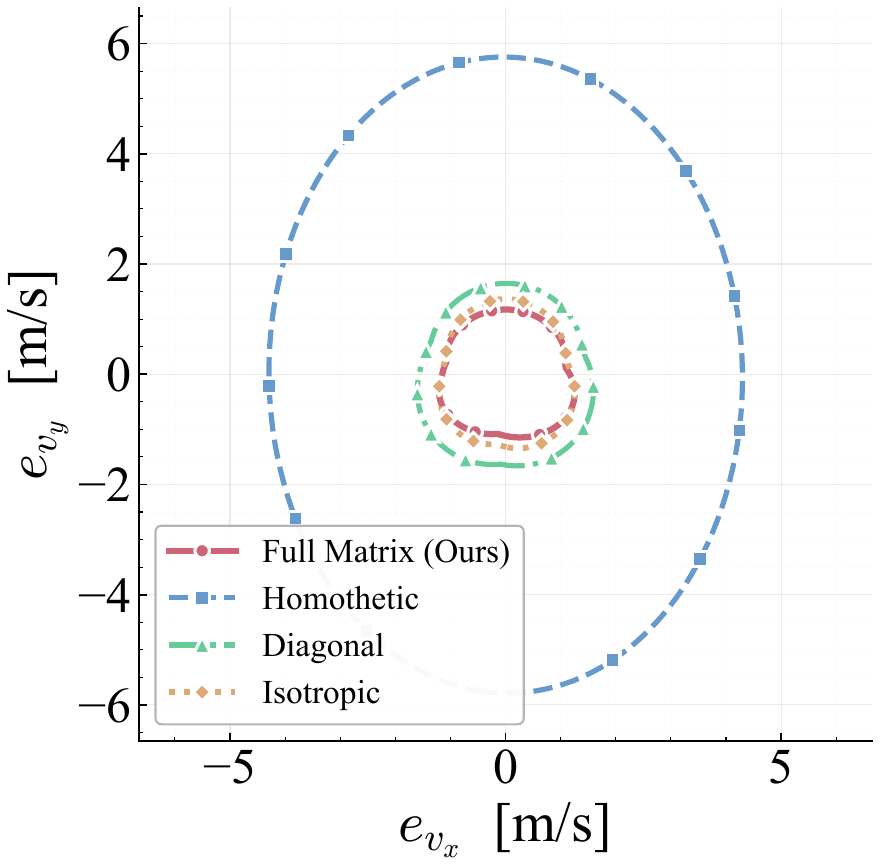}\label{fig:cmp_d}}\hfil
\subfloat[RPI $e_{v_x}$--$e_{v_z}$]{\includegraphics[width=0.31\textwidth]{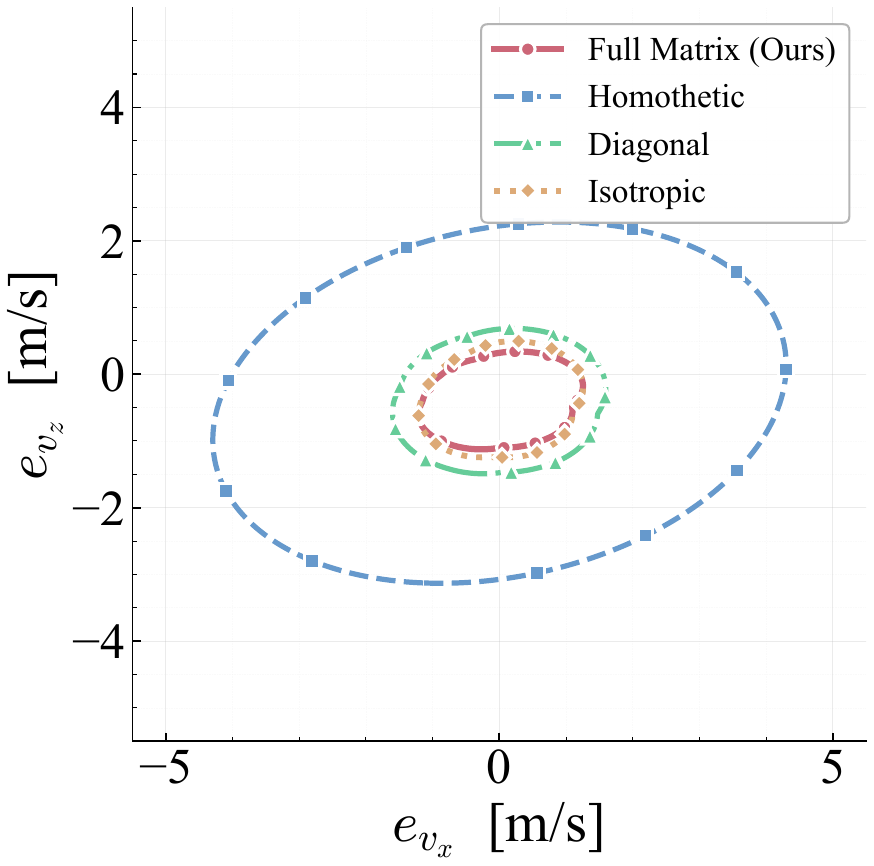}\label{fig:cmp_e}}\hfil
\subfloat[RPI $e_{v_y}$--$e_{v_z}$]{\includegraphics[width=0.31\textwidth]{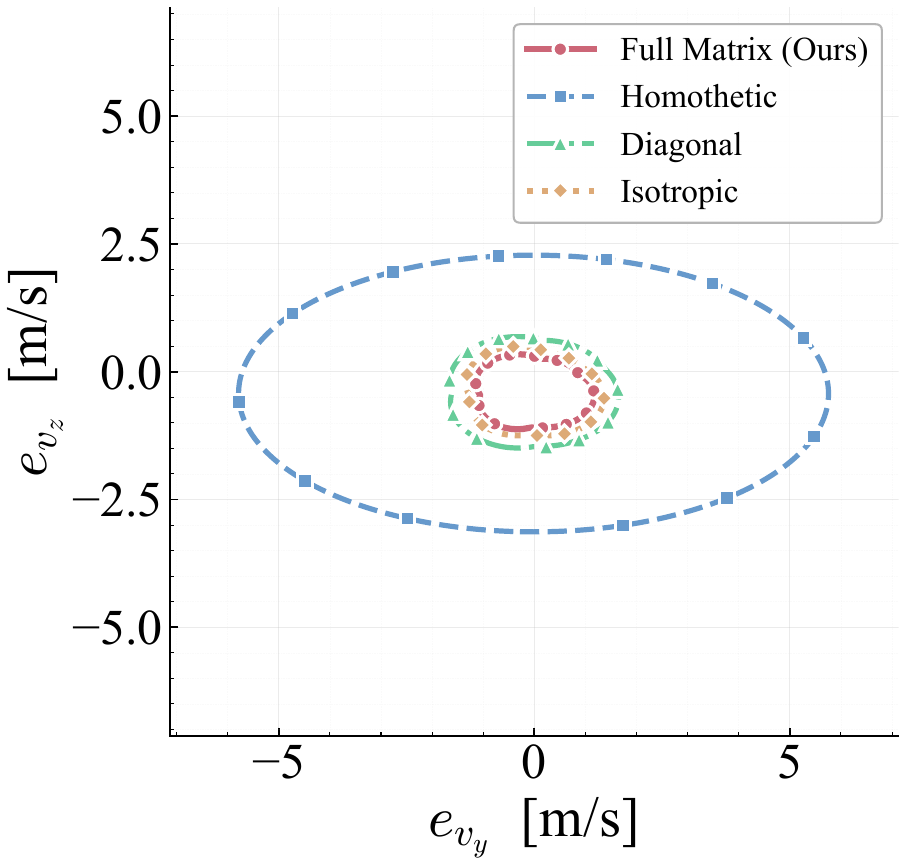}\label{fig:cmp_f}}
\caption{Four-method comparison in the three velocity planes: disturbance unions (top) and RPI error unions (bottom). The proposed anisotropic sets stay close to the operating region, while the homothetic baseline gives a much larger over-approximation.}
\label{fig:fourmethod}
\end{figure*}

\begin{figure*}[!t]
\centering
\subfloat[$v_x$--$v_y$]{\includegraphics[width=0.31\textwidth]{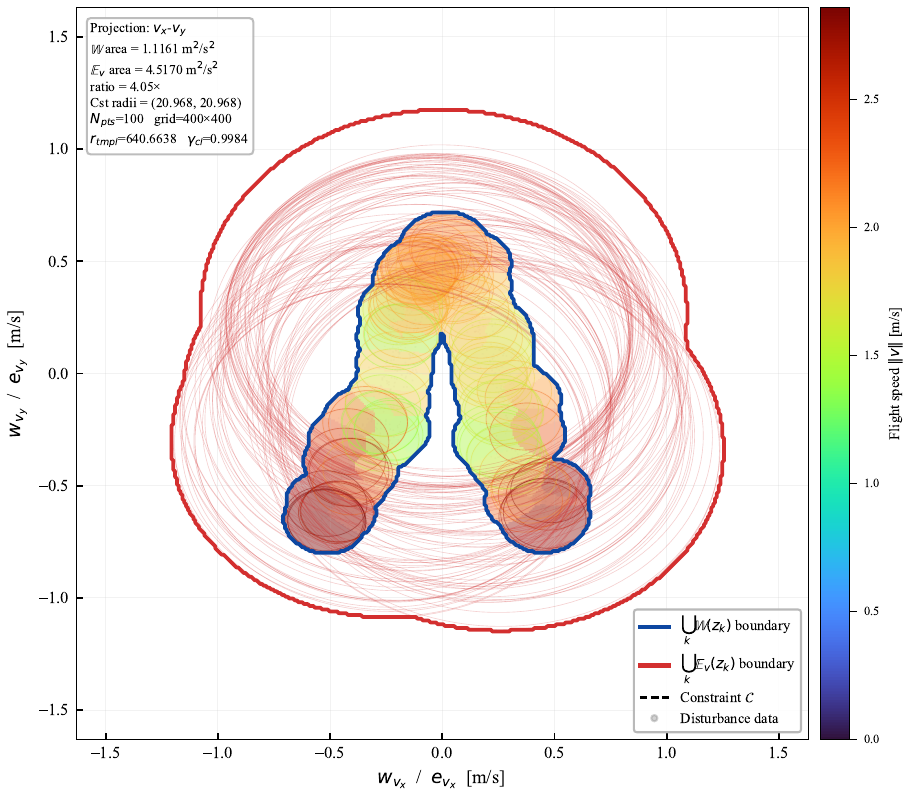}}\hfil
\subfloat[$v_x$--$v_z$]{\includegraphics[width=0.31\textwidth]{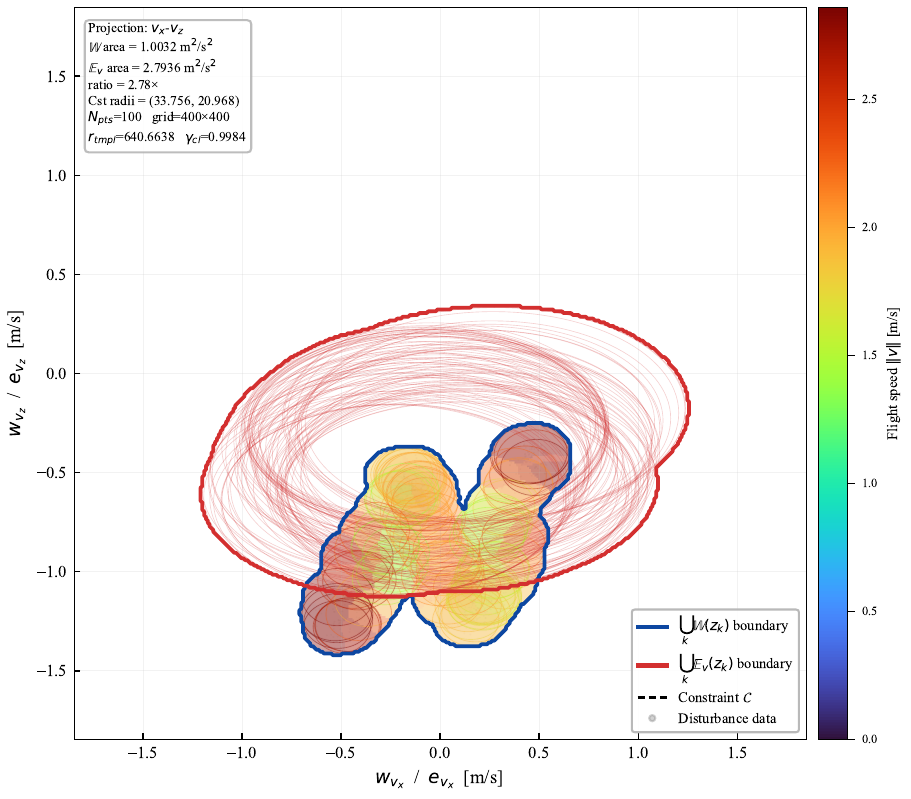}}\hfil
\subfloat[$v_y$--$v_z$]{\includegraphics[width=0.31\textwidth]{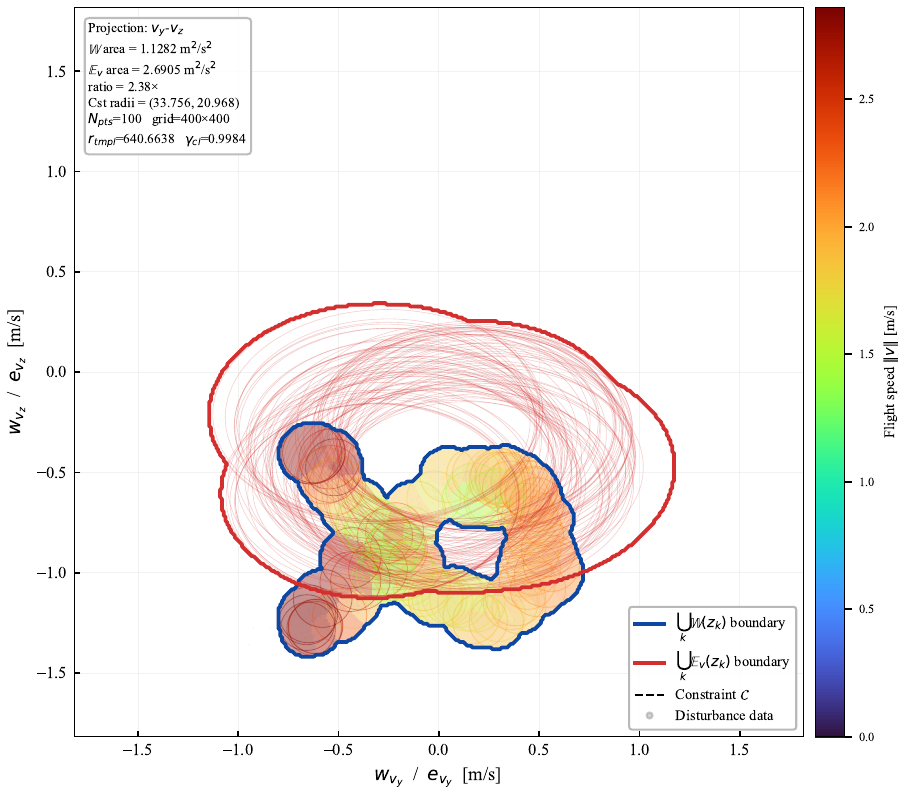}}\\
\subfloat[$\delta T$--$\delta\tau_\phi$]{\includegraphics[width=0.31\textwidth]{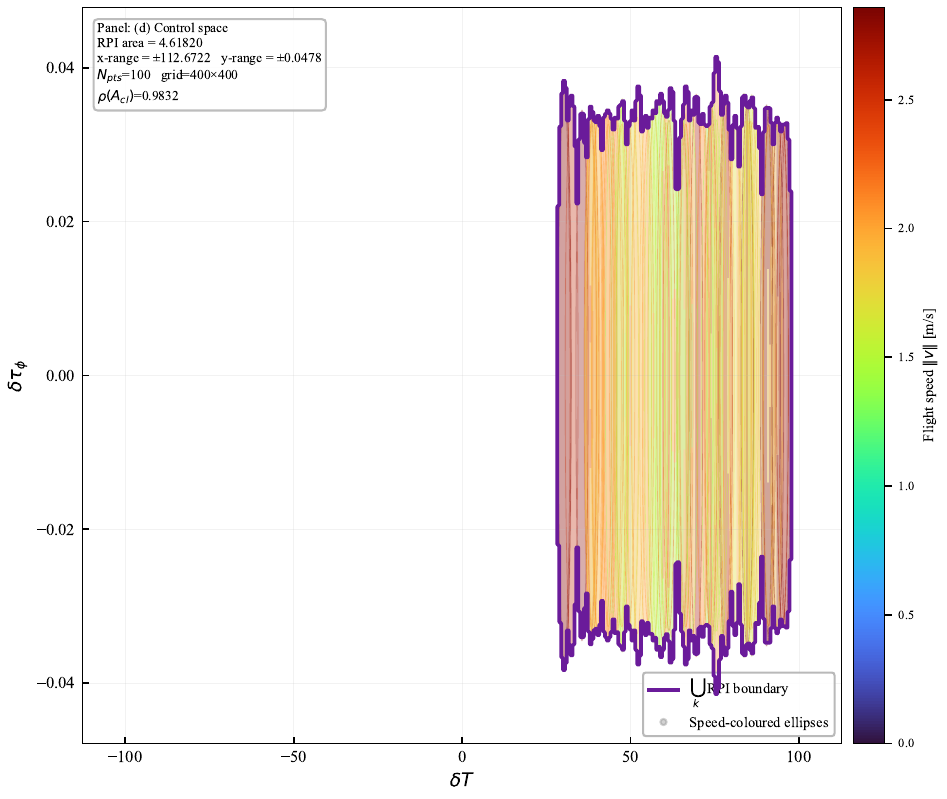}}\hfil
\subfloat[$e_{v_x}$--$\delta T$]{\includegraphics[width=0.31\textwidth]{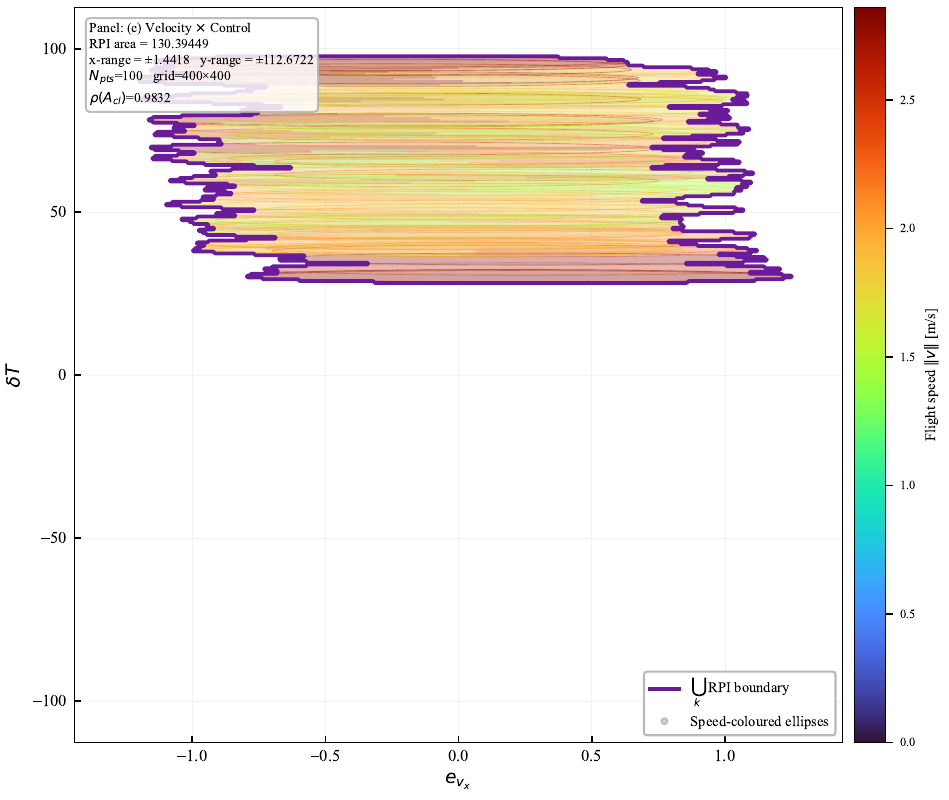}}\hfil
\subfloat[$e_{v_y}$--$\delta\tau_\phi$]{\includegraphics[width=0.31\textwidth]{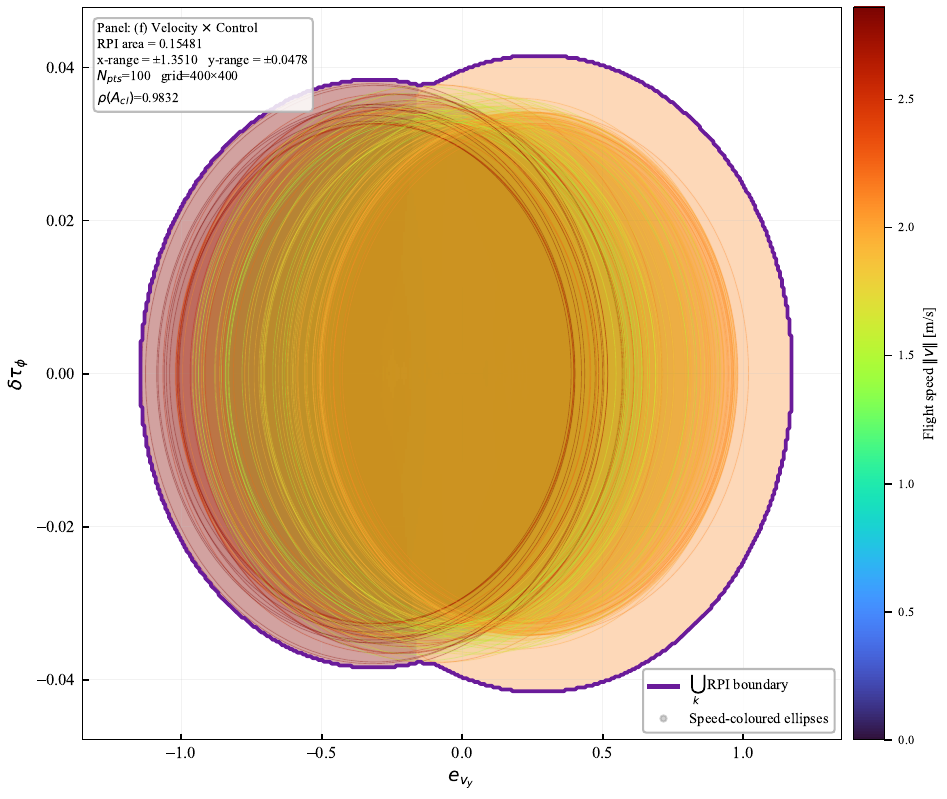}}
\caption{Per-operating-point disturbance ellipses colored by flight speed with the RPI union boundary, in velocity planes (top) and control/cross-space projections (bottom). The anisotropic template captures the coupling between velocity errors and control authority.}
\label{fig:proj}
\end{figure*}

\begin{figure*}[!t]
\centering
\subfloat[disturbance union]{\includegraphics[trim={0 0 0 1cm}, clip=true,width=0.45\textwidth]{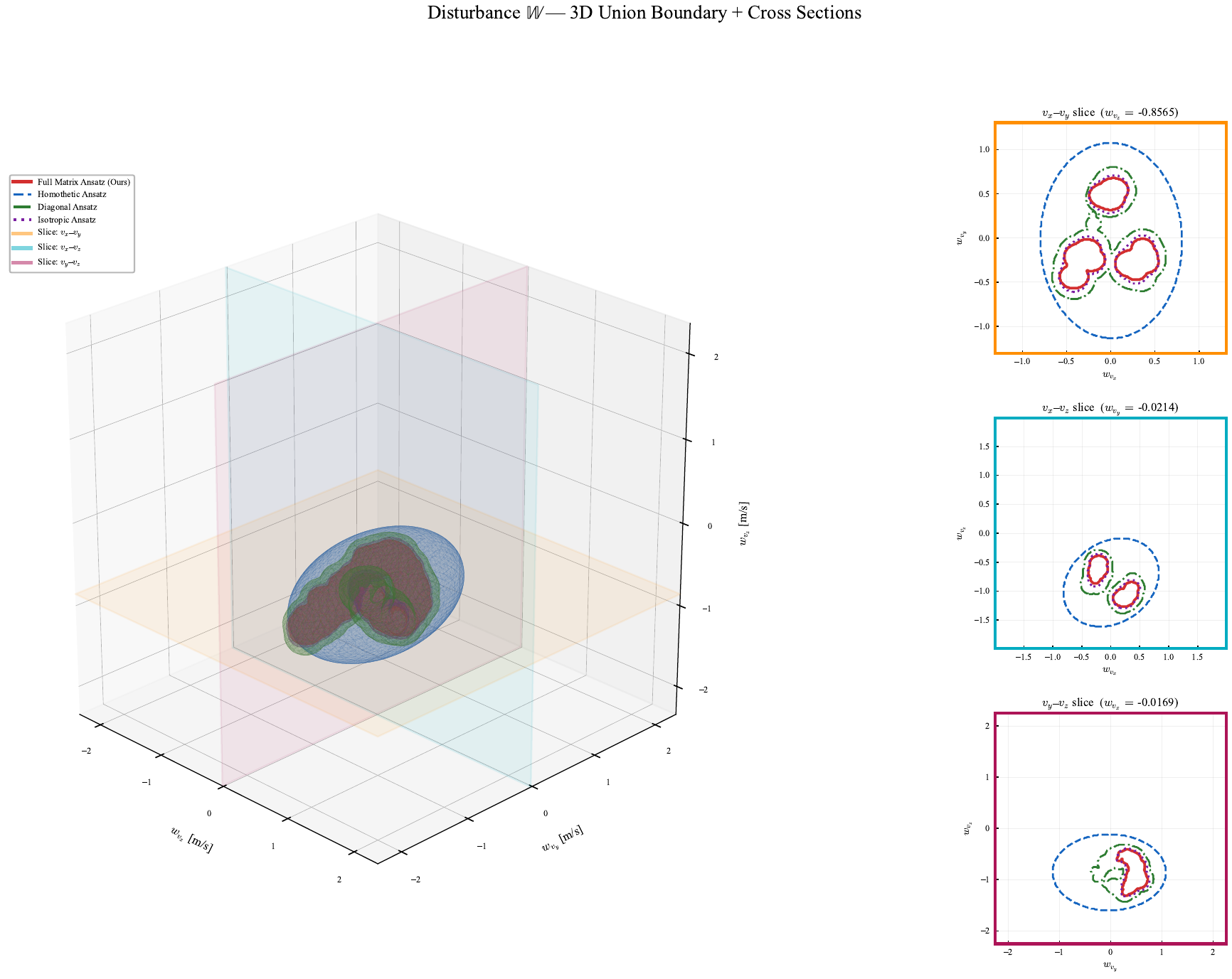}}\hfil
\subfloat[RPI error union]{\includegraphics[trim={0 0 0 1cm}, clip=true,width=0.45\textwidth]{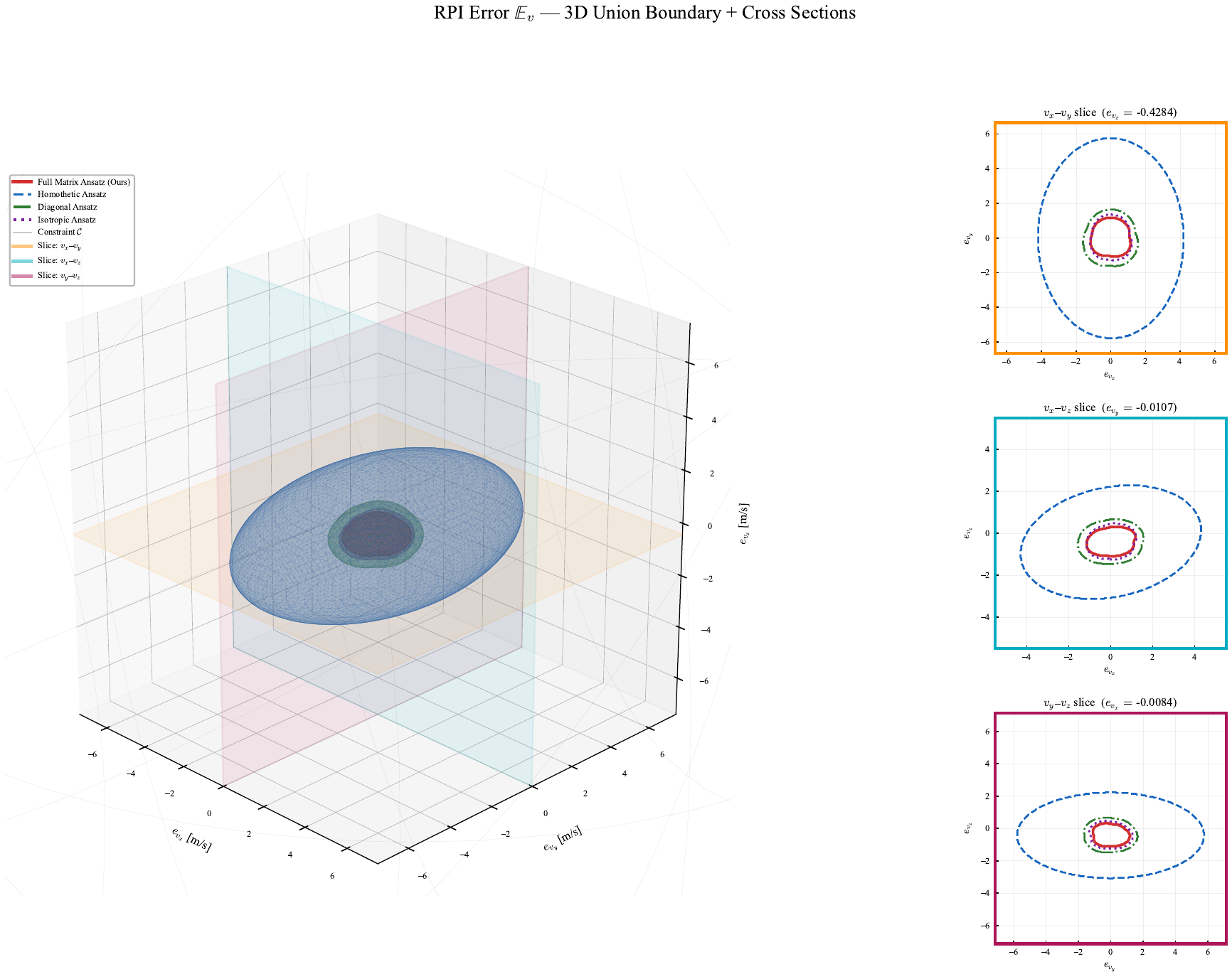}}
\caption{3D $(v_x, v_y, v_z)$ union boundary meshes via marching cubes, with cross-section insets. The anisotropic set hugs the true region while baselines over-approximate.}
\label{fig:threed}
\end{figure*}

\section{Conclusion}
\label{sec:conclusion}

The anisotropic template ansatz replaces one global tube shape with a GP-derived matrix field $\mathbf{P}(\mathbf{z})$. The graph verification keeps the certificate finite, while the online step stays local: one covariance query and one small square root. In the quadrotor study, the full-matrix field gives the largest gains precisely where scalar homothetic scaling is weakest, namely in coupled velocity-control projections. The remaining limitation is not online cost but certificate construction: the grid/graph step still needs disturbance-active coordinates or adaptive refinement in higher dimensions.

\bibliographystyle{IEEEtran}
\bibliography{ref}

\appendices

\section{Proof of Lemma~\ref{lem:contraction} (Strict Contraction)}
\label{app:contraction}

\begin{proof}
\emph{Step 1 (RPI implies Pontryagin containment).} By the defining property of the Pontryagin difference, $A \oplus B \subseteq C$ if and only if $A \subseteq C \ominus B$. Applying this to the RPI condition $\mathbf{A}_{\mathrm{cl}}\bar{\boldsymbol{\Omega}} \oplus \bar{\mathbf{W}} \subseteq \bar{\boldsymbol{\Omega}}$ with $A = \mathbf{A}_{\mathrm{cl}}\bar{\boldsymbol{\Omega}}$, $B = \bar{\mathbf{W}}$, $C = \bar{\boldsymbol{\Omega}}$ gives
\begin{equation}
\mathbf{A}_{\mathrm{cl}}\bar{\boldsymbol{\Omega}} \subseteq \bar{\boldsymbol{\Omega}} \ominus \bar{\mathbf{W}}.
\label{eq:pontryagin_step}
\end{equation}

\emph{Step 2 (Erosion by a set with interior is strict).} Since $\mathbf{0} \in \mathrm{int}(\bar{\mathbf{W}})$, there exists $\varepsilon > 0$ with $\varepsilon\mathcal{B}_2^n \subseteq \bar{\mathbf{W}}$. Let $R := \sup_{\mathbf{e} \in \bar{\boldsymbol{\Omega}}} \|\mathbf{e}\|_2 < \infty$ be the circumradius of the compact set $\bar{\boldsymbol{\Omega}}$. We claim
\begin{equation}
\bar{\boldsymbol{\Omega}} \ominus \bar{\mathbf{W}} \subseteq (1-\delta)\bar{\boldsymbol{\Omega}}, \qquad \delta := \varepsilon/R > 0.
\label{eq:erosion_strict}
\end{equation}
Take any $\mathbf{e} \in \bar{\boldsymbol{\Omega}} \ominus \bar{\mathbf{W}}$, so $\mathbf{e} + \bar{\mathbf{W}} \subseteq \bar{\boldsymbol{\Omega}}$ and in particular $\mathbf{e} + \varepsilon\mathcal{B}_2^n \subseteq \bar{\boldsymbol{\Omega}}$. Set $g := \gamma_{\bar{\boldsymbol{\Omega}}}(\mathbf{e})$. If $g = 0$ then $\mathbf{e} \in (1-\delta)\bar{\boldsymbol{\Omega}}$ trivially (note $\delta \le 1$ since $\varepsilon\mathcal{B}_2^n \subseteq \bar{\boldsymbol{\Omega}}$ forces $\varepsilon \le R$). If $g > 0$, compactness of $\bar{\boldsymbol{\Omega}}$ gives $\mathbf{e} \in g\bar{\boldsymbol{\Omega}}$, so $\mathbf{u} := \mathbf{e}/g \in \bar{\boldsymbol{\Omega}}$ with $\gamma_{\bar{\boldsymbol{\Omega}}}(\mathbf{u}) = \gamma_{\bar{\boldsymbol{\Omega}}}(\mathbf{e})/g = 1$ by positive homogeneity, and $\|\mathbf{u}\|_2 \le R$. The point $\mathbf{e} + \varepsilon\,\mathbf{u}/\|\mathbf{u}\|_2$ lies in $\mathbf{e} + \varepsilon\mathcal{B}_2^n \subseteq \bar{\boldsymbol{\Omega}}$, so its gauge is at most one. Since this point equals $(g + \varepsilon/\|\mathbf{u}\|_2)\,\mathbf{u}$, homogeneity and $\gamma_{\bar{\boldsymbol{\Omega}}}(\mathbf{u})=1$ give
\begin{equation}
\begin{aligned}
    &g + \frac{\varepsilon}{\|\mathbf{u}\|_2} \;=\; \gamma_{\bar{\boldsymbol{\Omega}}}\!\Bigl(\mathbf{e} + \varepsilon\,\tfrac{\mathbf{u}}{\|\mathbf{u}\|_2}\Bigr) \;\le\; 1
\;\; \\ &\Longrightarrow\;\;
g \le 1 - \frac{\varepsilon}{\|\mathbf{u}\|_2} \le 1 - \frac{\varepsilon}{R},
\end{aligned}
\end{equation}
which is~\eqref{eq:erosion_strict}. This makes explicit the constant in~\cite[Prop.~3.12]{blanchini2008set}.

\emph{Step 3 (Chain and conclude).} Combining~\eqref{eq:pontryagin_step} and~\eqref{eq:erosion_strict}, $\mathbf{A}_{\mathrm{cl}}\bar{\boldsymbol{\Omega}} \subseteq (1-\delta)\bar{\boldsymbol{\Omega}}$. Hence for every $\mathbf{e} \in \bar{\boldsymbol{\Omega}}$, $\gamma_{\bar{\boldsymbol{\Omega}}}(\mathbf{A}_{\mathrm{cl}}\mathbf{e}) \le 1-\delta$, and taking the supremum, $\rho_{\bar{\boldsymbol{\Omega}}} \le 1-\delta < 1$.

For the template pair of Problem~\ref{prob:main} the constant is explicit without Step~2: $\bar{\boldsymbol{\Omega}} \ominus \bar{\mathbf{W}} = r\mathcal{B}_2^n \ominus \mathcal{B}_2^n = (r-1)\mathcal{B}_2^n = (1-\tfrac{1}{r})\bar{\boldsymbol{\Omega}}$, so $\delta = 1/r$ and $\rho_{\bar{\boldsymbol{\Omega}}} \le 1 - 1/r = \gamma_{\mathrm{cl}}$.
\end{proof}

\section{Proof of Lemma~\ref{lem:ellipsoid_decomp} (Credible Ellipsoid)}
\label{app:ellipsoid}

\begin{proof}
\emph{Step 1 (Square root).} Since $\sigma_n^2 > 0$, the posterior covariance satisfies $\hat{\boldsymbol{\Sigma}}_w \in \mathbb{S}^n_{++}$ and admits a unique symmetric positive-definite square root $\hat{\boldsymbol{\Sigma}}_w^{1/2}$ with $(\hat{\boldsymbol{\Sigma}}_w^{1/2})^2 = \hat{\boldsymbol{\Sigma}}_w$~\cite[Thm.~7.2.6]{hornMatrixAnalysis2013}; its inverse $\hat{\boldsymbol{\Sigma}}_w^{-1/2}$ is also symmetric positive definite.

\emph{Step 2 (Whitening).} Define $\boldsymbol{\eta} := \hat{\boldsymbol{\Sigma}}_w^{-1/2}(\mathbf{w} - \hat{\boldsymbol{\mu}}_w)$, an invertible affine change of variables. Substituting $\mathbf{w} - \hat{\boldsymbol{\mu}}_w = \hat{\boldsymbol{\Sigma}}_w^{1/2}\boldsymbol{\eta}$,
\begin{equation}
(\mathbf{w} - \hat{\boldsymbol{\mu}}_w)^\top \hat{\boldsymbol{\Sigma}}_w^{-1}(\mathbf{w} - \hat{\boldsymbol{\mu}}_w)
= \boldsymbol{\eta}^\top \hat{\boldsymbol{\Sigma}}_w^{1/2}\hat{\boldsymbol{\Sigma}}_w^{-1}\hat{\boldsymbol{\Sigma}}_w^{1/2}\boldsymbol{\eta}
= \|\boldsymbol{\eta}\|_2^2,
\end{equation}
using $\hat{\boldsymbol{\Sigma}}_w^{1/2}\hat{\boldsymbol{\Sigma}}_w^{-1}\hat{\boldsymbol{\Sigma}}_w^{1/2} = \hat{\boldsymbol{\Sigma}}_w^{1/2}(\hat{\boldsymbol{\Sigma}}_w^{1/2})^{-1}(\hat{\boldsymbol{\Sigma}}_w^{1/2})^{-1}\hat{\boldsymbol{\Sigma}}_w^{1/2} = \mathbf{I}_n$.

\emph{Step 3 (Constraint in whitened coordinates).} The credible-region constraint becomes $\|\boldsymbol{\eta}\|_2^2 \le \chi^2_{n,1-\alpha}$, i.e.\ $\boldsymbol{\eta} \in c_{n,\alpha}\mathcal{B}_2^n$ with $c_{n,\alpha} = \sqrt{\chi^2_{n,1-\alpha}}$.

\emph{Step 4 (Invert).} Solving back, $\mathbf{w} = \hat{\boldsymbol{\mu}}_w + \hat{\boldsymbol{\Sigma}}_w^{1/2}\boldsymbol{\eta}$. As $\boldsymbol{\eta}$ ranges over $c_{n,\alpha}\mathcal{B}_2^n$, the image $\hat{\boldsymbol{\Sigma}}_w^{1/2}\boldsymbol{\eta}$ ranges over $c_{n,\alpha}\hat{\boldsymbol{\Sigma}}_w^{1/2}\mathcal{B}_2^n$ by linearity, and the translation by $\hat{\boldsymbol{\mu}}_w$ gives $\mathcal{E}(\mathbf{z}) = \hat{\boldsymbol{\mu}}_w \oplus c_{n,\alpha}\hat{\boldsymbol{\Sigma}}_w^{1/2}\mathcal{B}_2^n$.
\end{proof}

\section{Proof of Lemma~\ref{lem:P_bounded} (Boundedness)}
\label{app:bounded}

\begin{proof}
\emph{Part 1 (Upper bound).} The posterior covariance is the prior minus the explained variance,
\begin{equation}
\hat{\boldsymbol{\Sigma}}_w(\mathbf{z}) = \boldsymbol{\Sigma}_0(\mathbf{z}) - \boldsymbol{\Sigma}_{*\mathcal{D}}\bigl[\boldsymbol{\Sigma}_{\mathcal{D}\mathcal{D}} + \sigma_n^2\mathbf{I}\bigr]^{-1}\boldsymbol{\Sigma}_{\mathcal{D}*},
\label{eq:posterior_general}
\end{equation}
and the subtracted term is of the form $\mathbf{C}\mathbf{M}^{-1}\mathbf{C}^\top$ with $\mathbf{M} \succ \mathbf{0}$, hence positive semidefinite. Therefore $\hat{\boldsymbol{\Sigma}}_w(\mathbf{z}) \preceq \boldsymbol{\Sigma}_0(\mathbf{z}) \preceq \bar{k}_0\mathbf{I}_n$ by the uniform prior bound of Assumption~\ref{asm:kernel}. Since $\|\mathbf{P}\|_2 = c_{n,\alpha}\sqrt{\lambda_{\max}(\hat{\boldsymbol{\Sigma}}_w)}$, this gives $\mathbf{P}(\mathbf{z}) \preceq c_{n,\alpha}\sqrt{\bar{k}_0}\,\mathbf{I}_n =: p_{\max}\mathbf{I}_n$.

\emph{Part 2 (Lower bound: worst-case colocated dictionary).} We bound $\lambda_{\min}(\hat{\boldsymbol{\Sigma}}_w(\mathbf{z}_*))$ from below uniformly over training configurations.

\emph{Step L1.} Fix $\mathbf{z}_* \in \mathcal{Z}$ and consider the maximally informative configuration in which all $N_{\mathrm{gp}}$ active dictionary points coincide with $\mathbf{z}_*$. Writing $\boldsymbol{\Sigma}_{**} := K_{\mathrm{LMC}}(\mathbf{z}_*,\mathbf{z}_*)$,
\begin{equation}
\boldsymbol{\Sigma}_{*\mathcal{D}} = \mathbf{1}_{N_{\mathrm{gp}}}^\top \otimes \boldsymbol{\Sigma}_{**}, \qquad
\boldsymbol{\Sigma}_{\mathcal{D}\mathcal{D}} = \mathbf{1}_{N_{\mathrm{gp}}}\mathbf{1}_{N_{\mathrm{gp}}}^\top \otimes \boldsymbol{\Sigma}_{**}.
\end{equation}

\emph{Step L2 (Joint diagonalization).} Let $\boldsymbol{\Sigma}_{**} = \mathbf{V}\boldsymbol{\Lambda}\mathbf{V}^\top$ with $\boldsymbol{\Lambda} = \mathrm{diag}(\lambda_1,\ldots,\lambda_n)$. The averaged dictionary directions $\tfrac{1}{\sqrt{N_{\mathrm{gp}}}}\mathbf{1}\otimes\mathbf{v}_i$ are eigenvectors of $\boldsymbol{\Sigma}_{\mathcal{D}\mathcal{D}} + \sigma_n^2\mathbf{I}$ with eigenvalues $\sigma_n^2 + N_{\mathrm{gp}}\lambda_i$; the remaining directions lie in the nullspace of $\mathbf{1}\mathbf{1}^\top$, have eigenvalue $\sigma_n^2$, and are orthogonal to the cross-covariance with $\mathbf{z}_*$.

\emph{Step L3 (Posterior eigenvalue per direction).} Along eigendirection $i$, the cross-covariance vector is $\lambda_i(\mathbf{1}\otimes\mathbf{v}_i)$ and the explained variance evaluates to
\begin{equation}
\lambda_i^2\,(\mathbf{1}\otimes\mathbf{v}_i)^\top\bigl[\boldsymbol{\Sigma}_{\mathcal{D}\mathcal{D}}+\sigma_n^2\mathbf{I}\bigr]^{-1}(\mathbf{1}\otimes\mathbf{v}_i)
= \frac{N_{\mathrm{gp}}\lambda_i^2}{\sigma_n^2 + N_{\mathrm{gp}}\lambda_i},
\end{equation}
since $(\mathbf{1}\otimes\mathbf{v}_i)$ is an eigenvector of the bracketed matrix with eigenvalue $\sigma_n^2 + N_{\mathrm{gp}}\lambda_i$ and $\|\mathbf{1}\otimes\mathbf{v}_i\|_2^2 = N_{\mathrm{gp}}$. The posterior eigenvalue is therefore
\begin{equation}
\hat{\lambda}_i = \lambda_i - \frac{N_{\mathrm{gp}}\lambda_i^2}{\sigma_n^2 + N_{\mathrm{gp}}\lambda_i}
= \frac{\sigma_n^2\lambda_i}{\sigma_n^2 + N_{\mathrm{gp}}\lambda_i}.
\label{eq:posterior_eig}
\end{equation}
The multivariate problem decomposes into $n$ scalar computations, one per eigenvalue of $\boldsymbol{\Sigma}_{**}$.

\emph{Step L4 (Monotonicity in the prior eigenvalue).} The map $f(\lambda) := \sigma_n^2\lambda/(\sigma_n^2 + N_{\mathrm{gp}}\lambda)$ has $f'(\lambda) = \sigma_n^4/(\sigma_n^2+N_{\mathrm{gp}}\lambda)^2 > 0$, so it is strictly increasing, and the smallest posterior eigenvalue comes from the smallest prior eigenvalue:
\begin{equation}
\lambda_{\min}\bigl(\hat{\boldsymbol{\Sigma}}_w(\mathbf{z}_*)\bigr) \ge f\bigl(\lambda_{\min}(\boldsymbol{\Sigma}_{**})\bigr)
\ge \frac{\sigma_n^2\,\underline{\sigma}_0^2}{\sigma_n^2 + N_{\mathrm{gp}}\underline{\sigma}_0^2},
\end{equation}
using $\boldsymbol{\Sigma}_{**} \succeq \underline{\sigma}_0^2\mathbf{I}_n$.

\emph{Step L5 (Uniform bound).} Since $\bar{k}_0 \ge \underline{\sigma}_0^2$, replacing $\underline{\sigma}_0^2$ by $\bar{k}_0$ in the denominator only decreases the fraction, yielding the configuration-independent floor $\underline{\lambda} = \sigma_n^2\underline{\sigma}_0^2/(\sigma_n^2 + N_{\mathrm{gp}}\bar{k}_0)$ of~\eqref{eq:lambda_floor}.

\emph{Step L6 (Uniformization over active dictionaries).} 
The preceding colocated calculation gives a conservative reference bound for the most informative local configuration under the kernel assumptions used here: every active point is treated as maximally correlated with the query point, and all relevant covariance eigenvalues in the denominator are upper-bounded by $\bar{k}_0$. This produces a dictionary-independent lower envelope depending only on $(\sigma_n^2,\underline{\sigma}_0^2,\bar{k}_0,N_{\mathrm{gp}})$ rather than on the geometry of the active set. The explained variance is $\boldsymbol{\Sigma}_{*\mathcal{D}}(\boldsymbol{\Sigma}_{\mathcal{D}\mathcal{D}}+\sigma_n^2\mathbf{I})^{-1}\boldsymbol{\Sigma}_{\mathcal{D}*}$, and each cross-covariance block satisfies $\|K_{\mathrm{LMC}}(\mathbf{z}_*,\mathbf{z}^{(j)})\|_2 \le \bar{k}_0$ by positive-semidefiniteness of the joint prior (a matrix Cauchy--Schwarz bound). Colocation at $\mathbf{z}_*$ attains this ceiling for every block simultaneously, so no admissible dictionary of size at most $N_{\mathrm{gp}}$ explains more variance in the L\"owner sense than the colocated configuration. Hence the same floor
\[
\lambda_{\min}(\hat{\boldsymbol{\Sigma}}_w(\mathbf{z})) \ge \underline{\lambda}>0
\]
holds uniformly for every active dictionary with at most $N_{\mathrm{gp}}$ points and every $\mathbf{z}\in\mathcal{Z}$, and
\begin{equation}
\begin{aligned}
    \lambda_{\min}(\mathbf{P}(\mathbf{z})) &= c_{n,\alpha}\sqrt{\lambda_{\min}(\hat{\boldsymbol{\Sigma}}_w(\mathbf{z}))}\\ & \ge c_{n,\alpha}\sqrt{\underline{\lambda}} =: p_{\min} > 0.
\end{aligned}
\end{equation}
The bound needs finite $N_{\mathrm{gp}}$, $\sigma_n^2 > 0$, and a nondegenerate prior; it does not need dense data coverage of $\mathcal{Z}$.
\end{proof}

\section{Proof of Lemma~\ref{lem:P_lipschitz} (Lipschitz Continuity)}
\label{app:lipschitz}

\begin{proof}
\emph{Step 1 (Lipschitz posterior covariance).} In~\eqref{eq:posterior_general}, the matrix $\mathbf{M} := \boldsymbol{\Sigma}_{\mathcal{D}\mathcal{D}} + \sigma_n^2\mathbf{I}$ is constant in $\mathbf{z}$, while $\boldsymbol{\Sigma}_0(\cdot)$ and the cross-covariance $\mathbf{C}(\mathbf{z}) := \boldsymbol{\Sigma}_{*\mathcal{D}}(\mathbf{z})$ inherit Lipschitz continuity from the kernel. For a single output, each entry of $\mathbf{k}_*(\mathbf{z})$ is $L_k$-Lipschitz, so $\|\mathbf{k}_*(\mathbf{z}_1)-\mathbf{k}_*(\mathbf{z}_2)\|_2 \le \sqrt{N}L_k\|\mathbf{z}_1-\mathbf{z}_2\|$ and $\|\mathbf{k}_*(\mathbf{z})\|_2 \le \sqrt{N}\bar{k}$. With $\|\mathbf{M}^{-1}\|_2 \le 1/\sigma_n^2$, the explained-variance term $q(\mathbf{z}) := \mathbf{k}_*^\top\mathbf{M}^{-1}\mathbf{k}_*$ satisfies
\begin{equation}
\begin{aligned}
|q(\mathbf{z}_1)-q(\mathbf{z}_2)|
&\le \|\mathbf{M}^{-1}\|_2\bigl(\|\mathbf{k}_*(\mathbf{z}_1)\|_2 + \|\mathbf{k}_*(\mathbf{z}_2)\|_2\bigr)\nonumber\\
&\quad\times\|\mathbf{k}_*(\mathbf{z}_1)-\mathbf{k}_*(\mathbf{z}_2)\|_2
\\ &\le \frac{2N\bar{k}L_k}{\sigma_n^2}\|\mathbf{z}_1-\mathbf{z}_2\|,
\end{aligned}
\end{equation}
so the scalar posterior variance is Lipschitz with constant $L_k(1 + 2N\bar{k}/\sigma_n^2)$; the LMC blocks obey the analogous bound entrywise, giving a finite constant $L_\Sigma$ depending only on $L_k$, $N_{\mathrm{gp}}$, $\bar{k}_0$, and $\sigma_n^2$.

\emph{Step 2 (Lipschitz square root on a spectral floor).} Let $\mathbf{A},\mathbf{B} \in \mathbb{S}^n_{++}$ with $\lambda_{\min}(\mathbf{A}),\lambda_{\min}(\mathbf{B}) \ge \underline{\lambda}$, and set $\mathbf{X}=\mathbf{A}^{1/2}$, $\mathbf{Y}=\mathbf{B}^{1/2}$, $\mathbf{D}=\mathbf{X}-\mathbf{Y}$. Then
\begin{equation}
\mathbf{A}-\mathbf{B} = \mathbf{X}^2-\mathbf{Y}^2 = \mathbf{X}\mathbf{D} + \mathbf{D}\mathbf{Y},
\end{equation}
a Sylvester equation in $\mathbf{D}$. Writing it in the eigenbases $\mathbf{X}=\mathbf{U}\,\mathrm{diag}(x_i)\,\mathbf{U}^\top$ and $\mathbf{Y}=\mathbf{V}\,\mathrm{diag}(y_j)\,\mathbf{V}^\top$, the transformed entries satisfy $\tilde{D}_{ij} = \tilde{C}_{ij}/(x_i+y_j)$ where $\tilde{\mathbf{C}} = \mathbf{U}^\top(\mathbf{A}-\mathbf{B})\mathbf{V}$. Since $x_i, y_j \ge \sqrt{\underline{\lambda}}$ and the Frobenius norm is unitarily invariant,
\begin{equation}
\|\mathbf{A}^{1/2}-\mathbf{B}^{1/2}\|_F \le \frac{\|\mathbf{A}-\mathbf{B}\|_F}{2\sqrt{\underline{\lambda}}},
\end{equation}
the bound also obtainable from the integral representation of the square root~\cite{bhatiaMatrixAnalysis1997}.

\emph{Step 3 (Verify the floor and compose).} Lemma~\ref{lem:P_bounded} gives $\lambda_{\min}(\hat{\boldsymbol{\Sigma}}_w(\mathbf{z})) \ge \underline{\lambda} > 0$ on all of $\mathcal{Z}$, so Step~2 applies along any pair $\mathbf{z}_1,\mathbf{z}_2$. Composing the three Lipschitz maps $\mathbf{z} \mapsto \hat{\boldsymbol{\Sigma}}_w(\mathbf{z}) \mapsto \hat{\boldsymbol{\Sigma}}_w(\mathbf{z})^{1/2} \mapsto c_{n,\alpha}\hat{\boldsymbol{\Sigma}}_w(\mathbf{z})^{1/2}$,
\begin{equation}
\|\mathbf{P}(\mathbf{z}_1)-\mathbf{P}(\mathbf{z}_2)\|_F \le \underbrace{\frac{c_{n,\alpha}L_\Sigma}{2\sqrt{\underline{\lambda}}}}_{=:L_P}\,\|\mathbf{z}_1-\mathbf{z}_2\|.
\end{equation}
\end{proof}

\section{Proof of Lemma~\ref{lem:gauge_props} (Gauge Properties)}
\label{app:gauge}

\begin{proof}
\emph{(i)} If $\gamma_{\mathbf{S}}(\mathbf{w}) \le 1$ then, by compactness of $\mathbf{S}$, the infimum is attained and $\mathbf{w} \in \lambda\mathbf{S} \subseteq \mathbf{S}$ for some $\lambda \le 1$ (convexity with $\mathbf{0}\in\mathbf{S}$ gives $\lambda\mathbf{S} \subseteq \mathbf{S}$). Conversely $\mathbf{w}\in\mathbf{S}$ gives $\gamma_{\mathbf{S}}(\mathbf{w}) \le 1$ with $\lambda=1$.
\emph{(ii)} For $\alpha > 0$, $\mathbf{w} \in \lambda\mathbf{S} \Leftrightarrow \alpha\mathbf{w} \in \alpha\lambda\mathbf{S}$, so the feasible $\lambda$ sets scale by $\alpha$ and so do their infima; $\alpha=0$ is immediate since $\mathbf{0}\in\mathbf{S}$.
\emph{(iii)} Take $\lambda_i > \gamma_{\mathbf{S}}(\mathbf{w}_i)$ with $\mathbf{w}_i \in \lambda_i\mathbf{S}$. By convexity,
\begin{equation}
\frac{\mathbf{w}_1+\mathbf{w}_2}{\lambda_1+\lambda_2} = \frac{\lambda_1}{\lambda_1+\lambda_2}\frac{\mathbf{w}_1}{\lambda_1} + \frac{\lambda_2}{\lambda_1+\lambda_2}\frac{\mathbf{w}_2}{\lambda_2} \in \mathbf{S},
\end{equation}
so $\gamma_{\mathbf{S}}(\mathbf{w}_1+\mathbf{w}_2) \le \lambda_1+\lambda_2$; let $\lambda_i \downarrow \gamma_{\mathbf{S}}(\mathbf{w}_i)$.
\emph{(iv)} $\mathbf{w} \in \lambda\,\mathbf{M}\mathbf{S} \Leftrightarrow \mathbf{M}^{-1}\mathbf{w} \in \lambda\mathbf{S}$, so the infima coincide.
\end{proof}

\section{Proof of Theorem~\ref{thm:aniso_rpi} (Anisotropic RPI)}
\label{app:aniso_rpi}

\begin{proof}
\emph{Step 1 (Error in template coordinates).} Suppose $\mathbf{e}_k \in \boldsymbol{\Omega}(\mathbf{z}_k) = \mathbf{P}_k\bar{\boldsymbol{\Omega}}$ with $\mathbf{P}_k := \mathbf{P}(\mathbf{z}_k)$. Then there exists $\boldsymbol{\xi} \in \bar{\boldsymbol{\Omega}}$ with $\mathbf{e}_k = \mathbf{P}_k\boldsymbol{\xi}$.

\emph{Step 2 (Disturbance in template coordinates).} Likewise $\mathbf{w}_k \in \mathbb{W}(\mathbf{z}_k) = \mathbf{P}_k\bar{\mathbf{W}}$ gives $\boldsymbol{\eta} \in \bar{\mathbf{W}}$ with $\mathbf{w}_k = \mathbf{P}_k\boldsymbol{\eta}$.

\emph{Step 3 (Propagate).} The error dynamics~\eqref{eq:error_dynamics} yield $\mathbf{e}_{k+1} = \mathbf{A}_{\mathrm{cl}}\mathbf{P}_k\boldsymbol{\xi} + \mathbf{P}_k\boldsymbol{\eta}$.

\emph{Step 4 (Successor coordinates).} With $\mathbf{P}_{k+1} := \mathbf{P}(\mathbf{z}_{k+1})$ invertible, define $\tilde{\mathbf{e}}_{k+1} := \mathbf{P}_{k+1}^{-1}\mathbf{e}_{k+1}$. Then $\mathbf{e}_{k+1} \in \mathbf{P}_{k+1}\bar{\boldsymbol{\Omega}}$ iff $\tilde{\mathbf{e}}_{k+1} \in \bar{\boldsymbol{\Omega}}$.

\emph{Step 5 (Minkowski structure).} By linearity,
\begin{equation}
\tilde{\mathbf{e}}_{k+1} = \underbrace{\mathbf{P}_{k+1}^{-1}\mathbf{A}_{\mathrm{cl}}\mathbf{P}_k}_{\mathbf{M}}\,\boldsymbol{\xi} + \mathbf{P}_{k+1}^{-1}\mathbf{P}_k\,\boldsymbol{\eta},
\end{equation}
and as $(\boldsymbol{\xi},\boldsymbol{\eta})$ range over $\bar{\boldsymbol{\Omega}}\times\bar{\mathbf{W}}$, the set of attainable $\tilde{\mathbf{e}}_{k+1}$ is exactly $\mathbf{P}_{k+1}^{-1}\mathbf{A}_{\mathrm{cl}}\mathbf{P}_k\bar{\boldsymbol{\Omega}} \oplus \mathbf{P}_{k+1}^{-1}\mathbf{P}_k\bar{\mathbf{W}}$.

\emph{Step 6 (Apply the hypothesis).} Condition~\eqref{eq:aniso_rpi_cond} states this Minkowski sum lies in $\bar{\boldsymbol{\Omega}}$. Hence $\tilde{\mathbf{e}}_{k+1} \in \bar{\boldsymbol{\Omega}}$ and $\mathbf{e}_{k+1} = \mathbf{P}_{k+1}\tilde{\mathbf{e}}_{k+1} \in \boldsymbol{\Omega}(\mathbf{z}_{k+1})$.

\emph{Step 7 (Conclude).} The implication holds for every $\boldsymbol{\xi}\in\bar{\boldsymbol{\Omega}}$, $\boldsymbol{\eta}\in\bar{\mathbf{W}}$, and every transition $\mathbf{z}_k \to \mathbf{z}_{k+1}$, which is the definition of a local RPI field.
\end{proof}

\section{Proof of Lemma~\ref{lem:eigenvalue_bounds} (Eigenvalue Bounds)}
\label{app:eigenvalue_bounds}

\begin{proof}
\emph{Step 1 (Matrix perturbation bound).} For $\mathbf{z} \in \mathcal{R}_i$ with representative $\mathbf{z}_i \in \mathcal{R}_i$, Lipschitz continuity (Assumption~\ref{asm:P_regularity}, in the Frobenius norm by Lemma~\ref{lem:P_lipschitz}) and $\|\cdot\|_2 \le \|\cdot\|_F$ give
\begin{equation}
\|\mathbf{P}(\mathbf{z}) - \mathbf{P}_i\|_2 \le \|\mathbf{P}(\mathbf{z}) - \mathbf{P}_i\|_F \le L_P\|\mathbf{z}-\mathbf{z}_i\|_2 \le L_P\,\mathrm{diam}(\mathcal{R}_i) = \delta_i.
\end{equation}

\emph{Step 2 (Symmetric perturbation).} Write $\mathbf{P}(\mathbf{z}) = \mathbf{P}_i + \mathbf{E}$ with $\mathbf{E} \in \mathbb{S}^n$, $\|\mathbf{E}\|_2 \le \delta_i$.

\emph{Step 3 (Weyl).} Weyl's inequality~\cite[Cor.~4.3.15]{hornMatrixAnalysis2013} states $|\lambda_k(\mathbf{P}_i + \mathbf{E}) - \lambda_k(\mathbf{P}_i)| \le \|\mathbf{E}\|_2$ for every ordered eigenvalue index $k$. Applying it at $k$ corresponding to $\lambda_{\max}$ and $\lambda_{\min}$ gives~\eqref{eq:lambda_max_bound} and~\eqref{eq:lambda_min_bound}. Under Assumption~\ref{asm:grid_fine}, $p_i^- = \lambda_{\min}(\mathbf{P}_i) - \delta_i > 0$, so $\mathbf{P}(\mathbf{z}) \succ \mathbf{0}$ persists throughout $\mathcal{R}_i$.
\end{proof}

\section{Proof of Lemma~\ref{lem:spectral_product_bounds} (Product Bounds)}
\label{app:spectral_product}

\begin{proof}
\emph{Step 1.} For $\mathbf{z} \in \mathcal{R}_i$, $\|\mathbf{P}(\mathbf{z})\|_2 = \lambda_{\max}(\mathbf{P}(\mathbf{z})) \le p_i^+$ by Lemma~\ref{lem:eigenvalue_bounds}.

\emph{Step 2.} For $\mathbf{z}' \in \mathcal{R}_j$, positive definiteness gives $\|\mathbf{P}(\mathbf{z}')^{-1}\|_2 = 1/\lambda_{\min}(\mathbf{P}(\mathbf{z}')) \le 1/p_j^-$, well defined since $p_j^- > 0$ by Assumption~\ref{asm:grid_fine}.

\emph{Step 3.} In the working coordinates (Euclidean if $\|\mathbf{A}_{\mathrm{cl}}\|_2 < 1$, otherwise $\mathbf{P}_{\mathrm{d}}$-weighted), the closed-loop map has norm at most $\gamma_{\mathrm{cl}}$. Submultiplicativity of the spectral norm then yields
\begin{equation}
\|\mathbf{M}\|_2 \le \|\mathbf{P}(\mathbf{z}')^{-1}\|_2\,\gamma_{\mathrm{cl}}\,\|\mathbf{P}(\mathbf{z})\|_2 \le \frac{p_i^+}{p_j^-}\gamma_{\mathrm{cl}}.
\end{equation}

\emph{Step 4.} Identically, $\|\mathbf{N}\|_2 = \tfrac{1}{r}\|\mathbf{P}(\mathbf{z}')^{-1}\mathbf{P}(\mathbf{z})\|_2 \le \tfrac{1}{r}\,p_i^+/p_j^-$. Both bounds use only node-computable quantities; no evaluation of $\mathbf{P}(\cdot)$ at interior points of the regions is needed.
\end{proof}

\section{Proof of Theorem~\ref{thm:graph_invariance} (Graph Invariance)}
\label{app:graph_invariance}

\begin{proof}
\emph{Step 1 (Reduce to the norm condition).} With $\bar{\boldsymbol{\Omega}} = r\mathcal{B}_2^n$ and $\bar{\mathbf{W}} = \mathcal{B}_2^n$, condition~\eqref{eq:aniso_rpi_cond} for a transition $\mathbf{z} \in \mathcal{R}_i \to \mathbf{z}' \in \mathcal{R}_j$ reads: for all $\|\boldsymbol{\xi}\|_2 \le r$, $\|\boldsymbol{\eta}\|_2 \le 1$, $\|\mathbf{P}'^{-1}\mathbf{A}_{\mathrm{cl}}\mathbf{P}\boldsymbol{\xi} + \mathbf{P}'^{-1}\mathbf{P}\boldsymbol{\eta}\|_2 \le r$. Substituting $\boldsymbol{\xi} = r\boldsymbol{\xi}'$ with $\|\boldsymbol{\xi}'\|_2 \le 1$ and dividing by $r$,
\begin{equation}
\sup_{\|\boldsymbol{\xi}'\|_2 \le 1,\ \|\boldsymbol{\eta}\|_2 \le 1}\|\mathbf{M}\boldsymbol{\xi}' + \mathbf{N}\boldsymbol{\eta}\|_2 \le 1,
\label{eq:sup_condition_app}
\end{equation}
with $\mathbf{M},\mathbf{N}$ as in Lemma~\ref{lem:spectral_product_bounds}.

\emph{Step 2 (Triangle inequality).} For any admissible $\boldsymbol{\xi}',\boldsymbol{\eta}$,
\begin{equation}
\begin{aligned}
\|\mathbf{M}\boldsymbol{\xi}' + \mathbf{N}\boldsymbol{\eta}\|_2 \le& \|\mathbf{M}\|_2\|\boldsymbol{\xi}'\|_2 \\+ &\|\mathbf{N}\|_2\|\boldsymbol{\eta}\|_2 \le \|\mathbf{M}\|_2 + \|\mathbf{N}\|_2.
\end{aligned}
\end{equation}

\emph{Step 3 (Insert spectral bounds).} Lemma~\ref{lem:spectral_product_bounds} gives $\|\mathbf{M}\|_2 + \|\mathbf{N}\|_2 \le \tfrac{p_i^+}{p_j^-}(\gamma_{\mathrm{cl}} + \tfrac{1}{r})$, valid for \emph{all} $\mathbf{z} \in \mathcal{R}_i$, $\mathbf{z}' \in \mathcal{R}_j$, not only the nodes.

\emph{Step 4 (Arc condition).} The hypothesis~\eqref{eq:spectral_invariance_condition} bounds the right side by $1$ on every arc $(i,j) \in \mathcal{A}$, so~\eqref{eq:sup_condition_app} holds for every transition covered by an arc.

\emph{Step 5 (Completeness).} Assumption~\ref{asm:graph_complete} guarantees every dynamically feasible transition is covered by some arc. Hence~\eqref{eq:sup_condition_app} holds for all transitions in $\mathcal{Z}$, and Theorem~\ref{thm:aniso_rpi} yields local RPI of $\boldsymbol{\Omega}(\mathbf{z}) = \mathbf{P}(\mathbf{z})\,r\mathcal{B}_2^n$. The uncountable verification problem reduces to $|\mathcal{A}|$ scalar inequalities.
\end{proof}

\section{Scalar Equivalence in Theorem~\ref{thm:lmi_robust}}
\label{app:scalar_equiv}

We verify the claim that, for the scalar-bound LMI~\eqref{eq:scalar_lmi}, feasibility with $\lambda_1,\lambda_2 \ge 0$, $\lambda_1+\lambda_2 \le 1$ is equivalent to $\bar{M}_{ij} + \bar{N}_{ij} \le 1$. Write $a := \bar{M}_{ij}$, $b := \bar{N}_{ij}$. The block matrix in~\eqref{eq:scalar_lmi} equals $\mathbf{K}\otimes\mathbf{I}_n$ with
\begin{equation}
\mathbf{K} = \begin{bmatrix}\lambda_1 - a^2 & -ab\\ -ab & \lambda_2 - b^2\end{bmatrix},
\end{equation}
and $\mathbf{K}\otimes\mathbf{I}_n \succeq \mathbf{0} \Leftrightarrow \mathbf{K} \succeq \mathbf{0}$, i.e.
\begin{equation}
\lambda_1 \ge a^2, \qquad \lambda_2 \ge b^2, \qquad (\lambda_1 - a^2)(\lambda_2 - b^2) \ge a^2b^2.
\label{eq:K_psd}
\end{equation}
Minimize $\lambda_1+\lambda_2$ subject to~\eqref{eq:K_psd}. If $ab > 0$, parameterize $\lambda_1 = a^2 + t$ with $t > 0$; the product constraint forces $\lambda_2 \ge b^2 + a^2b^2/t$, so
\begin{equation}
\lambda_1+\lambda_2 \ge a^2 + b^2 + t + \frac{a^2b^2}{t} \ge a^2 + b^2 + 2ab = (a+b)^2,
\end{equation}
with equality at $t = ab$ (AM-GM). If $ab = 0$, the cross term vanishes and the minimum is $a^2+b^2 = (a+b)^2$ directly. In both cases the minimum achievable multiplier sum is $(a+b)^2$, so a feasible pair with $\lambda_1+\lambda_2 \le 1$ exists iff $(a+b)^2 \le 1$, i.e.\ $a+b \le 1$, which is~\eqref{eq:spectral_invariance_condition}. For the conservatism remark in Section~\ref{sec:invariance}: any pair with $a^2+b^2 \le 1$ satisfies $a+b \le \sqrt{2}$ by Cauchy-Schwarz, with equality at $a=b=1/\sqrt{2}$, so the triangle-inequality certificate loses at most a factor $\sqrt{2}$ against the quadratic energy bound, the worst case occurring when $\mathbf{M}\boldsymbol{\xi}$ and $\mathbf{N}\boldsymbol{\eta}$ are collinear.

\section{Proof of Proposition~\ref{prop:field_ordering} (Field Ordering)}
\label{app:field_ordering}

\begin{proof}
\emph{Part 1 (Containment).} Since $\mathbf{P}_i = c\,\boldsymbol{\Sigma}_i^{1/2}$ and $\bar{\boldsymbol{\Omega}} = r\mathcal{B}_2^n$, we have $\boldsymbol{\Omega}_i = cr\,\boldsymbol{\Sigma}_i^{1/2}\mathcal{B}_2^n$, and the positive scalars $c, r$ factor out identically on both sides; it suffices to show $\boldsymbol{\Sigma}_1^{1/2}\mathcal{B}_2^n \subseteq \boldsymbol{\Sigma}_2^{1/2}\mathcal{B}_2^n$ pointwise in $\mathbf{z}$.

\emph{Step 1a.} Let $\mathbf{e} \in \boldsymbol{\Sigma}_1^{1/2}\mathcal{B}_2^n$, so $\mathbf{e} = \boldsymbol{\Sigma}_1^{1/2}\boldsymbol{\xi}$ for some $\|\boldsymbol{\xi}\|_2 \le 1$.

\emph{Step 1b.} Define $\boldsymbol{\zeta} := \boldsymbol{\Sigma}_2^{-1/2}\mathbf{e} = \mathbf{C}\boldsymbol{\xi}$ with $\mathbf{C} := \boldsymbol{\Sigma}_2^{-1/2}\boldsymbol{\Sigma}_1^{1/2}$; we must show $\|\boldsymbol{\zeta}\|_2 \le 1$.

\emph{Step 1c (Congruence).} Because $\boldsymbol{\Sigma}_1^{1/2}$ is symmetric,
\begin{equation}
\mathbf{C}\mathbf{C}^\top = \boldsymbol{\Sigma}_2^{-1/2}\boldsymbol{\Sigma}_1\boldsymbol{\Sigma}_2^{-1/2} \preceq \boldsymbol{\Sigma}_2^{-1/2}\boldsymbol{\Sigma}_2\boldsymbol{\Sigma}_2^{-1/2} = \mathbf{I}_n,
\end{equation}
since congruence by the invertible $\boldsymbol{\Sigma}_2^{-1/2}$ preserves the L\"owner order applied to $\boldsymbol{\Sigma}_1 \preceq \boldsymbol{\Sigma}_2$.

\emph{Step 1d (Spectral norm).} $\mathbf{C}\mathbf{C}^\top$ is symmetric PSD with eigenvalues at most one, and $\|\mathbf{C}\|_2^2 = \lambda_{\max}(\mathbf{C}\mathbf{C}^\top) \le 1$. (This is the step that fails for a generic non-symmetric $\mathbf{P}_2^{-1}\mathbf{P}_1$, whose eigenvalues do not control its singular values; the hypothesis on the $\boldsymbol{\Sigma}_i$ supplies the symmetric $\mathbf{C}\mathbf{C}^\top$ form.)

\emph{Step 1e.} $\|\boldsymbol{\zeta}\|_2 = \|\mathbf{C}\boldsymbol{\xi}\|_2 \le \|\mathbf{C}\|_2\|\boldsymbol{\xi}\|_2 \le 1$, so $\mathbf{e} = \boldsymbol{\Sigma}_2^{1/2}\boldsymbol{\zeta} \in \boldsymbol{\Sigma}_2^{1/2}\mathcal{B}_2^n$. Scaling by $cr$ gives $\boldsymbol{\Omega}_1(\mathbf{z}) \subseteq \boldsymbol{\Omega}_2(\mathbf{z})$.

\emph{Part 2 (Both RPI).} Both fields satisfy~\eqref{eq:spectral_invariance_condition} by hypothesis, so Theorem~\ref{thm:graph_invariance} certifies both; the containment from Part~1 means $\boldsymbol{\Omega}_1$ gives the tighter tubes.
\end{proof}

\section{Proof of Theorem~\ref{thm:monotone} (Monotone Contraction)}
\label{app:monotone}

Part~(i) is proved in full via explicit block inversion (Route~C); two shorter alternative arguments (Routes~A and~B) follow. Parts~(ii) and~(iii) close the proof.

\subsection*{Part (i), Route C: full block-inversion derivation}

\emph{Setup.} At epoch $q$, the dataset is $\mathcal{D}^{(q)}=\{(\mathbf{z}^{(j)},\mathbf{w}^{(j)})\}_{j=1}^{N_q}$ and the posterior covariance at a test point $\mathbf{z}_*$ is
\begin{equation}
\hat{\boldsymbol{\Sigma}}_w^{(q)}(\mathbf{z}_*)=\boldsymbol{\Sigma}_{**}-\boldsymbol{\Sigma}_{*\mathcal{D}^{(q)}}\bigl[\boldsymbol{\Sigma}_{\mathcal{D}^{(q)}\mathcal{D}^{(q)}}+\sigma_n^2\mathbf{I}\bigr]^{-1}\boldsymbol{\Sigma}_{\mathcal{D}^{(q)}*},
\end{equation}
with $\boldsymbol{\Sigma}_{**} := K_{\mathrm{LMC}}(\mathbf{z}_*,\mathbf{z}_*)$. At epoch $q{+}1$ one datum $(\mathbf{z}^{\mathrm{new}},\mathbf{w}^{\mathrm{new}})$ is added.

\emph{Step 1 (Shorthand).} Fix $\mathbf{z}_*$ and define
\begin{equation}
    \begin{aligned}
\mathbf{C}_1&:=\boldsymbol{\Sigma}_{*\mathcal{D}^{(q)}}\in\mathbb{R}^{n\times nN_q}, \\
\mathbf{M}_1&:=\boldsymbol{\Sigma}_{\mathcal{D}^{(q)}\mathcal{D}^{(q)}}+\sigma_n^2\mathbf{I}\succ\mathbf{0},\nonumber\\
\mathbf{c}_2&:=K_{\mathrm{LMC}}(\mathbf{z}_*,\mathbf{z}^{\mathrm{new}})\in\mathbb{R}^{n\times n}, \\
\mathbf{s}&:=\boldsymbol{\Sigma}_{\mathcal{D}^{(q)}\mathrm{new}}\in\mathbb{R}^{nN_q\times n},\nonumber\\
\mathbf{d}&:=K_{\mathrm{LMC}}(\mathbf{z}^{\mathrm{new}},\mathbf{z}^{\mathrm{new}})+\sigma_n^2\mathbf{I}_n\succ\mathbf{0}.
\end{aligned}
\end{equation}

\emph{Step 2 (Explained variance).} Define $\boldsymbol{\Phi}^{(q)} := \boldsymbol{\Sigma}_{**} - \hat{\boldsymbol{\Sigma}}_w^{(q)}(\mathbf{z}_*) = \mathbf{C}_1\mathbf{M}_1^{-1}\mathbf{C}_1^\top \succeq \mathbf{0}$. Since $\boldsymbol{\Sigma}_{**}$ is identical at both epochs, the goal $\hat{\boldsymbol{\Sigma}}_w^{(q+1)} \preceq \hat{\boldsymbol{\Sigma}}_w^{(q)}$ is equivalent to $\boldsymbol{\Phi}^{(q+1)} \succeq \boldsymbol{\Phi}^{(q)}$.

\emph{Step 3 (Augmented system).} At epoch $q{+}1$,
\begin{equation}
\boldsymbol{\Sigma}_{*\mathcal{D}^{(q+1)}}=\begin{bmatrix}\mathbf{C}_1&\mathbf{c}_2\end{bmatrix}, \
\mathbf{M}_2:=\begin{bmatrix}\mathbf{M}_1&\mathbf{s}\\ \mathbf{s}^\top&\mathbf{d}\end{bmatrix},
\end{equation}
and $\boldsymbol{\Phi}^{(q+1)} = [\mathbf{C}_1\ \mathbf{c}_2]\,\mathbf{M}_2^{-1}\,[\mathbf{C}_1\ \mathbf{c}_2]^\top$.

\emph{Step 4 (Block inversion).} Define the Schur complement $\boldsymbol{\Delta} := \mathbf{d} - \mathbf{s}^\top\mathbf{M}_1^{-1}\mathbf{s}$. Since $\mathbf{M}_2 \succ \mathbf{0}$ and $\mathbf{M}_1 \succ \mathbf{0}$, the Schur-complement characterization of positive definiteness gives $\boldsymbol{\Delta} \succ \mathbf{0}$; expanding its definition, $\boldsymbol{\Delta} = \hat{\boldsymbol{\Sigma}}_w^{(q)}(\mathbf{z}^{\mathrm{new}}) + \sigma_n^2\mathbf{I}_n$, the epoch-$q$ posterior at the new point plus noise. The block-inversion formula~\cite[Sec.~0.7.3]{hornMatrixAnalysis2013} yields
\begin{equation}
\mathbf{M}_2^{-1}=\begin{bmatrix}
\mathbf{M}_1^{-1}+\mathbf{M}_1^{-1}\mathbf{s}\boldsymbol{\Delta}^{-1}\mathbf{s}^\top\mathbf{M}_1^{-1} & -\mathbf{M}_1^{-1}\mathbf{s}\boldsymbol{\Delta}^{-1}\\[2pt]
-\boldsymbol{\Delta}^{-1}\mathbf{s}^\top\mathbf{M}_1^{-1} & \boldsymbol{\Delta}^{-1}
\end{bmatrix}.
\end{equation}

\emph{Step 5 (Four terms).} Carrying out the block product $\boldsymbol{\Phi}^{(q+1)} = T_1+T_2+T_3+T_4$:
\begin{equation}    
\begin{aligned}
T_1 &= \mathbf{C}_1\bigl[\mathbf{M}_1^{-1}+\mathbf{M}_1^{-1}\mathbf{s}\boldsymbol{\Delta}^{-1}\mathbf{s}^\top\mathbf{M}_1^{-1}\bigr]\mathbf{C}_1^\top \nonumber\\
    &= \boldsymbol{\Phi}^{(q)} + (\mathbf{C}_1\mathbf{M}_1^{-1}\mathbf{s})\boldsymbol{\Delta}^{-1}(\mathbf{C}_1\mathbf{M}_1^{-1}\mathbf{s})^\top,\\
T_2 &= -(\mathbf{C}_1\mathbf{M}_1^{-1}\mathbf{s})\boldsymbol{\Delta}^{-1}\mathbf{c}_2^\top, \qquad T_3 = T_2^\top,\\
T_4 &= \mathbf{c}_2\boldsymbol{\Delta}^{-1}\mathbf{c}_2^\top.
\end{aligned}
\end{equation}

\emph{Step 6 (Factor the increment).} Define the predicted cross-covariance $\mathbf{Q} := \mathbf{C}_1\mathbf{M}_1^{-1}\mathbf{s}$ (the part of the $(\mathbf{z}_*,\mathbf{z}^{\mathrm{new}})$ correlation already explained by old data) and the residual $\mathbf{R} := \mathbf{c}_2 - \mathbf{Q}$ (the new information). Collecting Step~5,
\begin{equation}
\begin{aligned}
    \boldsymbol{\Phi}^{(q+1)} &= \boldsymbol{\Phi}^{(q)} + \mathbf{Q}\boldsymbol{\Delta}^{-1}\mathbf{Q}^\top - \mathbf{Q}\boldsymbol{\Delta}^{-1}\mathbf{c}_2^\top \\ &- \mathbf{c}_2\boldsymbol{\Delta}^{-1}\mathbf{Q}^\top + \mathbf{c}_2\boldsymbol{\Delta}^{-1}\mathbf{c}_2^\top,
\end{aligned}
\end{equation}
and direct expansion of $\mathbf{R}\boldsymbol{\Delta}^{-1}\mathbf{R}^\top = (\mathbf{c}_2-\mathbf{Q})\boldsymbol{\Delta}^{-1}(\mathbf{c}_2-\mathbf{Q})^\top$ reproduces the last four terms exactly. Therefore
\begin{equation}
\boldsymbol{\Phi}^{(q+1)} - \boldsymbol{\Phi}^{(q)} = \mathbf{R}\,\boldsymbol{\Delta}^{-1}\,\mathbf{R}^\top.
\label{eq:variance_increment_app}
\end{equation}

\emph{Step 7 (Positive semidefiniteness).} For any $\mathbf{v}\in\mathbb{R}^n$, with $\mathbf{w} := \mathbf{R}^\top\mathbf{v}$, $\mathbf{v}^\top\mathbf{R}\boldsymbol{\Delta}^{-1}\mathbf{R}^\top\mathbf{v} = \mathbf{w}^\top\boldsymbol{\Delta}^{-1}\mathbf{w} = \|\boldsymbol{\Delta}^{-1/2}\mathbf{w}\|_2^2 \ge 0$ since $\boldsymbol{\Delta}^{-1} \succ \mathbf{0}$.

\emph{Step 8 (Conclude, and extend to batches).} Hence $\boldsymbol{\Phi}^{(q+1)} \succeq \boldsymbol{\Phi}^{(q)}$, i.e.\ $\hat{\boldsymbol{\Sigma}}_w^{(q+1)}(\mathbf{z}_*) \preceq \hat{\boldsymbol{\Sigma}}_w^{(q)}(\mathbf{z}_*)$ for every $\mathbf{z}_* \in \mathcal{Z}$. A nested update $\mathcal{D}^{(q)} \subseteq \mathcal{D}^{(q+1)}$ adding $m \ge 1$ points is a composition of $m$ single additions (equivalently, the identical derivation with $\mathbf{c}_2, \mathbf{s}, \mathbf{d}$ of block width $nm$), so the ordering holds for arbitrary nested dictionaries.

The increment~\eqref{eq:variance_increment_app} has a direct reading: $\mathbf{R}$ is the epoch-$q$ posterior cross-covariance between $\mathbf{z}_*$ and the new point, and $\boldsymbol{\Delta}$ is the epoch-$q$ posterior variance at the new point plus noise. Information gained at $\mathbf{z}_*$ grows with residual correlation and shrinks with residual uncertainty at the new point; a fully redundant point ($\mathbf{R}=\mathbf{0}$) contributes nothing. For $n=1$ the formula collapses to the scalar update $\Phi^{(q+1)}-\Phi^{(q)} = r^2/\delta \ge 0$, the standard GP posterior-variance recursion.

\subsection*{Part (i), Route A: nested Schur complements (brief)}

The posterior $\hat{\boldsymbol{\Sigma}}_w^{(q)}(\mathbf{z}_*)$ is the Schur complement of the data block in the joint covariance
\begin{equation}
\mathbf{G}^{(q)} = \begin{bmatrix}\boldsymbol{\Sigma}_{**} & \boldsymbol{\Sigma}_{*\mathcal{D}}\\ \boldsymbol{\Sigma}_{\mathcal{D}*} & \boldsymbol{\Sigma}_{\mathcal{D}\mathcal{D}}+\sigma_n^2\mathbf{I}\end{bmatrix} \succ \mathbf{0},
\end{equation}
and $\hat{\boldsymbol{\Sigma}}_w^{(q+1)}(\mathbf{z}_*)$ is the Schur complement of the enlarged data block in the bordered matrix $\mathbf{G}^{(q+1)}$ that appends the row/column of $(\mathbf{z}^{\mathrm{new}},\mathbf{w}^{\mathrm{new}})$. Conditioning a Gaussian vector on a superset of variables cannot increase its conditional covariance: the Schur complement with respect to a larger principal block is L\"owner-dominated by the one with respect to the smaller block. This nested-Schur-complement monotonicity is precisely the identity computed in Route~C, whose increment $\mathbf{R}\boldsymbol{\Delta}^{-1}\mathbf{R}^\top \succeq \mathbf{0}$ quantifies the gap; Route~A is its coordinate-free statement.

\subsection*{Part (i), Route B: information form (brief)}

Let $\boldsymbol{\Lambda}^{(q)}(\mathbf{z}_*) := [\hat{\boldsymbol{\Sigma}}_w^{(q)}(\mathbf{z}_*)]^{-1}$. Incorporating one observation updates the information matrix as $\boldsymbol{\Lambda}^{(q+1)} = \boldsymbol{\Lambda}^{(q)} + \mathbf{H}^\top\boldsymbol{\Gamma}^{-1}\mathbf{H}$, the multi-output information-filter recursion, where $\mathbf{H}$ encodes the GP cross-covariance with the new point and $\boldsymbol{\Gamma} \succ \mathbf{0}$ is the innovation covariance; the added term is PSD since $\mathbf{v}^\top\mathbf{H}^\top\boldsymbol{\Gamma}^{-1}\mathbf{H}\mathbf{v} = \|\boldsymbol{\Gamma}^{-1/2}\mathbf{H}\mathbf{v}\|_2^2 \ge 0$. Hence $\boldsymbol{\Lambda}^{(q+1)} \succeq \boldsymbol{\Lambda}^{(q)}$. Inversion reverses the L\"owner order on $\mathbb{S}^n_{++}$: if $\mathbf{A} \succeq \mathbf{B} \succ \mathbf{0}$ then $\mathbf{B}^{-1/2}\mathbf{A}\mathbf{B}^{-1/2} \succeq \mathbf{I}$, so its inverse $\mathbf{B}^{1/2}\mathbf{A}^{-1}\mathbf{B}^{1/2} \preceq \mathbf{I}$, i.e.\ $\mathbf{A}^{-1} \preceq \mathbf{B}^{-1}$. Applying this to the information growth gives $\hat{\boldsymbol{\Sigma}}_w^{(q+1)} \preceq \hat{\boldsymbol{\Sigma}}_w^{(q)}$.

\subsection*{Part (ii): anisotropy field shrinkage}

\emph{Step 2a.} $\mathbf{P}^{(q)}(\mathbf{z}) = c_{n,\alpha}[\hat{\boldsymbol{\Sigma}}_w^{(q)}(\mathbf{z})]^{1/2}$ with scalar $c_{n,\alpha} > 0$.

\emph{Step 2b.} The square root $t \mapsto t^{1/2}$ is operator monotone on $\mathbb{S}^n_+$ by the L\"owner-Heinz theorem~\cite[Sec.~5.3]{bhatiaPositiveDefiniteMatrices2015}: $\mathbf{A} \preceq \mathbf{B} \Rightarrow \mathbf{A}^{1/2} \preceq \mathbf{B}^{1/2}$. (Operator monotonicity fails for exponents above one; $\mathbf{A} \preceq \mathbf{B}$ does not imply $\mathbf{A}^2 \preceq \mathbf{B}^2$, which is why this step is where the structure of~\eqref{eq:P_def} is used.)

\emph{Steps 2c-2d.} Applying Part~(i) and then operator monotonicity, $[\hat{\boldsymbol{\Sigma}}_w^{(q+1)}]^{1/2} \preceq [\hat{\boldsymbol{\Sigma}}_w^{(q)}]^{1/2}$; multiplying by $c_{n,\alpha} > 0$ preserves the order, giving $\mathbf{P}^{(q+1)}(\mathbf{z}) \preceq \mathbf{P}^{(q)}(\mathbf{z})$.

\subsection*{Part (iii): tube contraction}

Part~(i) gives $\hat{\boldsymbol{\Sigma}}_w^{(q+1)}(\mathbf{z}) \preceq \hat{\boldsymbol{\Sigma}}_w^{(q)}(\mathbf{z})$ for all $\mathbf{z}$. Proposition~\ref{prop:field_ordering} with $\boldsymbol{\Sigma}_i = \hat{\boldsymbol{\Sigma}}_w^{(i)}$ and $c = c_{n,\alpha}$ yields the pointwise inclusion $\boldsymbol{\Omega}^{(q+1)}(\mathbf{z}) = \mathbf{P}^{(q+1)}(\mathbf{z})\bar{\boldsymbol{\Omega}} \subseteq \mathbf{P}^{(q)}(\mathbf{z})\bar{\boldsymbol{\Omega}} = \boldsymbol{\Omega}^{(q)}(\mathbf{z})$. Each ellipsoidal cross-section nests inside its predecessor, with directions of high data density shrinking fastest. \hfill$\blacksquare$

\section{Proof of Proposition~\ref{prop:preservation} (Preservation)}
\label{app:preservation}

\begin{proof}
\emph{Step 1.} For this proof the uniform bounds are taken over the spectral-bound parameters of Definition~\ref{def:spectral_bounds}: $p_{\max}^{(q)} := \max_{i\in\mathcal{V}} p_i^+$ and $p_{\min}^{(q)} := \min_{i\in\mathcal{V}} p_i^-$, with $p_i^+ = \lambda_{\max}(\mathbf{P}_i)+\delta_i$ and $p_j^- = \lambda_{\min}(\mathbf{P}_j)-\delta_j$. These node bounds already absorb the Lipschitz deviation, so they dominate the continuous values $\sup_\mathbf{z}\|\mathbf{P}^{(q)}(\mathbf{z})\|_2$ and $\inf_\mathbf{z}\lambda_{\min}(\mathbf{P}^{(q)}(\mathbf{z}))$.

\emph{Steps 2-3.} With the Step~1 definitions, $p_i^+ \le p_{\max}^{(q)}$ and $p_j^- \ge p_{\min}^{(q)}$ hold directly, since $p_{\max}^{(q)}$ and $p_{\min}^{(q)}$ are the extreme node bounds. Lemma~\ref{lem:eigenvalue_bounds} guarantees these node bounds enclose $\lambda_{\max}(\mathbf{P}(\mathbf{z}))$ and $\lambda_{\min}(\mathbf{P}(\mathbf{z}))$ throughout each region $\mathcal{R}_i$.

\emph{Step 4.} For any arc $(i,j) \in \mathcal{A}$,
\begin{equation}
\frac{p_i^+}{p_j^-}\Bigl(\gamma_{\mathrm{cl}}+\frac{1}{r}\Bigr) \le \frac{p_{\max}^{(q)}}{p_{\min}^{(q)}}\Bigl(\gamma_{\mathrm{cl}}+\frac{1}{r}\Bigr) \le 1,
\end{equation}
the last inequality being the hypothesis.

\emph{Step 5.} The arc-wise condition~\eqref{eq:spectral_invariance_condition} therefore holds on every arc. The uniform condition is sufficient, not necessary: individual arcs may certify even when the global ratio violates the uniform bound, which is exactly the slack the graph-based test exploits.
\end{proof}

\end{document}